\pgfplotsset{compat=newest}
\theoremstyle{plain}
\newtheorem{theorem}{Theorem}
\newtheorem{lemma}{Lemma}
\newtheorem{claim}{Claim}
\newtheorem{proposition}{Proposition}
\newtheorem{corollary}{Corollary}
\theoremstyle{plain}
\newtheorem{definition}{Definition}
\newtheorem{example}{Example}
\newtheorem{assumption}{Assumption} 
\theoremstyle{plain}
\DeclareMathOperator*{\argmax}{arg\,max}
\newcommand{\prob}[2][]{\text{\bf Pr}\ifthenelse{\not\equal{}{#1}}{_{#1}}{}\!\left[{\def\givenn{\middle|}#2}\right]}
\newcommand{\expect}[2][]{\mathbb{E}\ifthenelse{\not\equal{}{#1}}{_{#1}}{}\!\left[{\def\givenn{\middle|}#2}\right]}
\newcommand{\rbr}[1]{\left(\,#1\,\right)}
\newcommand{\sbr}[1]{\left[\,#1\,\right]}
\newcommand{\wel}{{\rm Wel}}
\newcommand{\dd}{\,{\rm d}}
\newcommand{\virtual}{\varphi}
\newcommand{\ironed}{\bar{\virtual}}
\newcommand{\opt}{{\rm OPT}}
\newcommand{\vwel}{{\rm VWel}}
\newcommand{\apx}{{\rm APX}}
\newcommand{\alloc}{x}
\newcommand{\util}{u}
\DeclareMathOperator\erf{erf}
\title{
Approximate Optimality of Linear Contracts\\Under Uncertainty%
\thanks{
A one-page abstract of this paper was accepted at EC 2023 under the title ``Bayesian Analysis of Linear Contracts". 
This work received funding from the European Research Council (ERC) under the European Union's Horizon 2020 research and innovation program (grant agreement No.~101077862), the Israel Science Foundation (Grant No.~336/18), and a Google Research Scholar Award. Part of the work of Y.~Li was done while he was a PhD candidate at Northwestern University under the support of NSF Grant SES-1947021, and a Postdoc at Yale under the support of Sloan Research Fellowship FG-2019-12378. 
We wish to thank Gabriel Carroll, Danil Dmitriev, Mira Frick, Ryota Iijima, Jonathan A. Libgober, Ilya Segal, Zhuoran Yang, anonymous reviewers of an earlier version of this work, and audiences at the ACM Conference on Economics and Computation and the Stony Brook Game Theory Conference for their feedback.
}}
\author{Tal Alon\thanks{Department of Computer Science, The Technion -- Israel Institute of Technology, Haifa, Israel. Email: \texttt{alontal@campus.technion.ac.il}} \and 
Paul D\"utting\thanks{Google Research, Zurich, Switzerland. Email: \texttt{duetting@google.com}} \and 
Yingkai Li\thanks{Department of Economics, National University of Singapore, Singapore.
Email: \texttt{yk.li@nus.edu.sg}} \and 
Inbal Talgam-Cohen\footnote{Department of Computer Science, The Technion -- Israel Institute of Technology, Haifa, Israel. Email: \texttt{inbaltalgam@gmail.com}}}
\date{}
\begin{document}

\maketitle
\begin{abstract}
We consider a hidden-action principal-agent model, in which actions require different amounts of effort, and the agent privately knows his ability that determines his cost of effort. We show that linear contracts admit approximation guarantees that improve with a natural metric that captures the degree of uncertainty in the contracting setting. We thus show that linear contracts are near-optimal whenever there is enough uncertainty. In contrast, other simple contract formats such as debt contracts may suffer from a loss linear in the number of possible actions, even when there is sufficient uncertainty. 
\end{abstract}
\textbf{Keywords}: moral hazard, screening, linear contracts, approximation, thin-tail parameterization

\section{Introduction}
\label{sec:intro}



Contract design is a pillar of modern economic theory, serving as the cornerstone for establishing incentives between parties with moral hazard concerns. 
In the fundamental model of contract theory, the hidden-action principal-agent model \citep{Holmstrom79,GrossmanHart83}, the principal seeks to incentivize an agent to take a costly action. 
The principal does not directly observe the action taken by the agent, but different actions result in distinct reward distributions for the principal. This allows the principal to influence the agent's choice of action by designing a contract that specifies payment to the agent contingent on rewards.
The classic theory focuses on the design of optimal contracts that achieve the highest possible revenue (as given by the principal's expected reward minus payment to the agent under the action chosen by the agent). This requires sophisticated incentive structures tailored to specific contexts. 
However, despite their theoretical appeal, the practical implementation of these optimal contracts can be challenging, as it involves navigating trade-offs, conducting extensive analysis, allocating substantial computational resources, and communicating complex terms to agents.

In contrast, linear contracts, which involve a fixed share of output as payment to the agent, play a crucial role in practice due to their simplicity and versatility. 
They are widely adopted across different economic and business settings, offering straightforward terms that are easier to understand, communicate, and implement.
\citet{carol-robust-linear} shows that linear contracts are robustly optimal when the principal is uncertain about the agent's additional actions and aims to maximize worst-case performance. Similar results have also been obtained under alternative models of ambiguity. 
This robust justification helps explain the prevalence of linear contracts in practice. 
However, as initially conjectured by \citet{milgrom-holmstrom87}, such justification is unlikely to hold in traditional Bayesian models. 
Indeed, in the analysis of optimal contracts, linear contracts are shown to deliver suboptimal performance in various applications under 
Bayesian models \citep[e.g.,][]{dutting2020simple, CastiglioniMG22, guruganesh20}.


We analyze the performance of linear contracts in Bayesian settings, by employing the theory of approximation, which complements the theory of optimality and provides justifications for various mechanisms commonly observed in practice
\citep[e.g.,][]{hartline2012approximation,roughgarden2019approximately}. 
We apply this approach to a natural single-dimensional private types model. In this model, a principal interacts with an agent to select one of $n$ costly actions. 
Each action is associated with a specific effort requirement and a stochastic output distribution that determines the reward of the principal. 
The agent has a private type, drawn from publicly-known distribution, that represents his ability, and his cost of effort is lower for all actions if he has a higher ability.
The principal's goal is to design a (possibly randomized) menu of contracts that maximizes her expected revenue.

As shown by \citet{dutting2020simple}, linear contracts are not approximately optimal in the degenerate case of our model where the agent's type distribution is a point mass, i.e., there is no uncertainty regarding the agent's type.\footnote{Approximation guarantees for linear contracts are also given in \citep{BalamcedaEtAl16}, but for the objective of welfare rather than revenue.} 
Specifically, the approximation factor of linear contracts deteriorates linearly with the number of possible actions in the environment. However, in this paper, we reveal that this is a critical boundary case, and linear contracts become approximately optimal when there is sufficient uncertainty and the setting is therefore \emph{not}  ``pointmass-like.''
We do so by demonstrating that linear contracts admit approximation guarantees that improve with a natural metric of uncertainty.


Our definition of uncertainty (\cref{def:small tail value}) encompasses the entire principal-agent instance, rather than solely relying on the agent's type distribution. Specifically, we consider there to be sufficient uncertainty in the setting if there exists a quantile $q$ such that the optimal welfare obtained from types within a small neighborhood around the low-cost types,
with a measure of $q$, is bounded away from the optimal welfare obtained from all types.\footnote{Similar definitions for bounding the contributions of tail events can be found in the literature of sample complexity \citep{devanur2016sample}.} At first glance, this definition may appear unusual since one might argue that welfare should not be a factor in quantifying type uncertainty. However, we show that this dependence on the principal-agent instance is necessary and well justified. 

To illustrate this point, let us consider a contracting setting 
where the agent's type is drawn from a uniform distribution in the range $[1, 100]$ where higher types correspond to lower abilities. In this case, there is substantial uncertainty about the unknown type in the traditional sense, characterized by high entropy and large variance. 
However, if, for a small constant $\epsilon > 0$, the contract instance is such that only agents with types between $[1, 1+\epsilon]$ are incentivized to choose costly actions even when the full reward from the stochastic output is given to the agent, 
the contract design instance can effectively be reduced to the case where the principal knows that the agent's true type lies within $[1, 1+\epsilon]$. 
For this reason, 
the underlying uncertainty in this scenario is actually small, and the revenue gap between linear contracts and optimal contracts can be large.\footnote{This point is stated more formally in \cref{exp:thin_tail} of \cref{sec:linear}.} 

The proposed definition of uncertainty takes into account such corner cases, and we show that linear contracts are approximately optimal with approximation factors that depend on the level of uncertainty in the instance. This result is formally defined and quantified in \cref{thm:universal}. 
Importantly, under the condition of sufficient uncertainty, the revenue generated by linear contracts is comparable to the optimal welfare, which serves as an upper bound for the optimal revenue.

We provide two refined bounds on the approximations of optimal linear contracts by incorporating additional structures on the type distribution or the marginal cost function. Firstly, if the type distribution satisfies a slowly-increasing condition (\cref{def:slow-inc-dist}), meaning that the increase in the cumulative distribution function is slow, linear contracts can achieve better approximations to the optimal welfare. The intuition behind this is that the revenue generated by low-cost types in linear contracts can approximately cover the welfare contribution from high-cost types, and the welfare contribution from a small range of low-cost types is not significant compared to the optimal welfare due to the presence of sufficient uncertainty in the instance.

Secondly, all these results and conditions for the cost space can be naturally extended to the virtual cost space, with some adjustments, assuming additionally that the cost function is linear. The virtual cost, as interpreted by \citet{bulow1989simple}, represents the marginal cost to the agent of exerting additional effort. 
Our condition on the virtual cost space (\cref{def:lin-bounded-dist}) can be understood as the property that the marginal cost function has a bounded range and is sandwiched between two linear functions. %
We show that in this case linear contracts achieve approximate optimality in comparison to the optimal revenue rather than the optimal welfare.
This result enables us to achieve improved approximation factors for a broad class of instances, as the optimal revenue benchmark is less stringent than the optimal welfare benchmark.

We use our general theorems to show that linear contracts achieve constant-factor approximations to the optimal welfare and/or revenue in a wide range of settings (see Sections~\ref{sub:slow-inc} and \ref{sub:lin-bound}). 
Additionally, we show (in Theorem~\ref{thm:upper-bound-n}) that even without the assumption of sufficient uncertainty, the approximation factor of linear contracts to the optimal revenue scales at most linearly with the number of actions.
Therefore, when the number of available actions is small, linear contracts are approximately optimal.
Moreover, since linear contracts do not screen the agent even when there are private types, 
our approximation result implies that the value of screening is limited in those settings.

We complement our positive results for linear contracts by showing that for other formats of simple contracts such as debt contracts and single-outcome payment contracts, even when there is sufficient uncertainty regarding the agent's ability, the worst-case approximations of those simple contracts can be linear in the number of possible actions (\cref{prop:debt,prop:single}). 

Finally, the interpretation of our approximation results is not to take the exact approximation factors too literally.
Instead, the economics of these approximation results are derived through relative comparisons. In particular, by comparing the worst case approximation of linear contracts in environments with and without sufficient uncertainty, where the former allows constant approximations while the latter has approximations degrades linear in the number of actions, we show that sufficient uncertainty is the main driving force for the approximate optimality of linear contracts. In addition, by comparing the worst case approximation of linear contracts with other simple contracts such as debt contracts, we show that linear contracts significantly outperform other simple contracts in the worst case when there is sufficient uncertainty.


\subsection{Related Work}\label{sec:related-work} 

A classic reference for justifying linear contracts is \citep{milgrom-holmstrom87}, which establishes the optimality of linear contracts in a dynamic setting. 
In another pioneering work, \citet{Diamond98} shows that in a setting where the agent can either exert effort or not, and can freely choose the distribution over outcomes subject to a moment constraint, a linear contract for inducing effort is no more expensive than any other contract.
A number of recent studies have established the robust (max-min) optimality of linear contracts in non-Bayesian models \citep{carol-robust-linear,dutting2020simple,YuK20,DaiT22,WaltonC22}.
\citet{kambhampati2023randomization} show that in some cases the max-min performance can be improved by using randomized contracts, which alleviate the principal's ambiguity aversion. 
\citet{kambhampati2025randomization} show that robustly optimal randomized contracts are again linear.
In our paper, we consider a Bayesian model where linear contracts can be suboptimal. However, we show that a deterministic linear contract is approximately optimal even compared to the optimal welfare 
under the sufficient uncertainty condition. 
We thus provide an alternative justification for linear contracts: they are near-optimal provided there is sufficient Bayesian uncertainty.

Our work also contributes to the broad literature on problems that combine moral hazard with adverse selection \citep[see, e.g.,][]{Myerson82}. Two closely related papers are \citet{gottlieb2022simple} and \citet{castro2024disentangling}. \citeauthor{gottlieb2022simple} provide conditions under which it is optimal for the principal to offer a single debt contract. \citeauthor{castro2024disentangling}~examine when moral hazard and private types can be decoupled. So both these works identify conditions under which the optimal menu of contracts 
is simple. In contrast, we provide conditions under which a particularly relevant class of simple contracts, namely linear contracts, achieve near-optimal (rather than optimal) performance.

The worst-case approximation approach that we take here is gaining traction in economic theory  \citep[e.g.][]{hartline2015non,akbarpouralgorithmic}. 
This analysis framework has been adopted for the objective of maximizing the principal's revenue
in pure-moral hazard settings by \citet{dutting2020simple}, and subsequently by \citet{CastiglioniMG22} and \citet{guruganesh20} in Bayesian settings. 
\citet{dutting2020simple} give worst-case approximation guarantees in the natural parameters of the model, and in particular show that the worst-case gap between linear and optimal scales linearly in the number of actions.
\citet{CastiglioniMG22} study a multi-dimensional private type setting where both the cost of effort and the outcome distribution associated with each action are private information of the agent. 
They show that compared to optimal contracts, linear contracts may suffer from a multiplicative loss that is at least linear in the number of possible types, even in binary action settings. 
\citet{guruganesh20} consider a setting where only the outcome distribution is private and all types share the same costs, and show 
that the gap between linear contracts and optimal welfare 
scales at least linearly in the number of actions, and logarithmically in the number of types.
In contrast, in our model, the only uncertainty lies in the agent's skill level, which determines their cost per unit-of-effort. 
We show that with sufficient uncertainty, linear contracts are approximately optimal. Furthermore, the approximation factor is at most linear in the number of actions, independently of the number of types, even without such assumptions.

There is also a recent advancement in the computer science literature for computing  optimal contracts in the presence of adverse selection. 
In the case of single-dimensional private types, as considered in our paper, 
\citet{AlonDT21} show that the optimal deterministic menu of contracts can be computed in polynomial time for constant number of actions, although there may exist randomized menus that outperform it.
In the case of multi-dimensional private types, 
\citet{guruganesh20} show that computing the optimal deterministic menu of contracts is NP-hard, even for a constant number of actions,
while \citet{castiglioni2022designing} show that the computation of the optimal randomized menu contract can be reduced to a linear program and solved efficiently. 
Our paper takes an approach that is orthogonal to this line of work, by providing approximation guarantees for linear contracts.
Follow-up work by \citet{castiglioni2025reduction} 
gives polynomial-time algorithms for converting any approximately optimal contract in single-dimensional settings, as in our paper, to multi-dimensional settings where the agent also has private information about outcome distributions, while preserving the approximation guarantee.

Finally, there is a growing body of work that explores combinatorial contracts through a computational lens \citep[e.g.,][]{Babaioff2006combinatorial,DuttingEFK21,DuttingRT21,DuettingEFK23,DuttingEFK25}.

\section{Preliminaries} 
\label{sec:prelim}

Throughout, let $[n]=\{0,1,...,n\}$ 
(zero is included).\footnote{Our main results can be easily extended to environments with a continuum of actions. } 
We consider a single principal interacting with a single agent whose ability is unknown to the principal. 

\subsection{A Principal-Agent Model}
In a principal-agent instance, there is an \emph{action} set~$[n]$ from which the agent chooses an action. Each action $i\in[n]$ has an associated distribution $F_{i}$ over an \emph{outcome} set $[m]$. The $j$th outcome is identified with its \emph{reward} $r_j\geq 0$ to the principal. 
W.l.o.g.~we assume increasing rewards $r_0 \le \ldots \leq r_m$. We denote the probability for reward~$r_j$ given action $i$ by $F_{i,j}$, and the expected reward of action $i$ by $R_i = \sum_j F_{i,j} \cdot r_j$. Each action~$i$ requires $\gamma_i\geq 0$ \emph{units-of-effort} from the agent. We assume w.l.o.g.~that actions are ordered by the amount of effort they require, i.e., $\gamma_0 < \ldots < \gamma_n$. 
Moreover, we assume that an action which requires more units-of-effort has a strictly higher expected reward, i.e., every two actions $i<i'$ have distinct expected rewards $R_i < R_{i'}$ (the meaning of this assumption is that there are no ``dominated'' actions, which require more effort for less expected reward). 

We assume that action $i=0$, referred to as the \emph{null action}, requires no effort from the agent ($\gamma_0=0$). This action deterministically and uniquely yields the first outcome, referred to as the \emph{null outcome}, i.e., $F_{0,0} = 1$ and $F_{i,0} = 0$ for any~$i \geq  1$. This models the agent’s opportunity to opt-out of the contract if individual rationality (the guarantee of non-negative utility) is not satisfied.

\paragraph{Types and utilities.} 
The agent has a privately ability \emph{type $c \geq 0$}, also known as his \emph{cost}, as this parameter captures the agent's cost per unit-of-effort. 
We assume that the private type is drawn from a publicly-known distribution~$G$, with density $g$ and support $C=[\underline{c},\bar{c}]$ for $0\leq \underline{c}<\bar{c}\leq \infty$. 
The cost of exerting effort $\gamma$ for type $c$ is denoted as $\zeta(c, \gamma)$.\footnote{This is a generalization of the single-dimensional type model considered in \citet{AlonDT21} where $\zeta(c, \gamma) = c\cdot \gamma$. Their cost function satisfies \cref{asp:cost_structure} in our model.}

\begin{assumption}\label{asp:cost_structure}
The cost function $\zeta$ satisfies that $\zeta(0, \gamma)=0$ and $\zeta(c, \gamma)>0$ for all $\gamma> 0$ and $c > 0$.
Moreover, given any pair of costs $0 < c < c'$, 
\begin{enumerate}
    \item $\zeta(c,\gamma)$ is strictly increasing in $\gamma$ and $c$;
    \item $\frac{\zeta(c,\gamma)}{c}$ is weakly increasing in $c$ for any $\gamma> 0$;
    \item $\frac{\zeta(c',\gamma)}{\zeta(c,\gamma)}$ is weakly increasing in $\gamma$, i.e., $\zeta$ is log-supermodular.
\end{enumerate}
\end{assumption}

In \cref{asp:cost_structure}, the first condition is very natural. It implies that the cost of the agent is always higher if the effort level is higher or the cost type of the agent is higher (higher cost type corresponds to lower ability of the agent). 
The second condition is almost a tautology given that $\zeta(c,\gamma)$ is strictly increasing in $c$. 
This is because the actual cardinal value of type $c$ has no practical meaning at this point, and we can re-scale it to ensure that the second condition holds. We choose this particular scaling to simplify the expositions in later sections. 

The last condition in \cref{asp:cost_structure} is more salient. 
It is equivalent to the condition that for any $0 < c < c'$ and $\gamma < \gamma'$,
\begin{align}\label{eq:equiv_ineq}
\frac{\zeta(c,\gamma')}{\zeta(c,\gamma)}
\leq \frac{\zeta(c',\gamma')}{\zeta(c',\gamma)}.
\end{align}
It implies that low ability (high cost) agents will find it more costly (measured in terms multiplicative increase in costs) to choose higher effort actions compared to high ability (low cost) agents. 


We assume risk-neutrality and quasi-linear utilities for both parties as in \citet{carol-robust-linear}. 
Given payment $t$ from the principal to the agent,
the utilities of the agent and the principal for choosing action~$i$ with a realization of reward $r_j$ are 
\begin{align*}
u_A = t-\zeta(c,\gamma_i)
\quad\text{and}\quad
u_P = r_j-t
\end{align*}
respectively. 
We also refer to the principal's utility as her \emph{revenue}.

\subsection{Contracts}\label{sec:contracts}
We focus on deterministic contracts to simplify the exposition.\footnote{It is known from \citet{AlonDT21} and \citet{castiglioni2022designing} that considering \emph{randomized} allocation rules (together with randomized payments) can result in contracts that increase the principal's revenue. Our characterization of implementable contracts and our approximation guarantees
all hold for randomized contracts.}
A \emph{contract} $(x,t)$ is composed of an \emph{allocation rule} $x:C\to [n]$ that maps the agent's type to a recommended action,
and a \emph{payment (transfer) rule} $t: C\to \mathbb{R}_+^{m+1}$
that maps the agent's type to a profile of $m+1$ payments (for the $m+1$ different outcomes). 
We use $t^{\hat{c}}_j$ to denote the payment for outcome~$j$ given type report~${\hat{c}}$. 
We consider contracts where all transfers are required to be non-negative, i.e., $t^{\hat{c}}_j\geq 0$ for all $\hat{c}$ and~$j$. 
This guarantees the standard \emph{limited liability} property for the agent, who is never required to pay out-of-pocket \citep{innes1990limited,carol-robust-linear}.

The timeline of a contract $(x,t)$ is as follows:
\begin{itemize}
\item The principal commits to a contract $(x,t)$.
\item The agent with type $c$ submits a report $\hat{c}$, and receives an action recommendation~$x(\hat{c})$ and a payment profile $t^{\hat{c}}=(t^{\hat{c}}_0,...,t^{\hat{c}}_m)$. 
\item The agent privately chooses an action $i$, which need not be the recommended action~$x(\hat{c})$. 
\item An outcome $j$ is realized according to distribution $F_i$. 
The principal is rewarded with~$r_j$ and pays a price of $t^{\hat{c}}_j$ to the agent. 
\end{itemize}

A special class of contracts that we are interested in this paper is linear contracts.
Note that even when the agent has private types, screening is not possible under linear contracts because all types of the agent will prefer the largest possible share. 
\begin{definition}
A contract $(x,t)$ is \emph{linear with parameter $\alpha\geq 0$} if $t^c_j=\alpha r_j$ for any $c\in C$ and $j \in [m].$
\end{definition}


\paragraph{Incentives.} 
Let~$c$ be the true type, $\hat{c}$ be the reported type, $i$ be the action chosen by the agent, and $j$ be the realized outcome. 
Denote by $T^{\hat{c}}_i=\sum_{j\in [m]} F_{i,j} t^{\hat{c}}_j$ the expected payment for action $i$. 
In expectation over the random outcome $j\sim F_i$, 
the expected utilities of the agent and the principal are
\begin{align*}
T^{\hat{c}}_i-\zeta(c,\gamma_i)
\quad\text{and}\quad
R_i-T^{\hat{c}}_i
\end{align*}
respectively.
Notice that the sum of the players' expected utilities is the expected \emph{welfare} from action $i$ and type $c$, namely $R_i-\zeta(c,\gamma_i)$.

For any agent with true type $c$, 
when he faces a payment profile $t^{\hat{c}}$, he will choose an action 
\begin{equation}
i^*(t^{\hat{c}},c) \in \arg\max_i\, T^{\hat{c}}_i-\zeta(c,\gamma_i).\label{eq:i-star}
\end{equation}
As is standard in the contract design literature, if there are several actions with the same maximum expected utility for the agent, we assume consistent tie-breaking in favor of the principal \citep[see, e.g.,][]{guruganesh20}.
Thus $i^*(t^{\hat{c}},c)$ is well-defined.
Therefore, when reporting his type, the agent will report $\hat{c}$ that maximizes his expected utility given his anticipated choice of action, i.e., $T^{\hat{c}}_{i^*(t^{\hat{c}},c)}-\zeta(c,\gamma_{i^*(t^{\hat{c}},c)})$. 

\begin{definition}
A contract $(x,t)$ is \emph{incentive compatible (IC)} if for every type $c\in C$,
the expected utility of an agent with type $c$ is maximized by: 
\begin{enumerate}[label={(\roman*)}]
    \item following the recommended action, i.e., $x(c)=i^*(t^c,c)$;
    \item truthfully reporting his type, i.e., $c\in \arg\max_{\hat{c}\in C} \{T^{\hat{c}}_{i^*(t^{\hat{c}},c)}-\zeta(c,\gamma_{i^*(t^{\hat{c}},c)})\}$. 
\end{enumerate}
\end{definition}

\paragraph{The principal's objective.} The principal's goal (and our goal in this paper) is to design an \emph{optimal} IC contract~$(x,t)$, i.e., an incentive compatible contract that maximizes her expected revenue. 
The expectation is over the agent's random type $c$ drawn from $G$, as well as over the random outcome induced by the agent's prescribed action $x(c)$.
The principal maximizes $\mathbb{E}_{c\sim G}[R_{x(c)}-T^{c}_{x(c)}]$, or equivalently, $\mathbb{E}_{c \sim G,j \sim F_{x(c)}} [r_j - t^c_j]$, subject to the IC constraints. 
Thus, the optimal revenue is 
\begin{align*}
\opt = \max_{(x,t) \text{ is IC} }
\mathbb{E}_{c\sim G}[R_{x(c)}-T^{c}_{x(c)}].
\end{align*}
For any simple contract (e.g., linear contract) with expected revenue $\apx$, 
we say this contract is a $\beta$-approximation to the optimal revenue if $\beta\cdot\apx \geq \opt$. 


\section{Near-Optimality of Linear Contracts}
\label{sec:linear}


In this section, we establish our main results: approximation guarantees for arguably the simplest non-trivial class of contracts, namely linear contracts. 
We first identify a crucial condition, the thin-tail parameterization, 
which smoothly separates the knife-edge case of point-mass like settings, from typical settings with sufficient uncertainty on the agent's ability. 
In the former case, linear contracts can perform arbitrarily badly with respect to the optimal revenue \citep[c.f.,][]{dutting2020simple}, 
while in the latter case they can achieve a constant approximation. 
Let 
$$\wel_{[a,b]}\triangleq\int_{a}^{b} R_{i^*(r,c)}-\zeta(c,\gamma_{i^*(r,c)}) \dd G(c)$$
be the welfare contribution from types within $[a,b]$
where 
$$i^*(r,c) \in \argmax_{i} \expect[r\sim F_i]{r} - \zeta(c,\gamma_i)
=\argmax_{i} R_i - \zeta(c,\gamma_i)$$ 
is the optimal action that maximizes expected welfare for type $c$.

\begin{definition}[Thin-tail parameterization]\label{def:small tail value}
Let $\eta\in (0,1]$ and $\kappa\in [\underline{c},\bar{c}]$. 
A principal-agent instance with types supported on $[\underline{c},\,\overline{c}]$ has \emph{$(\kappa,\eta)$-thin-tail} if
\begin{eqnarray*}
\wel_{[\underline{c},\,\kappa]} \leq 
(1-\eta)\cdot\wel_{[\underline{c},\,\overline{c}]}.
\end{eqnarray*}
\end{definition}

Intuitively, the $(\kappa,\eta)$-thin-tail parameterization quantifies how much of the welfare is concentrated around the low cost types (i.e., high ability types).
The larger~$\kappa$ is, and the closer $\eta$ is to~1, the thinner the tail and the further the setting from point mass.%
\footnote{Similar assumptions are adopted in the literature of sample complexity for revenue-maximizing auctions \citep[e.g.,][]{devanur2016sample},
to exclude the situation that types from the tail event contribute a large fraction of the optimal objective value.
Although the assumptions share a similar format, the logic behind them is different. 
In the sample complexity literature, the thin-tail parameterization is adopted to ensure that the revenue from the small probability event is small, 
since it can only be captured with small probabilities given finite samples. 
In our model, we require the thin-tail parameterization to distinguish point-mass like settings from settings with sufficient uncertainty.
}
Note that although the thin-tail parameterization depends on the whole principal-agent instance, 
it is very easy to verify whether it can be satisfied given any instance since the computation of welfare contribution only requires the computation of welfare maximizing action $i^*(r,c)$ for each type $c$, rather than solving any optimization problem subject to incentive constraints.

\paragraph{Approximation guarantee.}
The thin-tail parameterization drives the following 
approximation guarantee for linear contracts:

\begin{theorem} 
\label{thm:universal}
Let $q,\alpha\in(0,1),\eta\in(0,1-q)$ and suppose the agent's cost for quantile~$q$ is~$c_q$. 
For any principal-agent instance with $(\frac{c_q}{\alpha},\eta)$-thin-tail,
a linear contract with parameter $\alpha$ provides expected revenue that is an $\frac{1}{(1-\alpha)\eta q}$-approximation of the optimal welfare.
\end{theorem}

Since the agent is not screened under linear contracts, 
\cref{thm:universal} also implies that 
the value of screening is small if the principal-agent setting has a thin tail.
Moreover, our approximation result is stated with respect to the benchmark of optimal welfare. Our result immediately translates into an approximation guarantee with respect to the optimal revenue since welfare is always an upper bound on revenue.

We note that while the approximation guarantee of \cref{thm:universal} is parametric (i.e., depends on the parameters of the setting in a fine-grained manner), we view this generality as a key strength of the result, as it captures the essential features driving good approximation. 
Moreover, compared to environments without sufficient uncertainty, where the worst-case approximation can be linear in the number of actions, our results lead to a significant improvement, especially when the agent has a rich set of choices, i.e., when the number of actions is large.\footnote{In the extreme case with a continuum of actions, the worst-case approximation is unbounded without the thin-tails parameterization.}

\begin{figure}[t]
\begin{center}
\begin{tikzpicture}[scale = 0.9]

\draw[scale=0.5,->] (-0.2,0) -- (14, 0);
\draw[scale=0.5,->] (0, -0.2) -- (0, 10);

\draw[dashed] (0,2.3) -- (2.2,2.3);
\draw[dashed] (2.2,0) -- (2.2,2.3);

\fill[pattern={north east lines},pattern color=gray]
    (0,0) rectangle +(2.2,2.3);

\begin{pgfonlayer}{bg}
  \path[fill=gray!10] (0,0) -- (0,2.3) -- (2.2,2.3) -- (2.2,0) -- cycle;
\end{pgfonlayer}               

\draw [thick] plot [smooth, tension=0.5] coordinates {(0, 3.5) (2, 2.5) (4, 0.3) (5.2, 0)};


\draw[scale=0.5] (12,0) -- (12,0.2);
\draw[scale=0.5] (12, -0.6) node {$1$};

\draw[scale=0.5] (0,8.5) -- (0.2,8.5);
\draw[scale=0.5] (-0.6,8.5) node {$1$};
\draw[scale=0.5] (0,-0.6) node {$0$};

\draw (7, -0.3) node {$q$};
\draw (-0.3, 5) node {$\eta$};

\end{tikzpicture}













\end{center}
\vspace*{-12pt}
\caption{For fixed $\alpha$, say $\alpha = 1/2$, we plot (black curve) the fraction $\eta$ of welfare from types above $c_q/\alpha = 2 c_q$ for $q \in [0,1]$. This curve is decreasing, and the intercept with the $y$-axis is typically smaller than $1$ because of the division of $c_q$ by $\alpha$. The best approximation guarantee that can be obtained with a given $\alpha$ is proportional to the area $\eta q$ of the largest box that fits under the curve (gray rectangle). 
The larger this area the further away the setting from being point mass-like, and the better the approximation.}
\label{fig:thin-tail}
\end{figure}
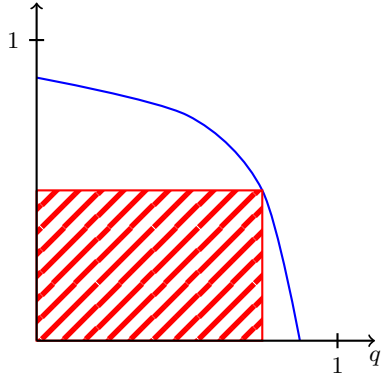

We provide a visual interpretation of the approximation guarantee from \cref{thm:universal} in \Cref{fig:thin-tail}.
Intuitively, the conditions in \cref{thm:universal} ensure that the welfare contribution from low types, i.e., types below $\frac{c_q}{\alpha}$, is sufficiently small compared to the optimal welfare. 
It is important to note that in order to guarantee constant approximations of linear contracts, it is sufficient to assume that the welfare is not concentrated around low cost (high ability) types. 
This is because high concentration of welfare around high cost (low ability) types cannot happen given any instance since for any set of high cost types with total measure $q<1$, 
the total welfare contribution from those types is at $q$ fraction of the total welfare. 


Our approximation guarantee in \cref{thm:universal} degrades smoothly as the tail in the principal-agent setting gets fatter. 
This implies that the relative performance of linear contracts improves when the uncertainty increases. 
We provide 
an alternative formalization
of this observation in \cref{appx:smoothed} via the smoothed analysis framework of~\citet{spielman2004smoothed}.


\paragraph{Proof outline.} The proof of \cref{thm:universal} utilizes a novel concept 
of being slowly-increasing (\cref{def:slow-inc-dist}), which can be applied to any cumulative distribution function and captures its rate of increase. 
In \cref{thm:slow} below we provide a parametric approximation factor for linear contracts
that depends on this rate of increase
as well as the parameters in the thin-tail condition.
\cref{thm:universal} follows immediately from \cref{thm:slow} by observing that any distribution is slowly-increasing for some parameters.

\subsection{Approximation for Slowly-Increasing Distributions}
\label{sub:slow-inc}

Any distribution $G$ is an increasing function, and 
\cref{def:slow-inc-dist} captures the rate of growth for function $G$.
Namely, if the cost increases by a factor of $\frac{1}{\alpha}$ from $\alpha c$ to $c$, 
then the CDF should increase by a multiplicative factor of at most $\frac{1}{\beta}$ (i.e., $G(c)\le \frac{1}{\beta} G(\alpha c)$). 
This condition becomes stronger the lower $\alpha$ is and the closer $\beta$ is to 1, and accordingly the approximation factor in the following theorem becomes better.
Note that we require the condition to hold only for types above~$\kappa$ for some parameter $\kappa\in[\underline{c},\bar{c}]$, 
that is, only for types that are not in the tail. 

\begin{definition}[Slowly-increasing distribution]\label{def:slow-inc-dist}
Let $\alpha,\beta\in (0,1]$ and $\kappa\in[\underline{c},\bar{c}]$.\footnote{For larger $\kappa$, it is easier for the slowly-increasing condition to be satisfied but harder for the thin-tail condition to be satisfied.} A distribution $G$ with support $[\underline{c},\,\overline{c}]$ is \emph{$(\alpha,\beta,\kappa)$-slowly-increasing} if
\begin{eqnarray*}
G(\alpha c) \geq \beta G(c) && \forall c\geq \kappa.
\end{eqnarray*}
\end{definition}

Intuitively, the slowly-increasing definition captures the spread of the type distribution. When the cumulative distribution function increases more slowly, the distribution is spread out over a wider range. 
Thus, this property serves as an additional measure on the uncertainty over the agent's ability. 
With this property, \cref{thm:slow} shows that linear contracts can generate revenue that is a constant approximation to the optimal welfare.


\begin{theorem}\label{thm:slow}
Suppose that \cref{asp:cost_structure} holds and fix $\alpha,\beta,\eta\in (0,1]$.
For any principal-agent instance, 
if there exists $\kappa\in[\frac{\underline{c}}{\alpha},\bar{c}]$ such that
the instance has $(\alpha,\beta,\kappa)$-slowly-increasing distribution of agent types
and has $(\kappa,\eta)$-thin-tail,
a linear contract with parameter $\alpha$ achieves expected revenue that is a $\frac{1}{(1-\alpha)\beta\eta}$-approximation to the optimal expected welfare.
\end{theorem}

The proof of \cref{thm:slow} is provided in \cref{apx:slow_increasing}. 
First, we prove \cref{thm:universal} by applying \cref{thm:slow}.
\begin{proof}[Proof of \cref{thm:universal}]
For any $q > 0$ and any $\alpha \in (0,1)$, 
let $\kappa$ be the constant such that $G(\alpha\kappa) = q$.
Note that by construction we have $\kappa \geq \frac{\underline{c}}{\alpha}$.
For any distribution $G$ and any cost $c\geq \kappa$, 
we have 
\begin{align*}
G(\alpha c) \geq G(\alpha\kappa) \geq G(\alpha\kappa)\cdot G(c)
\end{align*}
where the last inequality holds since $G(c)\leq 1$ for any $c$. 
Therefore, distribution $G$ satisfies 
$(\alpha, G(\alpha\kappa), \kappa)$-slow-increasing condition. 
Under the additional assumption of $(\kappa,\eta)$-thin-tail, 
by applying \cref{thm:slow}, 
the approximation ratio of linear contracts to optimal welfare is 
\begin{align*}
\frac{1}{\eta(1-\alpha) G(\alpha\kappa)} = \frac{1}{(1-\alpha) \eta q}
\end{align*}
and \cref{thm:universal} holds.
\end{proof}

Now we will provide the intuition behind the proof of \cref{thm:slow} and illustrate the role of both thin-tail parameterization and slowly-increasing condition in the analysis of constant approximations. 
From \citet{dutting2020simple}, we know that fixing any particular cost type $c$, linear contract cannot guarantee a constant approximation to the optimal revenue or optimal welfare for that type. The gap can in fact depend linearly on the number of possible actions in the setting. 
This is true even under our assumption of thin-tail and slowly-increasing distribution. 
However, in our model, there is uncertainty regarding the agent's ability, and the main idea of our proof is to show that for most types $c$, even though the revenue from type~$c$ can be small, there exists a lower cost type $c'<c$ such that the revenue contribution from type $c'$ under linear contracts approximately covers the welfare contribution from type~$c$. Thus in expectation, the revenue of the linear contract approximates optimal welfare. 

More specifically, we decompose the total welfare into two parts, i.e., 
\begin{align*}
\wel_{[\underline{c},\,\overline{c}]} = \wel_{[\underline{c},\,\kappa]} + \wel_{[\kappa,\,\overline{c}]}
\end{align*}
and show that the expected revenue from a linear contract with parameter $\alpha$ approximates each part separately.

We first provide the bound on $\wel_{[\kappa,\,\overline{c}]}$.
Our argument relies on the crucial observation that for any $\alpha\in [0,1]$ and any cost type $c\geq 0$, by offering a linear contract with parameter~$\alpha$, the chosen action of type $\alpha c$ is weakly higher than with the welfare optimal action for type $c$. 
This is formalized in the following lemma. 
\begin{lemma}\label{lem:covering_under_general_cost}
Under \cref{asp:cost_structure}, for any $c>0$ and any parameter $\alpha \in (0,1)$, 
\begin{align*}
i^*(\alpha r, \alpha c)\geq i^*(r,c).
\end{align*}
\end{lemma}
Intuitively, if the cost function were linear, i.e. $\zeta(c,\gamma)=c\gamma$, then the two actions would coincide, that is, $i^*(\alpha r, \alpha c)= i^*(r,c)$. 
However, as stated in \cref{asp:cost_structure}, it is more costly for higher cost types to choose higher effort actions. Therefore, the welfare-optimal action for type $c$ is weakly lower, i.e., $i^*(\alpha r, \alpha c)\geq i^*(r,c)$. 
The proof of \cref{lem:covering_under_general_cost} is provided in \cref{apx:slow_increasing}.

Note that in linear contract with parameter $\alpha$, the principal keep $1-\alpha$ fraction of the expected reward when type $\alpha c$ chooses action $i^*(\alpha r, \alpha c) \geq i^*(r, c)$. 
Therefore, the revenue from type $\alpha c$ is at least an $(1-\alpha)$-approximation to the optimal welfare from type~$c$ since the principal does not need to internalize the cost of the agent under linear contract. 
This argument applies for any $c\geq \kappa$ since the choice of $\kappa$ satisfies the requirement that $\kappa\geq \frac{\underline{c}}{\alpha}$, which implies that $\alpha c \geq \underline{c}$ is still in the support of the distribution. 
To compare the expected revenue and expected welfare, we apply the above argument for types above $\kappa$ and take the expectation. 
Note that in cases where the probability of $\alpha c$ is much smaller than $c$, we can also use the revenue contribution from cost types smaller than $\alpha c$ to cover the welfare contribution since the expected revenue from the agent given linear contracts is even higher if the cost type of the agent is lower. 
Our condition of slowly-increasing ensures that the total probability of lower cost types, i.e., total probability of types below $\alpha c$, is not far from the total probability of types below $c$. 
By carefully accounting for the expected revenue and welfare contributions, we show that the expected revenue from the linear contract provides a constant approximation to the welfare contribution of $\wel_{[\kappa,\,\overline{c}]}$.

For types around the lowest cost type $\underline{c}$, the above argument fails since there won't exist any lower types whose revenue can cover their welfare. However, the thin-tail parameterization implies that the welfare from those types is actually small compared to the total welfare, i.e., 
\begin{align*}
\wel_{[\underline{c},\,\kappa]} \leq 
(1-\eta)\cdot\wel_{[\underline{c},\,\overline{c}]}.
\end{align*}
Combining the arguments, we show that the expected revenue of linear contract with parameter $\alpha$ is a constant approximation to the optimal welfare of $\wel_{[\underline{c},\,\kappa]}$. 



\cref{thm:slow} allows us to obtain more refined approximation guarantees for broader classes of type distributions, 
e.g., distributions with non-increasing densities, 
which include uniform distributions and exponential distributions as special cases. 
The proof of \cref{cor:wel-implications} is derived by simple algebraic calculation, with details provided in \cref{appx:implications}.

\begin{corollary}
\label{cor:wel-implications}
For any $\underline{c}\geq 0$, a type distribution with \emph{non-increasing density} on $[\underline{c},\bar{c}]$ 
satisfies $(\frac{\kappa+1}{2\kappa},\frac{1}{2},\kappa\underline{c})$-slowly-increasing for any $\kappa > 1$.\footnote{Note that we can allow $\bar{c} = \infty$ and the domain of the density function is $[\underline{c},\infty$) in this case.}
Moreover, if the principal-agent setting has $(\kappa\underline{c},\eta)$-thin-tail, 
a linear contract achieves as revenue a $\frac{4\kappa}{\eta(\kappa-1)}$-approximation to the optimal welfare.
\end{corollary}

For example, when the principal-agent instance has $(3\underline{c},\frac{2}{3})$-thin-tail, 
i.e., the welfare contribution from types at most triple of the lowest type does not exceed one third of the optimal welfare, 
the revenue of the optimal linear contract is at least $\frac{1}{9}$ of the optimal welfare
if the type distribution has non-increasing density.
Moreover, in the special case of $\underline{c}=0$, 
$(0,1)$-thin-tail is satisfied for \emph{all} principal-agent instances. 
Therefore, a linear contract achieves a $4$-approximation to the optimal welfare
for distributions with non-increasing densities supported on $[0,\infty)$. 
The example with $\underline{c}=0$ is special because even if the types are concentrated around low values, the principal can incentivize the agent to exert high effort with almost zero payments, leading to revenue close to the optimal welfare.


\subsection{Additional Discussions}
\paragraph{Discussion of the thin-tail parameterization.} 
As noted earlier, our thin-tail parameterization is not only a property of the type distribution, but also of the principal-agent setting itself. We show here that a condition based solely on the distribution is insufficient for excluding point-mass-like behavior. 
This is because the instance can be such that it is too costly for any low ability type to choose any high effort action, which leads to almost negligible welfare contribution from those types. 
This makes the setting analogous in nature to a point-mass setting where the agent's type is $\underline{c}$. 
We illustrate this idea in a principal-agent setting when the type distribution is a uniform distribution. 

\begin{example}\label{exp:thin_tail}
Consider a principal-agent setting when the type distribution is uniform in $[1, 100]$. 
The cost function is linear, i.e., $\zeta(c,\gamma) = c\gamma$.
Suppose there is a continuum of actions $i\in[0,\infty)$ and each action $i$ requires effort $\gamma_i = \xi\cdot i^{\lambda}$ and generates expected reward $R_i=i$ where $\xi>0$ and $\lambda>1$.\footnote{Although our main model considers contract settings with finite number of actions, all the results extend for a continuum of actions, and we consider a continuum of actions here to simplify the computation.}
We do not explicitly specify the detailed distribution $F_i$ except for its expectation $R_i$ since it does not matter for either the optimal welfare or the revenue from linear contracts. 
\end{example}
In this example, the parameter $\lambda>1$ reflects the rate of increase for required units of effort for marginally increasing the expected output. 
It is easy to verify that the welfare optimal action for type $c$ is $i^*(r,c) = (\xi\lambda c)^{-\frac{1}{\lambda-1}}$ with expected welfare
\begin{align*}
\wel_c = \rbr{1-\frac{1}{\lambda}}\cdot(\xi \lambda c)^{-\frac{1}{\lambda-1}}.
\end{align*}
Note that both the welfare optimal action and the welfare contribution from type $c$ decreases dramatically as $c$ increases when $\lambda$ is close to $1$. 
That is, the constructed principal-agent instance has fat tail when $\lambda$ is close to $1$. 
For example, when $\lambda = 1.01$ and $\kappa = 2$, 
we have $\wel_{[1,2]} \geq 0.998\cdot\wel_{[1,100]}$, i.e., parameter $\eta \leq 0.2\%$ in this case.

Moreover, it is easy to compute that the optimal linear contract has parameter $\alpha = \frac{\lambda}{2\lambda-1}$. The multiplicative gap between the optimal welfare and the expected revenue from linear contract with parameter $\alpha = \frac{\lambda}{2\lambda-1}$ is 
$\frac{2\lambda-1}{\lambda-1}\cdot (\frac{\lambda}{2\lambda-1})^{-\frac{\lambda}{\lambda-1}}$, which can be unbound when $\lambda$ is sufficiently close to 1. 
For example, the expected revenue from linear contract is at most $0.4\%$ of the optimal welfare when $\lambda=1.01$. 
In \cref{apx:thin_tail}, we will provide a detailed construction such that the gap between the optimal revenue and the revenue of linear contracts is also large when the thin-tail parameterization is not satisfied. 

In contrast, when $\lambda$ is large, the principal-agent instance has thin tails. In this case, by simple calculation, we can show that the expected revenue from linear contract is at least $18.5\%$ of the optimal welfare if $\lambda\geq 3$
and at least $24\%$ of the optimal welfare if $\lambda\geq 20$.

\paragraph{Interpretation of approximation results.}

The approximation guarantees that come out of the thin-tail parameterization may appear large at first glance.
However, the benchmark here is the optimal welfare, which might be significantly higher than the optimal revenue, 
and hence the actual loss compared to the optimal revenue might be smaller. 
In fact, in \cref{sub:lin-bound} we provide an improved approximation ratio for distributions with non-increasing densities 
when compared to the optimal revenue
under a slightly different thin-tail parameterization.

Moreover, the interpretation of the approximation result is not to take the constant factor too literally \citep{hartline2012approximation,roughgarden2019approximately}.
Instead, we should focus on the relative comparison of the approximation factors in different environments.
In our model, the worst-case approximation ratio with degenerate pointmass type distributions degrades linearly with the number of actions available to the agent,
while the worst-case approximation ratio with sufficient uncertainty on agent's ability remains constant regardless of the number of actions.
This illustrates that sufficient uncertainty in agent's ability is the main factor driving the approximate optimality of linear contracts in our Bayesian model.

Another important comparison is between the approximation guarantees of different contract formats under the same assumption of sufficient uncertainty. As we illustrate later in \cref{sec:approx_other}, the 
approximation factors of other simple contracts, such as debt contracts, may degrade linearly with the number of actions available to the agent, even with sufficient uncertainty. This comparison highlights the robust performance of linear contract when there is sufficient uncertainty regarding the agent's ability. 
The main idea for this difference is that when the agent is offered with a linear contract, the revenue contribution from high ability agents approximately covers the welfare from low ability agents. 
Such covering argument fails for other contract formats such as debt contracts.

\section{Improved Approximation to Revenue Benchmark}
\label{sec:rev}

In this section, we refine the approximation guarantees of linear contracts by directly comparing them to the benchmark of optimal revenue (or, more precisely, a better proxy for it). 
For tractability, we assume throughout the remainder of the paper that the cost function is linear, specifically, $\zeta(c, \gamma) = c \cdot \gamma$.

Our argument builds on the standard notion of a virtual cost function $\virtual(c) = c + \frac{G(c)}{g(c)}$, and more generally the ironed version of it $\ironed(c)$. It further uses that the revenue achievable by any IC contract is upper bounded by the (ironed) virtual welfare. We provide a formal derivation of these facts along with additional observations in  Appendix~\ref{sub:primal}.

\subsection{Approximations for Linearly-Bounded Distributions}\label{sub:lin-bound}

Similar to our bounds for the welfare benchmark from the previous section, our approximation guarantees in this section are formulated in terms of a condition on the cost distribution and a suitable thin-tail parameterization.
Our distributional condition 
captures how well the virtual cost is approximated by a linear function. 
We show that ``the more linear'' the virtual cost, the better a linear contract approximates the optimal expected virtual welfare (and hence the optimal revenue). 
This is formalized in Definition~\ref{def:lin-bounded-dist}, where the ironed virtual cost of the distribution is sandwiched between two parameterized linear functions of the cost. 

\begin{definition}[Linearly-bounded distribution]\label{def:lin-bounded-dist}
Let $\beta \leq \alpha\in (0,1]$. A distribution $G$ with support $[\underline{c},\bar{c}]$ satisfies \emph{$(\alpha,\beta,\kappa)$-linear boundedness} if the corresponding ironed virtual value function satisfies
\begin{eqnarray*}
\frac{1}{\alpha}  c \leq  \ironed(c) \leq \frac{1}{\beta}  c   && \forall c \geq \kappa,\label{eq:lin-bounded-dist}
\end{eqnarray*}
where $\kappa \in [\underline{c},\bar{c}].$
\end{definition}

As pointed out by \citet{bulow1989simple}, the ironed virtual value function $\ironed(c)$ can be interpreted as the marginal cost function for incentivizing agent with type $c$ to choose the desirable action. 
Our assumption of linear boundedness holds if the marginal cost function has bounded rate of change. 

Our thin-tail parameterization is defined with respect to the virtual welfare rather than welfare.
Let $\vwel_{[a,b]}\triangleq\int_{a}^{b} R_{i^*(r,\ironed(c))}-\gamma_{i^*(r,\ironed(c))} \ironed(c) \dd G(c)$ be the expected virtual welfare from types within $[a,b]$. 

\begin{definition}[Thin-tail parameterization for virtual welfare]\label{def:small tail virtual}
A principal-agent instance with types supported on $[\underline{c},\,\overline{c}]$ has \emph{$(\kappa,\eta)$-thin-tail for virtual welfare} if
\begin{eqnarray*}
\vwel_{[\kappa,\,\overline{c}]} \geq \eta\cdot\vwel_{[\underline{c},\,\overline{c}]}.
\end{eqnarray*}
\end{definition}

The linearly-bounded condition together with the thin-tail parameterization enable the following parametric approximation guarantees.

\begin{theorem}\label{thm:lin-bounded-const-apx}
Fix $\alpha,\beta,\eta\in (0,1]$.
For any principal-agent instance and $\kappa$ such that
the instance has a $(\alpha,\beta,\kappa)$-linearly bounded distribution of agent types and a $(\kappa,\eta)$-thin-tail for virtual welfare, the linear contract with parameter $\alpha$ achieves expected revenue that is:
\begin{enumerate}
    \item a $\frac{1}{\eta (1-\alpha)}$-approximation to the optimal expected virtual welfare.\label{item:without-highest-cost}
    \item \label{item:with-highest-cost} a $\frac{1-\beta}{\eta (1-\alpha)}$-approximation to the optimal expected virtual welfare if $i^*(r,\ironed(\bar{c}))=0$.%
\end{enumerate}
\end{theorem}

The additional condition $i^*(r,\ironed(\bar{c}))=0$ in \cref{thm:lin-bounded-const-apx}.\ref{item:with-highest-cost}
assumes that the highest cost type is sufficiently large such that the principal cannot incentivize this type to exert costly effort. 
This further ensures that the range of support for the cost distribution is sufficiently large and hence the approximation guarantee is improved. 

We provide the detailed proof of \cref{thm:lin-bounded-const-apx}.\ref{item:without-highest-cost} to illustrate the main ideas, which is similar the proof of \cref{thm:slow} that uses the revenue from high ability agents to cover the welfare of low ability agents. The main difference is that now we apply the argument in the virtual value space and we wish to cover the virtual welfare of low ability agents.
The proof of \cref{thm:lin-bounded-const-apx}.\ref{item:with-highest-cost} with refined approximations is provided in \cref{apx:linear bound proof}.

\begin{proof}[Proof of \cref{thm:lin-bounded-const-apx}.\ref{item:without-highest-cost}]
Denote by $\vwel$ the optimal expected virtual welfare, and by $\apx$ the expected revenue of a linear contract with parameter $\alpha$; our goal is to show that $\apx \geq \eta(1-\alpha) \vwel$, which suffices for our purpose since \cref{cor:upper-bound} implies that $\vwel$ is an upper bound on the optimal revenue.
Recall $r$ represents the vector of rewards given different actions. 
Let $i^*(\alpha r, c)$ be the action that agent with type $c$ chooses given a linear contract with parameter~$\alpha$,
and let $i^*(r,\ironed(c))$ be the virtual welfare maximizing action for agent with type $c$. 
We first show that under $(\alpha,\beta,\kappa)$-linear bounded assumption, 
\begin{align*}
i^*(\alpha r, c)\geq i^*(r,\ironed(c)),\quad\forall c \geq \kappa.
\end{align*} 
To do so, denote by $\hat{i}=i^*(r,\ironed(c))$, we will first show that $\alpha R_{\hat{i}}-\gamma_{\hat{i}}c \geq \alpha R_{\ell}-\gamma_{\ell}c$ for any $\ell\leq\hat{i}$. 
Note that by the definition of $\hat{i}$,
since $\ell$ is a feasible action choice, 
we have that $R_{\hat{i}}-\gamma_{\hat{i}}\ironed(c) \geq R_{\ell}-\gamma_{\ell}\ironed(c)$. 
That is, $R_{\hat{i}}-R_{\ell} \geq (\gamma_{\hat{i}}-\gamma_{\ell})\ironed(c)$ for any $\ell\leq\hat{i}$. 
Combining the above with $\ironed(c)\geq \frac{1}{\alpha}c$ for any $c\geq \kappa ,$ we have $\alpha R_{\hat{i}}-\alpha R_{\ell} \geq (\gamma_{\hat{i}}-\gamma_{\ell})c$ for any $\ell\leq\hat{i}$ and any $c\geq \kappa$. 
Therefore, $\alpha R_{\hat{i}}-\gamma_{\hat{i}}c \geq \alpha R_{\ell} -\gamma_{\ell}c$ for any $\ell\leq\hat{i}$ and $c\geq \kappa$.
This implies that $i^*(\alpha r, c)\geq \hat{i} = i^*(r,\ironed(c))$ for any $c \geq \kappa$. 
Since the expected reward is monotone in the chosen action, we have
\begin{align*}
R_{i^*(\alpha r, c)}\geq R_{i^*(r,\ironed(c))}, \quad\forall c \geq \kappa.
\end{align*}

We now use this to lower bound $\apx$. 
Recall that $\apx=\int_{\underline{c}}^{\bar{c}}g(c)(1-\alpha)R_{i^*(\alpha r,c)}\dd c$. Since $R_i \geq 0$ $\forall i \in [n]$, $\apx\geq \int_{\kappa }^{\bar{c}}g(c)(1-\alpha)R_{i^*(\alpha r,c)}\dd c$. 
By the above arguments we have that 
\begin{align}\label{eq:apx-only-alpha}
\apx\geq (1-\alpha)\int_{\kappa }^{\bar{c}}g(c)R_{i^*(r,\ironed(c))}\dd c.
\end{align}
Recall that $\vwel_{[\kappa ,\bar{c}]}=\int_{\kappa }^{\bar{c}}g(c)(R_{i^*(r,\ironed(c))}-\gamma_{i^*(r,\ironed(c))}\ironed(c))\dd c$. Since $\gamma_{i}\geq 0$ for any $i\in [n]$ and $\ironed(c)\geq 0$  for any $c\in [\underline{c},\bar{c}]$,
the latter is bounded from above as $\int_{\kappa }^{\bar{c}}g(c)R_{i^*(r,\ironed(c))}\dd c$. 
Combining this with \Cref{eq:apx-only-alpha}, we have that 
\begin{equation*}
\apx\geq (1-\alpha) \vwel_{[\kappa ,\bar{c}]} \geq \eta (1-\alpha)\vwel.\qedhere
\end{equation*}
\end{proof}

By \cref{thm:lin-bounded-const-apx} we obtain improved approximation ratios for several families of commonly observed distributions 
when compared to the optimal revenue, 
with detailed algebraic calculation provided in \cref{appx:implications}.

\begin{corollary}
\label{cor:rev-implications}
For any principal-agent setting, the following holds:
\begin{enumerate}
    \item If the type distribution has \emph{non-increasing density} on $[\underline{c},\bar{c}]$ for $\bar{c}>\underline{c}\geq 0$, then
    $(\frac{\kappa}{2\kappa-1},0,\kappa\underline{c})$-linear boundedness is satisfied for any $\kappa > 1$.
    With the parameterization of $(\kappa\underline{c},\eta)$-thin-tail for virtual welfare,
    a linear contract achieves a $\frac{2\kappa -1}{\eta (\kappa-1)}$-approximation to the optimal revenue.\label{item:rev-non-increasing-dense}
    \item If the type distribution is \emph{uniform} $U[0,\bar{c}]$, then $(\frac{1}{2},\frac{1}{2},0)$-linear boundedness 
    and $(0,1)$-thin-tail for virtual welfare are satisfied. A linear contract is revenue optimal when $i^*(r,\ironed(\bar{c}))=0$.\footnote{Note that this generalizes and strengthens one of the main findings in \citet{AlonDT21}.}\label{item:rev-uniform}
    \item If the type distribution is \emph{normal} $\mathcal{N}(\mu,\sigma^2)$ truncated at $c=0$ with $\sigma\geq \frac{5}{2\sqrt{2}}\mu$, then $(\frac{2}{3},0,0)$-linear boundedness 
    and $(0,1)$-thin-tail for virtual welfare are satisfied. 
    A linear contract achieves a $3$-approximation to the optimal revenue.\label{item:rev-normal}
\end{enumerate}
\end{corollary}
Again, in the special case of $\underline{c}=0$, 
$(0,1)$-thin-tail for virtual welfare is satisfied for all principal-agent instance. 
Therefore, a linear contract achieves a $2$-approximation to the optimal revenue for distributions with non-increasing densities supported on $[0,\bar{c}]$. 
This improves the worst case approximation of $4$ when compared to the optimal welfare. 
Moreover, our method allows us to establish the exact optimality of linear contracts for uniform distribution 
when the types are sufficiently dispersed, i.e., when $i^*(r,\ironed(\bar{c}))=0$. 
We also obtain constant approximation bounds for truncated normal distributions 
when the variance is large, a commonly adopted metric for measuring the amount of uncertainty in type distributions.\footnote{
\citet{gottlieb2022simple} give a condition (e.g., binary action) under which debt contracts are optimal. This is not in contrast to our result establishing the optimality of linear contracts for uniform distributions, because their result requires there to be a zero cost action that leads to positive reward (see Footnote 8 of their paper), which is not the case in our model.}

\subsection{Tight Approximation for General Distributions}
\label{sub:apprx-n}



We present a tight $\Theta(n)$-approximation guarantee for linear contracts for general distributions without thin-tail parameterizations. 
The lower bound of $\Omega(n)$ is illustrated in \cref{ex:scaling-setting} for uniform type distributions,
and our upper bound provides a tight approximation guarantee for any type distribution.
Our bounds imply that a linear contract achieves a constant approximation to the optimal revenue when the number of actions is a constant. 
This is in sharp contrast to the multi-dimensional type model of \citet{guruganesh20} and \citet{CastiglioniMG22}, where the approximation ratio is unbounded even for a constant number of actions.\footnote{In the multi-dimensional type model, there exists heterogeneity in the cost of different actions given different types. For example, there may exist one type that has lower cost for action 1 and another type that has lower cost for action 2.}

\begin{theorem}\label{thm:upper-bound-n}
For any principal-agent instance with $n$ actions, a linear contract achieves expected revenue that is an $n$-approximation to the optimal revenue.
\end{theorem}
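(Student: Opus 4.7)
The plan is to extend the single-type $n$-approximation argument of \citet{dutting2020simple} to the Bayesian setting. The proof will proceed by bounding $\opt$ from above by a welfare-like benchmark that decomposes additively along the chain of actions reachable via linear contracts, then using pigeonhole over the (at most $n$) chain positions to extract a single good $\alpha$.

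First, I would upper bound the optimum via \cref{cor:expected-rev}: since $u(\bar c) \geq 0$ and $\virtual(c) \geq c$ (using $G(c), g(c) \geq 0$), we get $\opt \leq \mathbb{E}_c[W(c)]$ with $W(c) = \max_i (R_i - \gamma_i c)$. Next I would exploit the following chain structure. As $\alpha$ sweeps $[0,1]$, for every type $c$ the agent traces the \emph{same} sequence of actions $0 = i_0 < i_1 < \cdots < i_m$ (the identities depend only on $r$ and $\gamma$, not on $c$), with breakpoints $\alpha_j(c) = c/\xi_j$ linear in $c$, where $\xi_j := (R_{i_j} - R_{i_{j-1}})/(\gamma_{i_j} - \gamma_{i_{j-1}})$. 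A telescoping identity, as in the single-type argument, gives the pointwise equality
\[
W(c) = \sum_{j=1}^m (R_{i_j} - R_{i_{j-1}})\,(1 - c/\xi_j)_+,
\]
and a parallel computation yields $\apx(\alpha) = (1-\alpha)\sum_j (R_{i_j} - R_{i_{j-1}})\, G(\alpha \xi_j)$. Taking expectation of $W(c)$ and using the elementary identity $\mathbb{E}_c[(1-c/\xi)_+] = \xi^{-1}\int_0^\xi G(u)\,du =: \bar G(\xi)$, I obtain
\[
\mathbb{E}_c[W(c)] = \sum_{j=1}^m (R_{i_j} - R_{i_{j-1}})\, \bar G(\xi_j).
\]

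Since the chain has length $m \leq n$, pigeonhole produces a coordinate $j^*$ with $(R_{i_{j^*}} - R_{i_{j^*-1}})\, \bar G(\xi_{j^*}) \geq \mathbb{E}_c[W(c)]/n \geq \opt/n$. The remaining task is to exhibit a \emph{single} linear parameter $\alpha^\star$ whose revenue dominates this quantity. Setting $\alpha^\star$ to maximize $(1-\alpha)G(\alpha \xi_{j^*})$ (the Myerson-optimal posted price on the cost distribution truncated at $\xi_{j^*}$), one gets $\apx(\alpha^\star) \geq (1-\alpha^\star)(R_{i_{j^*}}-R_{i_{j^*-1}})G(\alpha^\star \xi_{j^*})$ by discarding the other non-negative terms. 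A short comparison between posted-price revenue and average welfare $\bar G(\xi_{j^*})$ then closes the argument.

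The main obstacle I foresee is precisely this last comparison: in general $\max_\alpha (1-\alpha) G(\alpha \xi)$ can fall short of $\bar G(\xi)$ by an arbitrary factor for irregular $G$. The plan to circumvent this is to replace the welfare bound of Step~1 with the tighter virtual-welfare bound $\opt \leq \sup_\alloc \mathbb{E}_c[R_{\alloc(c)} - \gamma_{\alloc(c)}\ironed(c)]$ from \cref{cor:upper-bound}, which after the analogous chain decomposition replaces each $\bar G(\xi_j)$ by a posted-price-style quantity directly comparable to $(1-\alpha)G(\alpha\xi_j)$. Because this substitution is the delicate point, I expect it (and the verification that it preserves the $n$-factor rather than introducing an $n^2$ loss in pigeonhole) to occupy the bulk of the proof; the remaining steps are straightforward applications of the envelope-theorem identities already established in \cref{sub:primal}.
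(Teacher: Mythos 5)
Your proposal is correct and, after the pivot in your last paragraph, takes essentially the same route as the paper: decompose virtual welfare along the chain of breakpoints (Claim~\ref{appx:claim-opt-sum}), apply pigeonhole to find a single term $j^*$ worth $\geq \vwel/n$, and then choose $\alpha$ so that the APX term at the matching breakpoint (Claim~\ref{appx:claim-apx-sum}) equals that pigeonhole term. You correctly diagnose that the welfare decomposition via $\bar G(\xi)$ does \emph{not} close (the average-of-CDF quantity can exceed the best posted price by an arbitrary factor for irregular $G$), and you correctly identify the fix: swap in the virtual-welfare benchmark from \cref{cor:upper-bound}. One small note on the ``delicate'' step you flag: it is in fact a one-line identity rather than the bulk of the proof. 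Since $R_{i_{j^*}}-R_{i_{j^*-1}} = \xi_{j^*}(\gamma_{i_{j^*}}-\gamma_{i_{j^*-1}})$ by the definition of the welfare breakpoint, setting $\alpha = \ironed^{-1}(\xi_{j^*})/\xi_{j^*}$ (so $\alpha\xi_{j^*} = z_{\vwel,j^*}$) gives
\[
G(z_{\vwel,j^*})\bigl[R_{i_{j^*}}-R_{i_{j^*-1}} - z_{\vwel,j^*}(\gamma_{i_{j^*}}-\gamma_{i_{j^*-1}})\bigr]
\;=\;(1-\alpha)\,G(\alpha\xi_{j^*})\,(R_{i_{j^*}}-R_{i_{j^*-1}}),
\]
which is \emph{exactly} the $j^*$-th term of $\apx(\alpha)$ --- no slack, no extra $n$ factor, no posted-price-vs-average gap. (Maximizing $(1-\alpha)G(\alpha\xi_{j^*})$ over $\alpha$, as you propose, also works and can only be better.) So there is no $n^2$ risk: the pigeonhole and the matching both act on the same $\leq n$ chain positions.
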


The proofs appear in Appendix~\ref{appx:apprx-n}. The proof of Theorem~\ref{thm:upper-bound-n} relies on similar insights on the sets of breakpoints as in the proof of Theorem~\ref{thm:lin-bounded-const-apx}.




\section{Suboptimality of Other Simple Contracts}
\label{sec:approx_other}
In addition to linear contracts, there exist other simple forms of contract such as debt contract \citep[e.g.,][]{innes1990limited,gottlieb2022simple} or single-outcome payment contracts \citep[e.g.,][]{FII-23,dutting2024ambiguous}. 
In this section, we show that the worst case approximations of those simple contracts is linear in the number of actions even  if the thin-tail condition is satisfied. 
Certainly there may exist other simple contract formats  worth consideration, but we do not wish to provide an exhaustive list of results showing the approximations of all simple contracts in this section. 
The main purpose is to provide an illustration of the failure of approximations for contracts that do not take a linear format even when there is sufficient uncertainty regarding 
the agent's ability. 
This further motivates the use of linear contracts in practical applications. 

\subsection{Debt Contracts}
\label{apx:debt}

In this section, we consider debt contracts where the principal specifies a face value $\bar{r}$ such that the principal retains the minimum between the realized reward and the face value for each outcome. 


\begin{definition}
A contract $(x,t)$ is a \emph{debt contract} with face value $\bar{r}$ if for any outcome $j\in[m]$ and any cost $c\in C$, the payment function satisfies $t^{c}_j = \max\{r_j-\bar{r},0\}$.\footnote{In our definition, the principal offers the same face value for any reported cost $c$. This is because if the principal offers different face values for different costs, the agent always maximizes his expected utility by reporting the cost that minimizes the face value.}
\end{definition}

\begin{proposition}\label{prop:debt}
For any $n\geq 2$, let $q,\alpha\in(0,1),\eta\in(0,1-q)$, and suppose the agent's cost for quantile~$q$ is $c_q$. 
There exists a principal-agent instance with $n$ actions that satisfies $(\frac{c_q}{\alpha},\eta)$-thin-tail 
and the approximation ratio of debt contracts is at least $\Omega(n)$.
\end{proposition}

The thin-tail condition in \cref{prop:debt} is the same as in \cref{thm:universal}. 
We show that even under the same thin-tail condition, the constant approximation for debt contracts is not guaranteed. 
The proof of \cref{prop:debt} relies on the following example. 
\begin{example}
For any $n\geq 2$ and $\epsilon\in (0,\frac{1}{2})$, consider an instance with $m=n+1$. 
Let $0$ be the null action and $r_0=0$ be the null outcome assumed in \cref{sec:prelim}.
For any outcome $j\in[m]\backslash\{0\}$, the reward for outcome $j$ is $r_j = \epsilon^{1-j}$. 
For any action $i\in [n]\backslash\{0\}$, the distribution over outcomes satisfies 
\begin{align*}
F_{i,j} = \begin{cases}
i\cdot\epsilon^{m-i} & j=m-i+1,\\
1-i\cdot\epsilon^{m-i} & j=1,\\
0 & \text{otherwise.}
\end{cases}
\end{align*}
Moreover, for any action $i\in [n]\backslash\{0\}$, 
the per-unit level of effort is $0<\gamma_1<\cdots<\gamma_n<\epsilon$.
\end{example}

In this example, it is easy to verify that the expected rewards for action $i\in[n]\backslash\{0\}$ is $R_i = 1+i(1-\epsilon^{m-i})$.

Next, for any action $i\in [n]\backslash\{0,1\}$, we provide a necessary condition on the face value~$\bar{r}$ for the principal to incentivize the agent to choose action $i$. 
It is immediate that $\bar{r}$ must be at most $\epsilon^{-(m-i)}$.
Moreover, one necessary condition is that the expected payment $T_i$ given action $i-1$ is at most the expected payment $T_{i-1}$ given action $i-1$ since the required per-unit effort for action $i-1$ is lower. This implies that for any $\bar{r} > 1$,
\begin{align*}
&i\cdot\epsilon^{m-i}\cdot (\epsilon^{-(m-i)}-\bar{r}) 
> (i-1)\cdot\epsilon^{m-i+1}\cdot (\epsilon^{-(m-i+1)}-\bar{r})\\
\Leftrightarrow\,& \bar{r} < \frac{1}{(i-(i-1)\epsilon)\cdot \epsilon^{m-i}}.
\end{align*}
Therefore, the expected revenue of the principal for incentivizing the agent to choose action $i\geq 2$ under debt contract is at most 
\begin{align*}
1-i\cdot\epsilon^{m-i} + \bar{r} \cdot i\cdot\epsilon^{m-i}
\leq 1+\frac{i}{(i-(i-1)\epsilon}\leq 3.
\end{align*}
Since the social welfare for incentivizing the agent to choose action $i\in\{0,1\}$ is at most 2, 
the expected revenue of the principal is at most 3 given any debt contract. 
This holds for arbitrary distribution over agent's types. 

However, for any distribution over agent's types with bounded support, there exists a sufficient small $\epsilon>0$ such that the cost of the highest type for choosing action $n$ is at most~$1$. 
It is easy to verify that the thin-tail parameterization can be satisfied with sufficient small $\epsilon>0$ and uniform type distribution with sufficiently large support.
Moreover, by using a contract that provides a positive reward $\frac{1}{n\epsilon}$ only on outcome $2$, 
the agent is incentivized the choose action $n$ for all types, which guarantees expected revenue at least $n-1$ for the principal. 
Therefore, the multiplicative gap between optimal contract and debt contract is at least $\frac{n-1}{3} = \Omega(n)$.

\subsection{Single-Outcome payment Contracts}
\label{apx:single_reward}
In this section, we consider single-outcome payment contracts where the principal only specifies a positive payment to the agent on a single outcome. 
In classic models with binary actions and without private types, it is folklore that single-outcome payment contracts are optimal by providing a positive reward to the agent on the outcome that maximizes the likelihood ratio.
Moreover, \citet{dutting2024ambiguous} show that a collection of single-outcome payment contracts is optimal when the designer can commit to ambiguous contracts. 
\begin{definition}
A contract $(x,t)$ is a \emph{single-outcome payment contract} if there exists $j\in[m]$ and reward $\bar{r}>0$ such that for any cost $c\in C$, the payment function satisfies $t^{c}_{j'} = \bar{r}$ if $j'=j$ and $t^{c}_{j'} = 0$ otherwise.
\end{definition}
\begin{proposition}\label{prop:single}
For any $n\geq 2$, let $q,\alpha\in(0,1),\eta\in(0,1-q)$ and suppose the agent's cost for quantile~$q$ is $c_q$. 
There exists a principal-agent instance with $n$ actions that satisfies $(\frac{c_q}{\alpha},\eta)$-thin-tail 
and the approximation ratio of single-outcome payment contracts is at least $\Omega(n)$.
\end{proposition}
The proof of \cref{prop:single} relies on the following example. 
\begin{example}
For any $n\geq 2$ and any $\epsilon>0$, consider an instance with $m=n$. 
Let $0$ be the null action and $r_0=0$ be the null outcome assumed in \cref{sec:prelim}.
For any outcome $j\in[m]\backslash\{0,1\}$, the reward for outcome $j$ is $r_j = 1$ and we let $r_1=0$.\footnote{We can also perturb the rewards for outcomes $j\geq 2$ such that all those rewards are distinct but sufficiently close to 1.} 
For action $n$, the outcome is drawn from a uniform distribution in $\{1,\dots,m-1\}$. 
For any action $i\in [n]\backslash\{0,n\}$, the distribution over outcomes satisfies 
\begin{align*}
F_{i,j} = \begin{cases}
\frac{2}{n} & j=i,\\
1-\frac{2}{n} & j=1,\\
0 & \text{otherwise.}
\end{cases}
\end{align*}
Moreover, for any action $i\in [n]\backslash\{0\}$, 
the per-unit level of effort is $0<\gamma_1<\cdots<\gamma_n<\epsilon$.
\end{example}

In this example, it is easy to verify that the expected rewards for action $i\in[n]\backslash\{0,n\}$ is $R_i = \frac{2}{n}$, 
and the expected rewards for action $n$ is $R_n = 1$.
Note that for any $j\in [m]$ and any single-outcome payment contract that provides a positive reward only for outcome $j$, 
there exists an action $i<n$ ($i=j-1$ if $j\neq 1$ and $i=j$ if $j=1$) such that the expected payment $T_i$ given action $i$ is at least the expected payment $T_n$ given action $n$. 
Therefore, the agent cannot be incentivized to choose action $n$ given any single-outcome payment contract. 
In this case, the expected revenue of the principal is at most $\max_{i\leq n-1} R_i \leq \frac{2}{n}$. 
This bound holds for any distribution over types. 

However, given any distribution over types with bounded support, there exists a sufficiently small $\epsilon$ such that the agent is incentivized to choose action $n$ given linear contract with parameter $\alpha=\frac{1}{2}$ for all types. 
The expected revenue of the principal in this case is~$\frac{1}{2}$. 
Therefore, the multiplicative gap between the optimal contract and the single-outcome payment contract is at least $\frac{n}{4} = \Omega(n)$.

\appendix

\section{Near-Optimality of Linear Contracts}\label{appx:linear}

In this section we provide the proofs (or the parts of the proofs) omitted from \cref{sec:linear} on the near-optimality of linear contracts.

\subsection{Slowly-Increasing Distributions}
\label{apx:slow_increasing}
\begin{proof}[Proof of \cref{lem:covering_under_general_cost}]
To prove the lemma, it is sufficient to show that for any $c>0$ and any parameter $\alpha \in (0,1)$, 
the expected utility of type $\alpha c$ for choosing any action $i < i^*(r,c)$ given linear contract with parameter $\alpha$ is at most his utility for choosing action $i^*(r,c)$. 
This implies that there exists an action weakly larger than $i^*(r,c)$ that is a weakly best response for the agent. 
First note that 
\begin{align}\label{eq:larger_cost_diff}
&\alpha\cdot(\zeta(c,\gamma_{i^*(r,c)}) - \zeta(c,\gamma_i))
\geq \alpha \cdot\zeta(c,\gamma_i) \cdot \rbr{\frac{\zeta(\alpha c,\gamma_{i^*(r,c)})}{\zeta(\alpha c,\gamma_i)} - 1} \\
& = \alpha \cdot \frac{\zeta(c,\gamma_i)}{\zeta(\alpha c,\gamma_i)} \cdot \rbr{\zeta(\alpha c,\gamma_{i^*(r,c)}) - \zeta(\alpha c,\gamma_i)}
\geq \zeta(\alpha c,\gamma_{i^*(r,c)}) - \zeta(\alpha c,\gamma_i),\nonumber
\end{align}
where the first inequality holds by applying \cref{eq:equiv_ineq}
and the last inequality holds by applying condition 1 and 2 of \cref{asp:cost_structure}, i.e., $\zeta(\alpha c,\gamma_{i^*(r,c)}) - \zeta(\alpha c,\gamma_i)$ is non-negative and $\frac{\zeta(c,\gamma_i)}{\zeta(\alpha c,\gamma_i)} \geq \frac{1}{\alpha}$.

Moreover, by the definition of $i^*(r,c)$, we have 
$R_i - \zeta(c,\gamma_i) \leq R_{i^*(r,c)} - \zeta(c,\gamma_{i^*(r,c)})$, which implies that 
\begin{align*}
\alpha\cdot(R_{i^*(r,c)} - R_i) \geq \alpha\cdot(\zeta(c,\gamma_{i^*(r,c)}) - \zeta(c,\gamma_i)) \geq \zeta(\alpha c,\gamma_{i^*(r,c)}) - \zeta(\alpha c,\gamma_i)
\end{align*}
where the last inequality holds by applying \cref{eq:larger_cost_diff}. 
By reordering the terms, the above inequality implies that the expected utility of type $\alpha c$ for choosing any action $i < i^*(r,c)$ given linear contract with parameter $\alpha$ is at most his utility for choosing action $i^*(r,c)$.  
\end{proof}

\begin{proof}[Proof of \cref{thm:slow}] 
Denote by $\wel$ the optimal expected welfare, and by $\apx$ the expected revenue of a linear contract with parameter $\alpha$; we show that $\apx \geq (1-\alpha)\beta\eta\cdot \wel$. 

First, letting $\zeta_1(c,\gamma_{i^*(r,c)})$ be the partial derivative of $\zeta$ on its first coordinate, the welfare from types above $\kappa$ is\footnote{Although the support of the distribution $G$ is $[\kappa,\bar{c}]$, we still take the integration from $c$ to $\infty$. 
Note that to make things well defined, we have $g(c)=0$
and $G(c)=1$ for any $c> \bar{c}$.}
\begin{align*}
\wel_{[\kappa,\,\bar{c}]}
&=\int_{\kappa}^{\infty} R_{i^*(r,c)} - \zeta(c,\gamma_{i^*(r,c)}) \dd G(c)\\
&= -G(\kappa) \cdot (R_{i^*(r,\kappa)} - \zeta(\kappa,\gamma_{i^*(r,\kappa)})
+ \int_{\kappa}^{\infty} G(c) \cdot \zeta_1(c,\gamma_{i^*(r,c)}) \dd c,
\end{align*}
where the equality holds by integration by parts 
and the fact that the derivative of $R_{i^*(r,c)} - \zeta(c,\gamma_{i^*(r,c)})$
with respect to $c$ is $\zeta_1(c,\gamma_{i^*(r,c)})$ (by the Envelope Theorem). 

Moreover, the revenue of $\apx$ is 
\begin{align*}
\apx&= \int_{\underline{c}}^{\infty} (1-\alpha) R_{i^*(\alpha r,c)} g(c)\dd c
\geq \int_{\underline{c}}^{\infty} (1-\alpha) R_{i^*(r,\frac{c}{\alpha})} g(c)\dd c\\
&\geq (1-\alpha)  \int_{\underline{c}}^{\infty} g(c) ( R_{i^*(r,\frac{c}{\alpha})} - \zeta(\frac{c}{\alpha},\gamma_{i^*(r,\frac{c}{\alpha})})) \dd c\\
&\geq (1-\alpha)\alpha  \int_{\frac{\underline{c}}{\alpha}}^{\infty} g(\alpha c) ( R_{i^*(r,c)} - \zeta(c,\gamma_{i^*(r,c)})) \dd c,
\end{align*}
where the first inequality holds by applying \cref{lem:covering_under_general_cost},
and the second inequality holds simply because the cost is non-negative.
The last equality holds by a simple change of variable. 
Using integration by parts: 
\begin{align*}
\apx &\geq 
(1-\alpha) \int_{\frac{\underline{c}}{\alpha}}^{\infty} G(\alpha c) \cdot \zeta_1(c,\gamma_{i^*(r,c)}) \dd c
\geq (1-\alpha)  \int_{\kappa}^{\infty} G(\alpha c) \cdot \zeta_1(c,\gamma_{i^*(r,c)}) \dd c.
\end{align*}
Since $G(\alpha c) \geq \beta G(c)$ for any $c\geq \kappa$, 
we have that 
\begin{align*}
\apx\geq (1-\alpha)\beta \cdot \wel_{[\kappa,\,\bar{c}]}
\geq (1-\alpha)\beta\eta\cdot \wel,
\end{align*}
where the last inequality holds by 
$(\kappa,\eta)$-thin-tail parameterization.
\end{proof}

\subsection{Discussion of Thin-tail Parameterization}
\label{apx:thin_tail}
\begin{example}\label{ex:scaling-setting}
Let $\delta < 1$ to be determined. The set of actions is $\{0,\dots,n\}$. 
The type $c$ is drawn from a uniform distribution in $[1, \bar{c}]$
and the cost function is linear, i.e., $\zeta(c,\gamma)=c\gamma$.
The required effort levels are $\gamma_0=0,\gamma_{i}=\frac{1}{\delta^{i}}-(i+1)+\delta i$ $\forall i\in [n]$. The set of outcomes is $\{0,\dots,n\}$. The rewards are given by $r_0=0$ and $r_i=\frac{1}{\delta^{i}}$ $\forall i\in [n]$. Outcome probabilities are $F_{i,i}=1$ $\forall i \in \{0,\dots,n\}$.\footnote{A similar example is presented in Theorem $3$ of \citet{dutting2020simple} to show that the gap between linear contracts and optimal contracts are large when the type is known.}
\end{example}

Suppose the type is $c=1$. 
In this case, it can be verified that for every action~$i$, the minimal parameter $\alpha_i$ of a linear contract that incentivizes the agent to take action $i$ is 
\begin{eqnarray*}
\alpha_i =\begin{cases}
1-2\delta+\delta^2 & i=1,\\
1-\delta^{i} & i>1.
\end{cases}
\end{eqnarray*}
Therefore, the expected revenue of action $i$ given a linear contract is at most $\frac{1}{\delta^i}\delta^i=1$ for $i>1$ and $(2\delta -\delta^2)\frac{1}{\delta}<2$ for $i=1$. That is, at most $2$. However, when not restricting herself to linear contracts, the principal can extract the entire welfare by paying $t_n=\gamma_n$ and incentivizing the agent to take action $n$. Then, the optimal revenue is at least $\frac{1}{\delta^n}-(\frac{1}{\delta^n}-(n+1)+\delta n)=n+1-\delta n.$ We can conclude that in this example, the multiplicative loss in the principal’s expected revenue from using a linear contract rather than an optimal one is at least $\Omega(n)$.

Now we go back to the example with uniform distributions. 
For any $\epsilon>0$, there exists $\delta>0$ such that given the above construction of the principal-agent setting, 
any type with cost above $1+\epsilon$ will have zero contribution to the welfare. 
In particular, 
\begin{eqnarray}\label{eq:saling-1}
  r_n-\gamma_n (1+\epsilon)= 0.
\end{eqnarray}
Let $p_{\epsilon}$ be the probability that a type is within $[1,1+\epsilon]$.
Using similar arguments to those presented above, a linear contract's revenue for any type $c$ is bounded by the minimum between~$2$ and the welfare of type $c$. 
Thus, the revenue of a linear contract is at most~$2p_{\epsilon}$.
However, by offering payment $t_n = (1+\frac{\epsilon}{2})\gamma_n$ and $t_j=0$ $\forall j \leq n-1$, all
agents with types $c\leq (1+\frac{\epsilon}{2})$ will choose action $n$. Each extracts a revenue of at least $\frac{1}{2}(n+1-\delta n)$. To see this, note that the expected revenue of each type is $r_n-(1+\frac{\epsilon}{2})\gamma_n.$ 
This equals $\frac{1}{2}(r_n-\gamma_n)+\frac{1}{2}(r_n-(1+\epsilon)\gamma_n)$. Since $\epsilon$ is chosen such that $r_n-\gamma_n(1+\epsilon)=0$ and $r_n-\gamma_n= n+1-\delta n$, the expected revenue is $\frac{1}{2}(n+1-\delta n)$. 
Therefore, the optimal revenue is at least $\frac{1}{2}p_{\epsilon}(n+1-\delta n)$, and hence,
regardless of $\bar{c}$,
the gap is again $\Omega(n)$ when $\delta$ is sufficiently small.

\section{Characterization of IC Contracts} 
\label{sub:primal}

We characterize the set of allocation rules that can be implemented by IC contracts when the cost function is linear (i.e., $\zeta(c, \gamma) = c\cdot \gamma$), 
and derive the unique expected payment of each type given any implementable allocation rule. 
\begin{definition}
\label{def:implementable}
An allocation rule $x$ is \emph{implementable} if there exists a payment rule $t$ such that contract $(x,t)$ is IC. 
\end{definition}
\begin{lemma}[Integral Monotonicity and Payment Identity]\label{lem:primal characterization}
An allocation rule $\alloc$ is implementable if and only if
for every type $c\in C$, there exist payments $t^c\in \mathbb{R}^{m+1}$ such that $i^*(t^c,c)=x(c)$ and\,\footnote{$\int_c^{c'}$ is defined as $-\int_{c'}^c$ when $c' < c$.} 
\begin{align}
\int_c^{c'} \gamma_{\alloc(z)} - \gamma_{i^*(t^c,z)} \dd z \leq 0, \qquad \forall c'\in C. \label{eq:curvature}
\end{align}
Moreover, the expected payment $T^c_{x(c)}$ given an implementable allocation rule $\alloc$
satisfies
\begin{align}
T^c_{x(c)} = \gamma_{\alloc(c)}\cdot c 
+ \int_{c}^{\bar{c}} \gamma_{\alloc(z)} \dd z 
+ (T^{\bar{c}}_{x(\bar{c})} - \bar{c}\cdot \gamma_{\alloc(\bar{c})}).\label{eq:pay-identity}
\end{align}
\end{lemma}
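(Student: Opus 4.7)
The plan is to follow the standard single-dimensional mechanism design playbook, adapted to account for the fact that once the agent reports $\hat c$ and receives a payment profile $t^{\hat c}$, he re-optimizes his action choice. Accordingly, I would introduce the equilibrium utility $U(c) := T^c_{x(c)} - \gamma_{x(c)}\,c$ and the cross-type deviation utility
\[
\tilde U(c, c') \;:=\; T^{c'}_{i^*(t^{c'},c)} - \gamma_{i^*(t^{c'},c)}\,c \;=\; \max_{i\in[n]} \bigl(T^{c'}_i - \gamma_i\,c\bigr).
\]
The first IC condition (following the recommendation) is exactly $x(c)=i^*(t^c,c)$, equivalently $U(c)=\tilde U(c,c)$. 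The second IC condition (truthful reporting) is exactly $U(c)\ge \tilde U(c,c')$ for all $c,c'\in C$. The whole lemma is then a rewriting of these two properties as integral conditions on $\gamma_{x(\cdot)}$ and $\gamma_{i^*(t^{\cdot},\cdot)}$.

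For the payment identity, I would exploit that for each fixed $c'$, the map $c\mapsto \tilde U(c,c')$ is the upper envelope of finitely many affine-in-$c$ functions, hence convex and Lipschitz with $\partial_c\tilde U(c,c') = -\gamma_{i^*(t^{c'},c)}$ at points of differentiability. Consequently $U(c)=\sup_{c'}\tilde U(c,c')$ is convex in $c$, and because there are only finitely many actions, $U$ is Lipschitz (constant at most $\max_i \gamma_i$) and hence absolutely continuous. A standard envelope argument using the second IC condition (the supremum is attained at $c'=c$) gives $U'(c)=-\gamma_{x(c)}$ almost everywhere; integrating from $c$ to $\bar c$ and using $T^c_{x(c)}=U(c)+\gamma_{x(c)}\,c$ yields exactly~\eqref{eq:pay-identity}.

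For the integral-monotonicity characterization, both directions come from combining IC with the integral representations above. For necessity, I would start from $U(c)\ge \tilde U(c,c')$, subtract $U(c')=\tilde U(c',c')$, and observe that the left-hand side equals $-\int_{c'}^{c}\gamma_{x(z)}\,dz$ (by the payment identity) while the right-hand side equals $-\int_{c'}^{c}\gamma_{i^*(t^{c'},z)}\,dz$ (by the convexity of $\tilde U(\cdot,c')$); rearranging and relabelling $c\leftrightarrow c'$ produces~\eqref{eq:curvature}. For sufficiency, I would set $V(c,c'):=U(c)-\tilde U(c,c')$; then $V(c',c')=0$ by the hypothesis $x(c')=i^*(t^{c'},c')$, and the same two integral identities show
\[
V(c,c')\;=\;-\int_{c'}^{c}\bigl(\gamma_{x(z)}-\gamma_{i^*(t^{c'},z)}\bigr)\,dz\;\ge\;0
\]
by~\eqref{eq:curvature}, which is precisely the truthful-reporting IC constraint.

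The step I expect to be the main obstacle is the rigorous justification of ``$U'(c)=-\gamma_{x(c)}$ almost everywhere''. Two subtleties have to be handled: (i) the payment profile $t^{c'}$ varies with the report $c'$, so $U(c)$ is a supremum not of finitely many linear pieces but of an infinite family indexed by $c'\in C$; one must either invoke an abstract envelope theorem or note that all line slopes still lie in the finite set $\{-\gamma_0,\dots,-\gamma_n\}$, so convexity plus uniform Lipschitz continuity suffice. (ii) The support $C=[\underline c,\bar c]$ may have $\bar c=\infty$, so the boundary constant $T^{\bar c}_{x(\bar c)}-\bar c\,\gamma_{x(\bar c)}$ must be read as $\lim_{c\to\bar c}U(c)$; this limit exists because $U$ is monotone non-increasing and bounded below by zero via the null action (individual rationality). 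Both points are handled by arguing on bounded subintervals of $C$ and passing to the limit.
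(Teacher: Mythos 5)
Your proof follows essentially the same route as the paper's: it introduces the equilibrium utility $u(c)$ and the fixed-report utility $u_{\hat c}(\cdot)$, applies an Envelope argument to obtain the derivatives $-\gamma_{x(c)}$ and $-\gamma_{i^*(t^{\hat c},\cdot)}$, integrates to get the two identities, and reduces the truthful-reporting IC constraint (via $u_c(c)=u(c)$) to the integral-monotonicity inequality, then reads off the payment identity by integrating to $\bar c$. The only difference is one of exposition: you spell out why the envelope step is legitimate (the interim utility is an upper envelope of affine functions whose slopes lie in the finite set $\{-\gamma_0,\dots,-\gamma_n\}$, so it is convex and uniformly Lipschitz, hence absolutely continuous) and address the normalization at $c\to\bar{c}$, whereas the paper compresses these points into a citation of the Milgrom--Segal Envelope Theorem.
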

This characterization of implementable allocation rules can be thought of as the ``primal'' version of a recent ``dual'' characterization due to \citet{AlonDT21}.\footnote{\citet{castro2024disentangling} provides sufficient conditions for decoupling the incentive constraints into moral hazard and adverse selection constraints separately. Our characterization is orthogonal to theirs. 
Instead, our characterization is more similar to characterizations for single-dimensional type spaces in a different context of selling information 
\citep[e.g.,][]{li2021selling,yang2022selling}.}
Our characterization uses the standard Envelope Theorem \citep{MilgromS02}.
Note that our characterization of integral monotonicity implies that the convexity in agent's utilities is only necessary, but not sufficient for ensuring the incentive compatibility of the contracts.

\begin{proof}[Proof of \cref{lem:primal characterization}]
Let $u(c)$ be agent's utility for following the mechanism
and let $\util_c(z)$ be the utility function of the agent with type $z$ when he has reported $c$ to the principal. 
By the Envelope Theorem \citep{MilgromS02},
we have that for any $c,c'$, 
\begin{align*}
u(c') &= u(c) - \int_c^{c'} \gamma_{\alloc(z)} \dd z
\intertext{and}
u_c(c') &= u_c(c) - \int_c^{c'} \gamma_{i^*(t^c,z)} \dd z.
\end{align*}
Note that the agent with cost $c'$ has no incentive to deviate the report to $c$ 
if and only $u(c') \geq u_c(c')$, 
which is equivalent to 
\begin{align*}
\int_c^{c'} \gamma_{\alloc(z)} - \gamma_{i^*(t^c,z)} \dd z \leq 0, \qquad \forall c'
\end{align*}
since $u_c(c) = u(c)$. Now given any implementable allocation rule $\alloc$, 
the corresponding interim utility 
satisfies $u(c) = u(\bar{c}) + \int_c^{\bar{c}} \gamma_{\alloc(z)} \dd z$
and hence
\begin{align*}
T^c_{x(c)} &= \gamma_{\alloc(c)}\cdot c 
+ u(\bar{c}) + \int_c^{\bar{c}} \gamma_{\alloc(z)} \dd z\\
&= \gamma_{\alloc(c)}\cdot c 
+ \int_{c}^{\bar{c}} \gamma_{\alloc(z)} \dd z 
+ (T^{\bar{c}}_{x(\bar{c})} - \bar{c}\cdot \gamma_{\alloc(\bar{c})}). \qedhere
\end{align*}
\end{proof}

Let the virtual cost of agent with type $c$ be $\virtual(c) = c+\frac{G(c)}{g(c)}$.
We characterize the expected revenue of any IC contract in the form of virtual welfare by applying \cref{lem:primal characterization} and integration by parts.

\begin{corollary}
\label{cor:expected-rev}
For any implementable allocation rule x that leaves an expected utility of $u(\bar{c})$ to the agent with highest cost type $\bar{c}$, 
the expected revenue of the principal is
\begin{align*}
\expect[c\sim G]{R_{\alloc(c)} - \gamma_{\alloc(c)} \cdot \virtual(c)} - u(\bar{c}).
\end{align*}
\end{corollary}

One difference between \cref{cor:expected-rev} and the seminal Myerson's lemma in auction design is the former's dependence on the utility $u(\bar{c})$ of the highest cost type. 
Moreover, utility $u(\bar{c})$ cannot be set to $0$ without loss of generality in our model, as it can influence the feasibility of allocation rule~$\alloc$ through the complex non-local IC constraints
due to the limited liability requirements. 
This makes the expression for expected revenue in \cref{cor:expected-rev} harder to work with.
Moreover, the classic approach in mechanism design identifies the optimal allocation rule for maximizing the virtual welfare without incentive constraints
and then shows that incentive constraints are never violated given the optimal unconstrained allocation rule. 
Unfortunately in our model, such method fails since the optimal unconstrained allocation rule may not be implementable under IC contracts. 
We provide a detailed counterexample in \cref{sub:failure_of_implementation}.

By relaxing integral monotonicity to standard monotonicity and taking $u(\bar{c})=0$ we get that the maximum virtual welfare \emph{upper-bounds} the optimal expected revenue. Formally, let $\ironed$ denote the \emph{ironed} virtual cost function, achieved by standard ironing --- taking the convex hull of the integration of the virtual cost, and then taking the derivative~\citep{myerson1981optimal}.\footnote{Since the ironed virtual cost function may not be strictly monotone, we define its inverse as $\ironed^{-1}(q) := \sup\{c: \ironed(c) \leq q\}$.}
Using ironing~\citep{myerson1981optimal} and the optimality of maximizing virtual welfare with respect to all randomized allocation rules:


\begin{corollary}\label{cor:upper-bound}
The optimal expected revenue of a (randomized) IC contract is upper-bounded by%
\begin{align*}
\sup_{\text{\emph{monotone }}\alloc}\expect[c\sim G]{R_{\alloc(c)} - \gamma_{\alloc(c)} \virtual(c)} = \sup_{\alloc}\expect[c\sim G]{R_{\alloc(c)} - \gamma_{\alloc(c)} \ironed(c)}. 
\end{align*}
\end{corollary}

In the online appendix, we use these characterizations to show that optimal contracts may exhibit un-intuitive properties, such as a lack of revenue monotonicity in first-order stochastic dominance in the cost distribution.
This adds to the list of undesirable properties of optimal contracts in pure moral hazard 
settings, such as lack of interpretability and non-monotonicity of transfers in rewards, and provides further motivation for studying simple contracts such as linear contracts.

\section{Improved Approximation to Revenue Benchmark}
\label{apx:improved_revenue}

\subsection{Linearly-Bounded Distributions}
\label{apx:linear bound proof}
Our proof of \cref{thm:lin-bounded-const-apx}.\ref{item:with-highest-cost} relies on the characterization of monotone piecewise constant allocation rules in IC contracts. 
Note that a necessary property of the allocation rule that ensures IC is monotonicity. In our context, a \emph{monotone} allocation rule recommends actions that are more costly in terms of effort --- and hence more rewarding in expectation for the principal --- to agents whose cost per unit-of-effort is lower. 
Intuitively, such agents are better-suited to take on effort-intensive tasks. Since the number of actions is finite, the class of monotone allocation rules boils down to the class of \emph{monotone piecewise constant} rules. Informally, these are rules such that: (i) the allocation function $x(\cdot)$ is locally constant in distinct intervals within $[0,\bar{c}]$; and (ii) the allocation is decreasing with intervals. Formally:

\begin{definition}
\label{def:piecewise-constant}
    An allocation rule $x:C\to[n]$ is \emph{monotone piecewise constant} if there exist $\ell+2$ \emph{breakpoints} $\underline{c}=z_{\ell+1}< ...< z_{0}=\bar{c}$ where $\ell\leq n$, such that $\forall i \in [\ell]$: 
    \begin{enumerate}[topsep=0pt,itemsep=-1ex,partopsep=1ex,parsep=1ex]
        \item $c \in (z_{i+1},z_{i}] \implies x(c)=x(z_{i})$; 
        \item $x(z_{i+1})> x(z_{i})$. 
    \end{enumerate}
\end{definition}

\begin{proof}[Proof of \cref{thm:lin-bounded-const-apx}.\ref{item:with-highest-cost}] Assume that $i^*(r,\ironed(\bar{c}))=0$, and denote by $\vwel$ the optimal expected virtual welfare, and by $\apx$ the expected revenue of a linear contract with parameter $\alpha$; our goal is to show that $\apx \geq \frac{\eta(1-\alpha)}{1-\beta} \vwel$. 
Let $x_{\wel}$ be the welfare maximizing allocation rule, $x_{\vwel}$ be the \emph{virtual} welfare maximizing allocation rule, and let $x_{\apx}$ be the allocation rule induced by a linear contract with parameter~$\alpha$. That is, $\forall c\in C$
\begin{eqnarray}
x_{\wel}(c)&=&\arg\max_{i\in [n]}\{R_i-\gamma_i c \},\nonumber\\
x_{\vwel}(c)&=&\arg\max_{i\in [n]}\{R_i-\gamma_i \ironed (c)\},\label{eq:alloc}\\ 
x_{\apx}(c)&=&\arg\max_{i\in [n]}\{\alpha R_i-\gamma_i c\}\nonumber.
\end{eqnarray}
Denote the breakpoints (as defined in Definition~\ref{def:piecewise-constant}) of $x_{\wel},x_{\vwel}$ and $x_{\apx}$ by $\{z_{\wel,i}\}$, $\{z_{\vwel,i}\}$, and $\{z_{\apx,i}\}$ respectively. Let $|\{z_{\wel,i}\}|$ be the image size of $x_{\wel}$. Throughout the proof we will assume that $n+2=|\{z_{\wel,i}\}|$, i.e., that all actions are allocated by $x_{\wel}$. This is for ease of notation since in this case $x_{\wel}(z_{\wel,i})=i$, but the same arguments apply when $|\{z_{\wel,i}\}|<n+2$.
Since $\ironed(c),\alpha c$ are increasing and $x_{\wel}(\bar{c})=0$ since $i^*(r,\ironed(\bar{c}))=0$, one can observe that $x_{\vwel},x_{\wel}$ 
and $x_{\apx}$ have the same image size $n+2 =|\{z_{\vwel,i}\}|=|\{z_{\apx,i}\}|$.
Further, their breakpoints satisfy
\begin{eqnarray}
\label{eq:bp-alpha-beta}
\forall i\in [n+1]: &z_{\vwel,i} = \ironed^{-1}(z_{\wel,i}),& z_{\apx,i} = \alpha z_{\wel,i}.
\end{eqnarray}
To see why the second part of \eqref{eq:bp-alpha-beta} holds, consider for instance the breakpoint $z_{\apx,n}$ (where $\alpha R_n - \gamma_n z_{\apx,n} = \alpha R_{n-1} - \gamma_{n-1} z_{\apx,n}$), and the breakpoint $z_{\wel,n}$ (where $R_n - \gamma_n z_{\wel,n} = R_{n-1} - \gamma_{n-1} z_{\wel,n}$). Note that by reorganizing the above, $z_{\apx,n}$ can be rewritten as $\alpha(R_{n}-R_{n-1})/(\gamma_{n}-\gamma_{n-1})$. Similarly, $z_{\wel,n}$ can be rewritten as $(R_{n}-R_{n-1})/(\gamma_{n}-\gamma_{n-1})$. 

We first aim to lower bound $\apx$. 
Note that the expected revenue for action $i\in [n]$ in a linear contract with parameter $\alpha$ is $(1-\alpha) R_i$. Further, recall that all types $c\in (z_{\apx,i+1},z_{\apx,i}]$ are allocated with the same action of $x_{\apx}(z_{\apx,i})=i$. Thus, the expected revenue over all types can be phrased by the breakpoints as follows $\apx = \sum_{i\in [n]} (G(z_{\apx,i})-G(z_{\apx,i+1})) (1-\alpha) R_{i}$. By reorganizing the latter 
as shown in Claim~\ref{appx:claim-apx-sum} in Appendix~\ref{apx:linear bound proof} we have
\begin{eqnarray*}
\apx &=&\sum_{i\in [n-1]}G(z_{\apx,i+1})(1-\alpha)(R_{i+1}-R_{i}).
\end{eqnarray*}
Let $n-j = \min_{i\in [n+1]} \{z_{\vwel,i} \mid \kappa \leq z_{\vwel,i}\}$. In words, $z_{\vwel,n-j}$ is the first breakpoint that is larger than $\kappa$. To lower bound $\apx$ we show that $G(z_{{\apx},i+1}) \geq G(z_{{\vwel},i+1})$ $\forall i\in [n-j-1]$. Note that by \eqref{eq:bp-alpha-beta}, $z_{{\apx},i+1}=\alpha z_{{\wel},i+1}$. Further, by the condition that $\frac{1}{\alpha}c\leq \ironed(c)$ $\forall \kappa \leq c$, we have that $\ironed^{-1}(c) \leq \alpha c$ $\forall \kappa \leq c$ (since both sides of the inequality are increasing). Therefore, $\ironed^{-1}(z_{{\wel},i+1}) \leq \alpha z_{{\wel},i+1}= z_{{\apx},i+1}$ $\forall i \in [n-j-1]$.
Using~\eqref{eq:bp-alpha-beta} again for $z_{{\vwel},i+1}=\ironed^{-1}(z_{{\wel},i+1})$, we have that $z_{{\vwel},i+1} \leq z_{{\apx},i+1}$ $\forall i \in [n-j-1]$, which implies $G(z_{{\vwel},i+1}) \leq G(z_{{\apx},i+1})$ $\forall i \in [n-j-1]$ as desired. Therefore, and since $R_{i+1}\geq R_i$ $\forall i \in [n-1]$ we have that
\begin{eqnarray}\label{eq:alpha-bound}
\apx \geq \sum_{i\in [n-j-1]}G(z_{\vwel,i+1})(1-\alpha)(R_{i+1}-R_{i}).
\end{eqnarray}

Next, we lower bound the right hand side of \eqref{eq:alpha-bound} using $\vwel_[\kappa,\bar{c}]$. 
First, we show in Claim~\ref{appx:claim-opt-sum-kappa} in \cref{apx:linear bound proof} that the optimal virtual welfare can also be phrased using the breakpoints 
as follows: $\vwel_{[\kappa , \bar{c}]}=\sum_{i \in [n-j-1]}G(z_{\vwel,i+1})[R_{i+1}-R_{i}-z_{\vwel,i+1}(\gamma_{i+1}-\gamma_{i})]-G(\kappa )(R_{n-j}-\gamma_{x(\kappa )}\kappa ).$ 
Recall that by definition of $x_{\vwel}$, we have that $n-j\in \arg\max_{i\in [n]}\{R_i -\gamma_i \ironed(\kappa )\}$. Therefore, $R_i -\gamma_i \ironed(\kappa )\geq 0$. Which implies, since $\ironed(\kappa)\geq \kappa$ that $R_{n-j}-\gamma_{x(\kappa )}\kappa \geq 0$. Thus, 
\begin{eqnarray}\label{eq:opt-virtual}
\sum_{i \in [n-j-1]}G(z_{\vwel,i+1})[R_{i+1}-R_{i}-z_{\vwel,i+1}(\gamma_{i+1}-\gamma_{i})]&\geq & \vwel_{[\kappa , \bar{c}]}.
\end{eqnarray}
By Definition~\ref{def:piecewise-constant}, the expected welfare at the breakpoint $z_{\wel,i+1}$ given actions $i+1$ and $i$ are equal. 
Thus, $R_{i+1}-R_{i}=(\gamma_{i+1}-\gamma_{i}) z_{\wel,i+1}, \forall i \in [n-1]$. Multiplying both sides by $\beta$ and using $\beta z_{\wel,i+1} \leq \ironed^{-1}(z_{\wel,i+1})=z_{\vwel,i+1}$ $\forall i \in [n-j-1]$, 
we have that $\beta (R_{i+1}-R_{i})\leq (\gamma_{i+1}-\gamma_{i}) z_{\vwel,i+1}$ $\forall i \in [n-j-1]$.
Combining the latter with \eqref{eq:opt-virtual},
we have that 
\begin{eqnarray}\label{eq:beta-bound}
\sum_{i \in [n-j-1]}G(z_{\vwel,i+1})(1-\beta)(R_{i+1}-R_{i})&\geq & \vwel_{[\kappa , \bar{c}]}.
\end{eqnarray}
Combining \eqref{eq:alpha-bound}, \eqref{eq:beta-bound} and the fact that $\vwel_{[\kappa,\bar{c}]}\geq \eta\cdot \vwel$, we have that $\apx \geq \frac{\eta(1-\alpha)}{1-\beta}\vwel$. 
\end{proof}

The following claims completes the proof of Theorem~\ref{thm:lin-bounded-const-apx}.
\begin{claim}\label{appx:claim-apx-sum}
Consider $\apx$, and $z_{\apx,i}$ $\forall i\in [n+1]$ as defined in the proof of Theorem~\ref{thm:lin-bounded-const-apx}.
\begin{eqnarray*}
\apx &=&\sum_{i\in [n-1]}G(z_{\apx,i+1})(1-\alpha)(R_{i+1}-R_{i}).
\end{eqnarray*}
\end{claim}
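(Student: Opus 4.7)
\textbf{Proof proposal for Claim~\ref{appx:claim-apx-sum}.} The approach is pure Abel summation (summation by parts) applied to the natural expression for $\apx$ obtained by partitioning the type space along the breakpoints of $x_{\apx}$. Under the standing assumption in the proof that $|\{z_{\apx,i}\}|=n+2$, on the interval $(z_{\apx,i+1},z_{\apx,i}]$ the agent is induced to take action $i$ (for $i=0,\ldots,n$), so the per-type revenue of a linear contract with parameter $\alpha$ is $(1-\alpha)R_i$. Taking expectations,
\begin{equation*}
\apx \;=\; \sum_{i=0}^{n}\bigl(G(z_{\apx,i})-G(z_{\apx,i+1})\bigr)(1-\alpha)R_i.
\end{equation*}

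Next I would split this into two sums and reindex the second by shifting $i\mapsto i-1$:
\begin{equation*}
\sum_{i=0}^{n} G(z_{\apx,i})R_i \;-\; \sum_{i=0}^{n} G(z_{\apx,i+1})R_i \;=\; G(z_{\apx,0})R_0 \;+\; \sum_{i=1}^{n} G(z_{\apx,i})(R_i-R_{i-1}) \;-\; G(z_{\apx,n+1})R_n.
\end{equation*}
The boundary terms then vanish: $z_{\apx,0}=\bar c$ gives $G(z_{\apx,0})=1$ but $R_0=0$ because action $0$ is the null action that deterministically yields the null outcome; and $z_{\apx,n+1}=\underline c$ gives $G(z_{\apx,n+1})=0$ by the continuity of $G$ at the left endpoint of its support. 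Re-indexing the remaining sum with $j=i-1$ produces $\sum_{j=0}^{n-1}G(z_{\apx,j+1})(R_{j+1}-R_j)$, which, after restoring the factor $(1-\alpha)$, is precisely the claimed identity.

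The only place where care is needed is the treatment of the two boundary terms: one must invoke $R_0=0$ (a modeling choice recorded in the preliminaries) and $G(\underline c)=0$ (a property of the density-based cost distribution). Since the claim is a purely algebraic rearrangement, there is no real obstacle beyond keeping the indexing consistent with Definition~\ref{def:piecewise-constant}, where the interval $(z_{i+1},z_i]$ receives action $x(z_i)=i$.
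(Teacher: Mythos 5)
Your proof is correct and follows essentially the same route as the paper's: both start from the partitioned expression $\apx = \sum_{i\in[n]} (G(z_{\apx,i})-G(z_{\apx,i+1}))(1-\alpha)R_i$, apply summation by parts, and kill the boundary terms using $R_0=0$ and $G(z_{\apx,n+1})=0$. The only cosmetic difference is that you reindex the second sum and collect terms before invoking the boundary identities, while the paper drops each boundary term within its own sum first; this is the same algebraic rearrangement.
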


\begin{proof} As in the proof of Theorem~\ref{thm:lin-bounded-const-apx}
\begin{eqnarray*}
\apx =
\sum_{i\in [n]} (G(z_{\apx,i})-G(z_{\apx,i+1})) (1-\alpha) R_{i}.
\end{eqnarray*}
Which, by splitting the sum is equal to
\begin{eqnarray*}
\sum_{i\in [n]} G(z_{\apx,i})(1-\alpha) R_{i}-\sum_{i\in [n]} G(z_{\apx,i+1}) (1-\alpha) R_{i}.
\end{eqnarray*}
Note that since $R_0=0$, the first term can be re-indexed as $\sum_{i\in [n-1]} G(z_{\apx,i+1})(1-\alpha) R_{i+1}$. Further, since $G(z_{\apx,n+1})=0$, we have that second term is equal to $\sum_{i\in [n-1]} G(z_{\apx,i+1})(1-\alpha) R_i.$ Thus, 
\begin{equation*}
\apx = \sum_{i\in [n-1]} G(z_{\apx,i+1})(1-\alpha) (R_{i+1}-R_{i}). \qedhere
\end{equation*}
\end{proof}

\begin{claim}\label{appx:claim-opt-sum-kappa}
Let $\kappa\in [\underline{c},\bar{c}]$, and consider $x_\vwel$, and $z_{\vwel,i}$ for any $i\in [n+1]$ as defined in the proof of Theorem~\ref{thm:lin-bounded-const-apx}. 
The welfare contribution $\vwel_{[\kappa , \bar{c}]}$ equals
\begin{eqnarray*}
\sum_{i \in [n-j-1]}G(z_{\vwel,i+1})[R_{i+1}-R_{i}-z_{\vwel,i+1}(\gamma_{i+1}-\gamma_{i})]-G(\kappa)(R_{n-j}-\gamma_{x(\kappa )}\kappa ),
\end{eqnarray*}
where $n-j = \min_{i\in [n+1]} \{z_{\vwel,i} \mid \kappa  \leq z_{\vwel,i}\}.$
\end{claim}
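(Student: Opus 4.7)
The plan is to compute $\vwel_{[\kappa,\bar c]}$ directly by partitioning the integration range along the breakpoints of $x_{\vwel}$ and then performing two summations-by-parts, one for the reward terms and one for the cost terms. By the definition of $n-j$, the interval $(\kappa,z_{\vwel,n-j}]$ is contained in $(z_{\vwel,n-j+1},z_{\vwel,n-j}]$ and is therefore entirely assigned action $n-j$, while each $(z_{\vwel,i+1},z_{\vwel,i}]$ for $i=0,\ldots,n-j-1$ is assigned action $i$. Hence
\[
\vwel_{[\kappa,\bar c]} = \int_\kappa^{z_{\vwel,n-j}} g(c)\bigl(R_{n-j}-\gamma_{n-j}\ironed(c)\bigr)\dd c + \sum_{i=0}^{n-j-1}\int_{z_{\vwel,i+1}}^{z_{\vwel,i}}g(c)\bigl(R_i-\gamma_i\ironed(c)\bigr)\dd c,
\]
and I would split each integrand into a reward part (constant $R_i$ times $g$) and a cost part (constant $\gamma_i$ times $g\,\ironed$) and treat them separately.

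For the reward part, each integral reduces to a difference $R_i(G(z_{\vwel,i})-G(z_{\vwel,i+1}))$, and an Abel summation-by-parts using $R_0=0$ and $G(z_{\vwel,0})=G(\bar c)=1$ collapses the total into $\sum_{i=0}^{n-j-1}(R_{i+1}-R_i)G(z_{\vwel,i+1}) - R_{n-j}G(\kappa)$. The $-R_{n-j}G(\kappa)$ that arises here is exactly the first summand of the $-G(\kappa)(R_{n-j}-\gamma_{x(\kappa)}\kappa)$ correction term in the target formula.

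For the cost part, the key identity is $g(c)\varphi(c)=(c\,G(c))'$, whence $\int_a^b g(c)\varphi(c)\dd c = bG(b)-aG(a)$. Because Myerson ironing sets $\ironed$ equal to a constant on each ironing interval and the optimal allocation $x_{\vwel}$ is itself piecewise constant with breakpoints at level-crossings of $\ironed$, the allocation is constant on every ironing interval; as a result $\int_a^b g(c)\ironed(c)\dd c = \int_a^b g(c)\varphi(c)\dd c = bG(b)-aG(a)$ whenever $a,b$ are not strict interior points of an ironing interval (which we may assume for $\kappa$ and for the $z_{\vwel,i}$ without loss of generality, shifting $\kappa$ to the ironing boundary otherwise). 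Substituting these evaluations and running the analogous summation by parts on the $\gamma_i$'s, using $\gamma_0=0$, yields $\sum_{i=0}^{n-j-1}(\gamma_{i+1}-\gamma_i)z_{\vwel,i+1}G(z_{\vwel,i+1}) - \gamma_{n-j}\kappa G(\kappa)$. Subtracting this from the reward sum and using $\gamma_{x(\kappa)}=\gamma_{n-j}$ produces precisely the claimed identity.

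The main obstacle I anticipate is the cost-side computation: one has to invoke the compatibility between the Myerson ironing intervals and the piecewise-constant structure of $x_{\vwel}$ in order to replace $\ironed$ by $\varphi$ integral-by-integral and exploit the clean antiderivative $cG(c)$. The two summations by parts and the resulting telescoping are routine once $R_0=0$, $\gamma_0=0$, and $G(\bar c)=1$ are invoked at the boundary, so I would verify these once at the end but not expect any subtlety there.
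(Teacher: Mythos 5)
Your proposal is correct and reaches the claimed identity, but the route differs mechanically from the paper's. You partition $[\kappa,\bar{c}]$ into the intervals on which $x_{\vwel}$ is constant, evaluate each piece using the closed antiderivative $cG(c)$ of $cg(c)+G(c)$, and then run two Abel summations-by-parts (one for the $R_i$ terms, one for the $\gamma_i$ terms), invoking $R_0=\gamma_0=0$ and the fact $x(\kappa)=n-j$ to assemble the final formula. The paper instead keeps the integral over all of $[\kappa,\bar{c}]$, integrates by parts against the distributional derivative $\gamma'_{x(c)}$ (a sum of Diracs at the breakpoints), expands $G(c)=G(\kappa)+\int_\kappa^c g$, swaps orders of integration to turn the double integral into an integral against $g$, and only then discretizes over breakpoints and re-indexes twice. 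Both routes pivot on the same identity $g(c)\varphi(c)=(cG(c))'$, but your version avoids the distributional-derivative and Fubini steps entirely and is correspondingly shorter and easier to check; the paper's version buys nothing you don't also get.

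One remark on the ironing point. You are right that the substitution $g(c)\ironed(c)\,\dd c \rightsquigarrow \dd\!\left(cG(c)\right)$ is only valid segment-by-segment if the segment endpoints avoid the interiors of ironing intervals, and the paper does implicitly assume this (it writes $g(c)\ironed(c)=cg(c)+G(c)$ with no comment, which is literally the unironed identity). Your justification that the $z_{\vwel,i}$ are automatically compatible with the ironing boundaries is correct, because $x_{\vwel}$ depends on $c$ only through $\ironed(c)$ and is therefore constant on each ironing interval. However, the parenthetical ``shifting $\kappa$ to the ironing boundary otherwise'' is not quite a without-loss-of-generality step: the quantity $\vwel_{[\kappa,\bar{c}]}$ itself changes with $\kappa$, and if $\kappa$ sits strictly inside an ironing interval one can check that the two sides of the claimed identity differ by $\gamma_{n-j}\bigl(G(b)(b-\ironed(\kappa))-G(\kappa)(\kappa-\ironed(\kappa))\bigr)$, where $b$ is the right end of that ironing interval, which is not generically zero. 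The clean fix is to either state the claim under the regularity assumption $\ironed=\varphi$, or restrict $\kappa$ to points not interior to ironing intervals; the paper elides this, so your noticing it is a point in your favor even though the WLOG phrasing should be tightened.
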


\begin{proof}
We will use $x$ instead of $x_{\vwel}$
and $z_i$ instead of $z_{\vwel,i}$ for simplicity.
We will show that the expected virtual cost over $[\kappa,\bar{c}]$ is at least
\begin{eqnarray*}
\sum_{i\in [n-j-1]}(G(z_{\vwel,i+1})-G(\kappa ))[z_{\vwel,i+1}(\gamma_{i+1}-\gamma_{i})].
\end{eqnarray*}
First, note that 
\begin{eqnarray*}
\int_{\kappa}^{\bar{c}}\gamma_{x(c)}g(c)\ironed(c) \dd c=\int_{\kappa}^{\bar{c}}\gamma_{x(c)}(cg(c)+G(c)) \dd c.
\end{eqnarray*}
Using integration by parts having $(g(c)c+G(c))$ as $(G(c)c)'$, the above is
\begin{eqnarray*}
G(c)c\gamma_{x(c)}\mid^{\bar{c}}_{\kappa}-\int_{\kappa}^{\bar{c}}\gamma'_{x(c)}G(c)c \dd c.
\end{eqnarray*}
Since $x(\bar{c})=0$ this is $-G(\kappa)\gamma_{x(\kappa)}\kappa-\int_{\kappa}^{\bar{c}}\gamma'_{x(c)}G(c)c \dd c$. Using $G(c)=G(\kappa )+\int^{c}_{\kappa}g(z)\dd z$, the latter is equal to
\begin{eqnarray*}
-G(\kappa)\gamma_{x(\kappa)}\kappa-G(\kappa)\int_{\kappa}^{\bar{c}}\gamma'_{x(c)}c \dd c-\int_{\kappa}^{\bar{c}}\gamma'_{x(c)}(\int^{c}_{\kappa}g(z)\dd z) c \dd c.
\end{eqnarray*}
Reversing the integration order, the above is equal to 
\begin{eqnarray}\label{eq:virtual-cost-1}
-G(\kappa)\gamma_{x(\kappa)}\kappa-G(\kappa)\int_{\kappa}^{\bar{c}}\gamma'_{x(c)}c \dd c-\int_{\kappa}^{\bar{c}}(\int^{\bar{c}}_{c}\gamma'_{x(z)} z\dd z) g(c) \dd c.
\end{eqnarray}
Note that 
\begin{eqnarray*}
\int^{c}_{\bar{c}} \gamma'_{x(z)} z \dd z=\sum_{k\in [i-1]} z_{\ell+1}(\gamma_{\ell+1}-\gamma_{\ell}),
\end{eqnarray*}
where $x(c)=i$. Therefore, and by the fact the $x(c)$ is constant on $(z_{i+1},z_{i}]$, we have that \eqref{eq:virtual-cost-1} is equal to
\begin{eqnarray*}
-G(\kappa)\gamma_{x(\kappa)}\kappa+G(\kappa)\sum_{k\in [n-j-1]} z_{\ell+1}(\gamma_{\ell+1}-\gamma_{\ell})&+&\\
\sum_{i \in [n-j-1]} (G(z_{i})-G(z_{i+1}))\sum_{\ell\in [i-1]} z_{\ell+1}(\gamma_{\ell+1}-\gamma_{\ell})&+&\\ (G(z_{n-j})-G(\kappa))\sum_{\ell\in [n-j-1]} z_{\ell+1}(\gamma_{\ell+1}-\gamma_{\ell}).
\end{eqnarray*}
Which is 
\begin{eqnarray*}
-G(\kappa)\gamma_{x(\kappa)}\kappa+G(z_{n-j})\sum_{\ell\in [n-j-1]} z_{\ell+1}(\gamma_{\ell+1}-\gamma_{\ell})&+&\\
\sum_{i \in [n-j-1]} (G(z_{i})-G(z_{i+1}))\sum_{\ell\in [i-1]} z_{\ell+1}(\gamma_{\ell+1}-\gamma_{\ell}) 
\end{eqnarray*}
Reversing the order of summation, the latter equals 
\begin{eqnarray*}
-G(\kappa)\gamma_{x(\kappa)}\kappa+G(z_{n-j})\sum_{\ell\in [n-j-1]} z_{\ell+1}(\gamma_{\ell+1}-\gamma_{\ell})&+&\\
\sum_{\ell\in [n-j-2]} z_{\ell+1}(\gamma_{\ell+1}-\gamma_{\ell}) \sum_{i=\ell+1}^{n-j-1} (G(z_{i})-G(z_{i+1}))
\end{eqnarray*}
Since $\sum_{i=\ell+1}^{n-j-1} (G(z_{i})-G(z_{i+1}))=G(z_{\ell+1})-G(z_{n-j})$, the above is
\begin{eqnarray*}
-G(\kappa)\gamma_{x(\kappa)}\kappa+G(z_{n-j})[z_{n-j}(\gamma_{n-j}-\gamma_{n-j-1})]+
\sum_{\ell\in [n-j-2]}G(z_{\ell+1}) z_{\ell+1}(\gamma_{\ell+1}-\gamma_{\ell}).
\end{eqnarray*}
That is, 
\begin{eqnarray}\label{eq:virtual-cost-final}
\sum_{\ell\in [n-j-1]}G(z_{\ell+1}) [z_{\ell+1}(\gamma_{\ell+1}-\gamma_{\ell})] -G(\kappa)\gamma_{x(\kappa)}\kappa.
\end{eqnarray}
By the same arguments, the expected reward is equal to 
\begin{eqnarray*}
\sum_{i \in [n-j-1]} (G(z_{i})-G(z_{i+1}))\sum_{k\in [i-1]}(R_{\ell+1}-R_{\ell})+ (G(z_{n-j})-G(\kappa))R_{n-j}.
\end{eqnarray*}
Which, by reversing the order of summation is equal to
\begin{eqnarray*}
\sum_{\ell \in [n-j-2]}(R_{\ell+1}-R_{\ell})\sum_{i=\ell+1}^{n-j-1}(G(z_{i})-G(z_{i+1}))+ (G(z_{n-j})-G(\kappa))R_{n-j}.
\end{eqnarray*}
That is,
\begin{eqnarray*}
\sum_{\ell \in [n-j-2]}(R_{\ell+1}-R_{\ell})[G(z_{\ell+1})-G(z_{n-j})]+ (G(z_{n-j})-G(\kappa))R_{n-j}.
\end{eqnarray*}
Which is, by splitting the first summation,
\begin{eqnarray*}
\sum_{\ell \in [n-j-2]}(R_{\ell+1}-R_{\ell})G(z_{\ell+1})-\sum_{\ell \in [n-j-2]}(R_{\ell+1}-R_{\ell})G(z_{n-j})+ (G(z_{n-j})-G(\kappa))R_{n-j}.
\end{eqnarray*}
The above is equal to
\begin{eqnarray*}
\sum_{\ell \in [n-j-2]}(R_{\ell+1}-R_{\ell})G(z_{\ell+1})-G(z_{n-j})R_{n-j-1}+ (G(z_{n-j})-G(\kappa))R_{n-j}.
\end{eqnarray*}
That is
\begin{eqnarray*}
\sum_{\ell \in [n-j-1]}G(z_{\ell+1})(R_{\ell+1}-R_{\ell})-G(\kappa)R_{n-j}.
\end{eqnarray*}
Combining the above with \eqref{eq:virtual-cost-final}, we have that the expected virtual welfare from types on $[\kappa,\bar{c}]$ is as follows.
\begin{equation*}
\sum_{\ell \in [n-j-1]}G(z_{\ell+1})[R_{\ell+1}-R_{\ell}-z_{\ell+1}(\gamma_{\ell+1}-\gamma_{\ell})]-G(\kappa)(R_{n-j}-\gamma_{x(\kappa)}\kappa). \qedhere
\end{equation*}
\end{proof}

\subsection{Tight Approximation for General Distributions
}\label{appx:apprx-n}

In this section we complete the proofs for the approximation results on general distributions. 
The proof of Theorem~\ref{thm:upper-bound-n} relies on the following claim.

\begin{claim}\label{appx:claim-opt-sum}
Consider $x_\vwel$, and $z_{\vwel,i}$ $\forall i\in [n+1]$ as defined in the proof of Theorem~\ref{thm:lin-bounded-const-apx}.
\begin{eqnarray}
\vwel &=&
\sum_{i\in [n-1]}G(z_{\vwel,i+1})[R_{i+1}-R_{i}-z_{\vwel,i+1}(\gamma_{i+1}-\gamma_{i})].
\end{eqnarray}
\end{claim}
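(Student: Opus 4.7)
The plan is to derive this identity by specializing the already-established Claim~\ref{appx:claim-opt-sum-kappa} to the case $\kappa = \underline{c}$, with the surviving boundary terms vanishing for free.

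Concretely, I would invoke Claim~\ref{appx:claim-opt-sum-kappa} with $\kappa = \underline{c} = z_{\vwel,n+1}$. Under this choice the index $n-j$ (defined there as the smallest-valued breakpoint that is still above $\kappa$) collapses to $n+1$, so the sum over $i\in[n-j-1]=[n]$ picks up one extra index $i=n$ compared to the target. That extra term has factor $G(z_{\vwel,n+1})=G(\underline{c})=0$ and is killed; similarly, the subtracted correction $G(\kappa)(R_{n-j}-\gamma_{x(\kappa)}\kappa)$ vanishes because $G(\underline{c})=0$. What remains is precisely $\sum_{i\in[n-1]} G(z_{\vwel,i+1})[R_{i+1}-R_i - z_{\vwel,i+1}(\gamma_{i+1}-\gamma_i)]$, matching the claim.

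A self-contained alternative is to mirror the Abel-summation argument in the proof of Claim~\ref{appx:claim-opt-sum-kappa}: split $\vwel = \sum_{i=0}^{n}\int_{z_{\vwel,i+1}}^{z_{\vwel,i}} g(c)[R_i - \gamma_i\,\ironed(c)]\,\dd c$ into its reward and virtual-cost components; apply Abel summation to $\sum_{i} R_i[G(z_{\vwel,i})-G(z_{\vwel,i+1})]$, using $R_0=0$, $G(z_{\vwel,0})=1$, and $G(z_{\vwel,n+1})=0$ to collapse the sum onto the range $i\in[n-1]$; compute $\int_{z_{\vwel,i+1}}^{z_{\vwel,i}} g(c)\ironed(c)\,\dd c = z_{\vwel,i}G(z_{\vwel,i}) - z_{\vwel,i+1}G(z_{\vwel,i+1})$ via the antiderivative $c\,G(c)$ (this is valid across an ironed interval because $x_{\vwel}$ is constant on it, and on non-ironed pieces $g\,\ironed = g\,\virtual = (cG)'$); and finally apply Abel summation a second time using $\gamma_0=0$ and $G(z_{\vwel,n+1})=0$ to reduce the virtual-cost sum onto $i\in[n-1]$. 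Subtracting the two reduced sums yields the stated identity.

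The main obstacle is purely bookkeeping: ensuring both Abel-summation calculations land on the same range $i\in[n-1]$ and that the conventions $z_{\vwel,0}=\bar{c}$, $z_{\vwel,n+1}=\underline{c}$, $R_0=0$, $\gamma_0=0$, and $G(\underline{c})=0$ are invoked correctly at both endpoints. No new mathematical ingredient is required beyond the argument already used for Claim~\ref{appx:claim-opt-sum-kappa}; the current statement is essentially its clean boundary-case specialization.
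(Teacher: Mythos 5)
Your overall instinct is right: Claim~\ref{appx:claim-opt-sum} should indeed be the $\kappa=\underline{c}$ boundary case of Claim~\ref{appx:claim-opt-sum-kappa}, and your second, self-contained argument (double Abel summation / integration by parts with $(cG(c))'$ as antiderivative) is essentially the paper's own proof. So you have both the shortcut and the fallback.

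However, the index bookkeeping in your first route is off by one, and the way you have written it the argument does not quite hold up. You set $n-j = n+1$ at $\kappa = \underline{c}$, which would make the sum in Claim~\ref{appx:claim-opt-sum-kappa} range over $i\in[n]$ and thus include a term with $R_{n+1}$ and $\gamma_{n+1}$; these quantities are not defined (there is no action $n+1$). Saying that term ``has factor $G(z_{\vwel,n+1})=0$ and is killed'' is not a valid rescue, because $0$ times an undefined expression is still undefined, and the correction term $G(\kappa)\bigl(R_{n-j}-\gamma_{x(\kappa)}\kappa\bigr)$ would suffer the same problem. The intended reading, consistent with how the paper uses $n-j$ (it writes $n-j\in\arg\max_{i\in[n]}\{R_i-\gamma_i\ironed(\kappa)\}$, i.e.\ $n-j=x_{\vwel}(\kappa)$), is that $z_{\vwel,n-j}$ is the smallest breakpoint \emph{strictly} above $\kappa$. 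At $\kappa=\underline{c}$ this gives $n-j=n$, so $[n-j-1]=[n-1]$ already matches the target range with no extra term to discard, and the correction term is $G(\underline{c})\bigl(R_n-\gamma_{x(\underline{c})}\underline{c}\bigr)=0$ because $G(\underline{c})=0$ while $R_n$ and $\gamma_{x(\underline{c})}=\gamma_n$ are perfectly well defined. With that one fix the specialization goes through cleanly and is arguably tidier than the paper's re-derivation.

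Compared to the paper: the paper proves Claim~\ref{appx:claim-opt-sum} from scratch with its own integration-by-parts plus summation-by-parts computation, not by citing Claim~\ref{appx:claim-opt-sum-kappa}. Your first route is genuinely different and buys you brevity, at the cost of a notational subtlety at the breakpoint $\kappa=\underline{c}$; your second route coincides with the paper's.
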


\begin{proof}
We will use $x$ instead of $x_{\vwel}$
and $z_i$ instead of $z_{\vwel,i}$ for simplicity.
We will show that the expected virtual cost is equal to
\begin{eqnarray*}
\sum_{i\in [n-1]}G(z_{\vwel,i+1})[z_{\vwel,i+1}(\gamma_{i+1}-\gamma_{i})].
\end{eqnarray*}
First, note that 
\begin{eqnarray*}
\int_{\underline{c}}^{\bar{c}}\gamma_{x(c)}g(c)\ironed(c) \dd c=\int_{\underline{c}}^{\bar{c}}\gamma_{x(c)}(cg(c)+G(c)) \dd c.
\end{eqnarray*}
Using integration by parts having $(g(c)c+G(c))$ as $(G(c)c)'$, the latter is equal to
\begin{eqnarray*}
G(c)c\gamma_{x(c)}\mid^{\bar{c}}_{\underline{c}}-\int_{\underline{c}}^{\bar{c}}\gamma'_{x(c)}G(c)c \dd c
\end{eqnarray*}
Since $x(\bar{c})=0$ and $G(\underline{c})=0$, the first term equals zero
and hence the above equals $-\int_{\underline{c}}^{\bar{c}}\gamma'_{x(c)}G(c)c \dd c$. 
Using $G(c)=\int^{c}_{\underline{c}}g(z)\dd z$, the latter is $-\int_{\underline{c}}^{\bar{c}}(\int^{c}_{\underline{c}}g(z)\dd z) \gamma'_{x(c)} c \dd c$. Reversing the integration order, $-\int_{\underline{c}}^{\bar{c}}(\int_{c}^{\bar{c}} \gamma'_{x(z)} z \dd z) g(c) \dd c$. That is, 
\begin{eqnarray*}
\int_{\underline{c}}^{\bar{c}}\rbr{\int^{c}_{\bar{c}} \gamma'_{x(z)} z \dd z} g(c) \dd c
\end{eqnarray*}
Note that 
\begin{eqnarray*}
\int^{c}_{\bar{c}} \gamma'_{x(z)} z \dd z=\sum_{k\in [i]} z_{k+1}(\gamma_{k+1}-\gamma_{k}),
\end{eqnarray*}
where $x(c)=x(z_{i+1})$. Therefore, and by the fact the $x(c)$ is constant on $(z_i,z_{i+1}]$, we have that the expected virtual cost is equal to
\begin{eqnarray*}
\int_{\underline{c}}^{\bar{c}}\gamma_{x(c)}g(c)\ironed(c) \dd c=\sum_{i\in [n]}(G(z_{i+1})-G(z_i)) \sum_{k\in [i]} z_{k+1}(\gamma_{k+1}-\gamma_{k})
\end{eqnarray*}
Reversing the order of summation, the latter is equal to 
\begin{eqnarray*}
\sum_{k\in [n]}z_{k+1}(\gamma_{k+1}-\gamma_{k})\sum_{i\in [k]}(G(z_{i+1})-G(z_i))  
\end{eqnarray*}
Since $G(z_0)=0$, this is equal to $\sum_{k\in [n]}G(z_{k+1})[z_{k+1}(\gamma_{k+1}-\gamma_{k})]$. 
Replacing $i$ with $k$ we have that the virtual cost is equal to $\sum_{i\in [n]}G(z_{i+1})[z_{i+1}(\gamma_{i+1}-\gamma_{i})]$. By the same arguments, the expected reward equals
\begin{align*}
\sum_{i\in [n]}(G(z_{i+1})-G(z_i)) \sum_{k\in [i]} (R_{k+1}-R_{k})
&= \sum_{k\in [n]} (R_{k+1}-R_{k}) \sum_{i\in [k]}  (G(z_{i+1})-G(z_i)) \\
&= \sum_{k\in [n]} G(z_{k+1})(R_{k+1}-R_{k}),
\end{align*}
where the first equality holds by reversing the order of summation.
Therefore, the expected virtual welfare is $\sum_{i\in [n]} G(z_{i+1})[(R_{i+1}-R_{i})-z_{i+1}(\gamma_{i+1}-\gamma_{i})]$ as desired.
\end{proof}

\begin{proof}[Proof of Theorem~\ref{thm:upper-bound-n}]
Denote by $\vwel$ the optimal expected virtual welfare, and by $\apx$ the expected revenue of the optimal linear contract; our goal is to show that $\apx \geq \frac{1}{n} \vwel$. 

Let $\alpha \in [0,1]$. 
Similarly to the proof of Theorem~\ref{thm:lin-bounded-const-apx}, 
let $\{z_{\wel,i}\},\{z_{\vwel,i}\},$ and $\{z_{\apx,i}\}$ be the breakpoints for welfare maximizing, virtual welfare maximizing and linear contract with parameter $\alpha$ respectively. We will assume similarly that the number of breakpoints is $n+2$. Using the same arguments, we have that $\forall i\in [n+1]:$
\begin{eqnarray*}
z_{\vwel,i} = \ironed^{-1}(z_{\wel,i}),& z_{\apx,i} = \alpha z_{\wel,i}.
\end{eqnarray*}
From Claim~\ref{appx:claim-opt-sum} we have that
\begin{eqnarray*}
\vwel &=&
\sum_{i\in [n-1]}G(z_{\vwel,i+1})[R_{i+1}-R_{i}-z_{\vwel,i+1}(\gamma_{i+1}-\gamma_{i})].
\end{eqnarray*}
Thus, there exists a breakpoint $k+1\in [n+1]$ such that 
\begin{eqnarray}\label{eq:k-action-bound}
G(z_{\vwel,k+1})[R_{k+1}-R_{k}-z_{\vwel,k+1}(\gamma_{k+1}-\gamma_{k})]\geq \frac{1}{n}\vwel 
\end{eqnarray}
Take $\alpha\in [0,1]$ such that $\alpha z_{\wel,k+1} =\ironed^{-1}(z_{\wel,k+1})$.\footnote{Note that there exists such $\alpha$ since $c \leq \ironed(c)$ $\forall c\in C$, or equivalently $\ironed^{-1}(c) \leq c$.} Consider the revenue of a linear contract with parameter $\alpha.$ By Claim~\ref{appx:claim-apx-sum}, the revenue of this contract is 
\begin{eqnarray*}
\apx &=&\sum_{i\in [n-1]}G(z_{\apx,i+1})(1-\alpha)(R_{i+1}-R_{i}).
\end{eqnarray*}
Note that we can split the above sum as follows.
\begin{eqnarray}\label{eq:apx-n-sum}
\apx &=& G(z_{\apx,k+1})(1-\alpha)(R_{k+1}-R_{k})\nonumber\\
&+& \sum_{k \neq i\in [n-1]}G(z_{\apx,i+1})(1-\alpha)(R_{i+1}-R_{i}).
\end{eqnarray}
Further, note that since $z_{\vwel,k+1}=\ironed^{-1}(z_{\wel,k+1})=\alpha z_{\wel,k+1}$ by out choice of $\alpha$ we can replace $z_{\vwel,k+1}$ in \eqref{eq:k-action-bound} with $\alpha z_{\wel,k+1}$ to have
\begin{eqnarray*}
G(\alpha z_{\wel,k+1})[R_{k+1}-R_{k}-\alpha z_{\wel,k+1}(\gamma_{k+1}-\gamma_{k})]\geq \frac{1}{n}\vwel 
\end{eqnarray*}
As shown in the proof of Theorem~\ref{thm:lin-bounded-const-apx}, it follows from Definition~\ref{def:piecewise-constant} and the definition of $x_{\wel}(c)$ that 
$\alpha (R_{k+1}-R_{k})=\alpha\cdot z_{\wel,k+1}(\gamma_{k+1}-\gamma_{k})$. 
Combined with the inequality above, we have 
\begin{eqnarray*}
G(z_{\apx,k+1})(1-\alpha)[R_{k+1}-R_{k}]\geq \frac{1}{n}\vwel.
\end{eqnarray*}
Therefore, we can use the above to lower bound the first term in \eqref{eq:apx-n-sum}, and since the second term is non-negative we have that $\apx\geq \frac{1}{n}\vwel$ as desired.
\end{proof}

\bibliographystyle{apalike}
\bibliography{bibliography}

\begin{thebibliography}{}

\bibitem[Akbarpour et~al., 2023]{akbarpouralgorithmic}
Akbarpour, M., Kominers, S.~D., Li, K.~M., Li, S., and Milgrom, P. (2023).
\newblock Algorithmic mechanism design with investment.
\newblock {\em Econometrica}, 91(6):1969--2003.

\bibitem[Alon et~al., 2021]{AlonDT21}
Alon, T., D{\"{u}}tting, P., and Talgam{-}Cohen, I. (2021).
\newblock Contracts with private cost per unit-of-effort.
\newblock In {\em Proceedings of the {ACM} Conference on Economics and Computation ({EC})}, pages 52--69.

\bibitem[Babaioff et~al., 2012]{Babaioff2006combinatorial}
Babaioff, M., Feldman, M., Nisan, N., and Winter, E. (2012).
\newblock Combinatorial agency.
\newblock {\em Journal of Economic Theory}, 147(3):999--1034.

\bibitem[Babaioff et~al., 2017]{BabaioffGN17}
Babaioff, M., Gonczarowski, Y.~A., and Nisan, N. (2017).
\newblock The menu-size complexity of revenue approximation.
\newblock In {\em Proceedings of the {ACM} Symposium on Theory of Computing (STOC)}, pages 869--877.

\bibitem[Balamceda et~al., 2016]{BalamcedaEtAl16}
Balamceda, F., Balseiro, S., Correa, J., and Stier-Moses, N.~E. (2016).
\newblock Bounds on the welfare loss from moral hazard with limited liability.
\newblock {\em Games and Economic Behavior}, 95:137--155.

\bibitem[Bulow and Roberts, 1989]{bulow1989simple}
Bulow, J. and Roberts, J. (1989).
\newblock The simple economics of optimal auctions.
\newblock {\em Journal of Political Economy}, 97(5):1060--1090.

\bibitem[Carroll, 2015]{carol-robust-linear}
Carroll, G. (2015).
\newblock Robustness and linear contracts.
\newblock {\em American Economic Review}, 105(2):536--563.

\bibitem[Castiglioni et~al., 2025]{castiglioni2025reduction}
Castiglioni, M., Chen, J., Li, M., Xu, H., and Zuo, S. (2025).
\newblock A reduction from multi-parameter to single-parameter bayesian contract design.
\newblock In {\em Proceedings of the 2025 Annual ACM-SIAM Symposium on Discrete Algorithms (SODA)}, pages 1795--1836. SIAM.

\bibitem[Castiglioni et~al., 2021]{CastiglioniMG22}
Castiglioni, M., Marchesi, A., and Gatti, N. (2021).
\newblock Bayesian agency: Linear versus tractable contracts.
\newblock In {\em Proceedings of the ACM Conference on Economics and Computation (EC)}, pages 285--286.

\bibitem[Castiglioni et~al., 2022]{castiglioni2022designing}
Castiglioni, M., Marchesi, A., and Gatti, N. (2022).
\newblock Designing menus of contracts efficiently: The power of randomization.
\newblock In {\em Proceedings of the {ACM} Conference on Economics and Computation {(EC)}}, pages 705--735.

\bibitem[Castro-Pires et~al., 2024]{castro2024disentangling}
Castro-Pires, H., Chade, H., and Swinkels, J. (2024).
\newblock Disentangling moral hazard and adverse selection.
\newblock {\em American economic review}, 114(1):1--37.

\bibitem[Dai and Toikka, 2022]{DaiT22}
Dai, T. and Toikka, J. (2022).
\newblock Robust incentives for teams.
\newblock {\em Econometrica}, 90:1583--1613.

\bibitem[Devanur et~al., 2016]{devanur2016sample}
Devanur, N.~R., Huang, Z., and Psomas, C.-A. (2016).
\newblock The sample complexity of auctions with side information.
\newblock In {\em Proceedings of the ACM Symposium on Theory of Computing (STOC)}, pages 426--439.

\bibitem[Diamond, 1998]{Diamond98}
Diamond, P. (1998).
\newblock Managerial incentives: On the near linearity of optimal compensation.
\newblock {\em Journal of Political Economy}, 106(5):931--57.

\bibitem[D{\"{u}}tting et~al., 2021a]{DuttingEFK21}
D{\"{u}}tting, P., Ezra, T., Feldman, M., and Kesselheim, T. (2021a).
\newblock Combinatorial contracts.
\newblock In {\em Proceedings of the {IEEE} Symposium on Foundations of Computer Science {(FOCS)}}, pages 815--826.

\bibitem[D\"utting et~al., 2023]{DuettingEFK23}
D\"utting, P., Ezra, T., Feldman, M., and Kesselheim, T. (2023).
\newblock Multi-agent contracts.
\newblock In {\em Proceedings of the ACM Symposium on Theory of Computing (STOC)}, pages 1311--1324.

\bibitem[D{\"{u}}tting et~al., 2025]{DuttingEFK25}
D{\"{u}}tting, P., Ezra, T., Feldman, M., and Kesselheim, T. (2025).
\newblock Multi-agent combinatorial contracts.
\newblock In {\em Proceedings of the 2025 Annual {ACM-SIAM} Symposium on Discrete Algorithms ({SODA})}, pages 1857--1891.

\bibitem[D{\"u}tting et~al., 2024]{dutting2024ambiguous}
D{\"u}tting, P., Feldman, M., Peretz, D., and Samuelson, L. (2024).
\newblock Ambiguous contracts.
\newblock {\em Econometrica}, 92(6):1967--1992.

\bibitem[D{\"{u}}tting et~al., 2019]{dutting2020simple}
D{\"{u}}tting, P., Roughgarden, T., and Talgam{-}Cohen, I. (2019).
\newblock Simple versus optimal contracts.
\newblock In {\em Proceedings of the {ACM} Conference on Economics and Computation ({EC})}, pages 369--387.

\bibitem[D{\"{u}}tting et~al., 2021b]{DuttingRT21}
D{\"{u}}tting, P., Roughgarden, T., and Talgam{-}Cohen, I. (2021b).
\newblock The complexity of contracts.
\newblock {\em {SIAM} Journal on Computing}, 50(1):211--254.

\bibitem[Frick et~al., 2023]{FII-23}
Frick, M., Ryota, I., and Ishii, Y. (2023).
\newblock Monitoring with rich data.
\newblock Working paper.

\bibitem[Gottlieb and Moreira, 2022]{gottlieb2022simple}
Gottlieb, D. and Moreira, H. (2022).
\newblock Simple contracts with adverse selection and moral hazard.
\newblock {\em Theoretical Economics}, 17:1357--1401.

\bibitem[Grossman and Hart, 1983]{GrossmanHart83}
Grossman, S.~J. and Hart, O.~D. (1983).
\newblock An analysis of the principal-agent problem.
\newblock {\em Econometrica}, 51(1):7--45.

\bibitem[Guruganesh et~al., 2021]{guruganesh20}
Guruganesh, G., Schneider, J., and Wang, J. (2021).
\newblock Contracts under moral hazard and adverse selection.
\newblock In {\em Proceedings of the {ACM} Conference on Economics and Computation ({EC})}, pages 563--582.

\bibitem[Hart and Nisan, 2019]{menu-complexity}
Hart, S. and Nisan, N. (2019).
\newblock Selling multiple correlated goods: Revenue maximization and menu-size complexity.
\newblock {\em Journal of Economic Theory}, 183:991--1029.

\bibitem[Hart and Reny, 2015]{nonmonotonicty}
Hart, S. and Reny, P. (2015).
\newblock Maximal revenue with multiple goods: Nonmonotonicity and other observations.
\newblock {\em Theoretical Economics}, 10:893--922.

\bibitem[Hartline, 2012]{hartline2012approximation}
Hartline, J.~D. (2012).
\newblock Approximation in mechanism design.
\newblock {\em American Economic Review}, 102(3):330--336.

\bibitem[Hartline and Lucier, 2015]{hartline2015non}
Hartline, J.~D. and Lucier, B. (2015).
\newblock Non-optimal mechanism design.
\newblock {\em American Economic Review}, 105(10):3102--3124.

\bibitem[Holmstr\"om, 1979]{Holmstrom79}
Holmstr\"om, B. (1979).
\newblock Moral hazard and observability.
\newblock {\em The Bell Journal of Economics}, 10(1):74--91.

\bibitem[Holmstr\"om and Milgrom, 1987]{milgrom-holmstrom87}
Holmstr\"om, B. and Milgrom, P. (1987).
\newblock Aggregation and linearity in the provision of intertemporal incentives.
\newblock {\em Econometrica}, 55:303--328.

\bibitem[Innes, 1990]{innes1990limited}
Innes, R.~D. (1990).
\newblock Limited liability and incentive contracting with ex-ante action choices.
\newblock {\em Journal of Economic Theory}, 52(1):45--67.

\bibitem[Kambhampati, 2023]{kambhampati2023randomization}
Kambhampati, A. (2023).
\newblock Randomization is optimal in the robust principal-agent problem.
\newblock {\em Journal of Economic Theory}, 207:105585.

\bibitem[Kambhampati et~al., 2025]{kambhampati2025randomization}
Kambhampati, A., Peng, B., Tang, Z.~G., Toikka, J., and Vohra, R. (2025).
\newblock Randomization and the robustness of linear contracts.
\newblock {\em Working paper}.

\bibitem[Li, 2022]{li2021selling}
Li, Y. (2022).
\newblock Selling data to an agent with endogenous information.
\newblock In {\em Proceedings of the {ACM} Conference on Economics and Computation {(EC)}}, pages 664--665.

\bibitem[Milgrom and Segal, 2002]{MilgromS02}
Milgrom, P. and Segal, I. (2002).
\newblock Envelope theorems for arbitrary choice sets.
\newblock {\em Econometrica}, 70(2):583--601.

\bibitem[Myerson, 1981]{myerson1981optimal}
Myerson, R.~B. (1981).
\newblock Optimal auction design.
\newblock {\em Mathematics of Operations Research}, 6(1):58--73.

\bibitem[Myerson, 1982]{Myerson82}
Myerson, R.~B. (1982).
\newblock Optimal coordination mechanisms in generalized principal-agent problems.
\newblock {\em Journal of Mathematical Economics}, 10:67--81.

\bibitem[Roughgarden and Talgam-Cohen, 2019]{roughgarden2019approximately}
Roughgarden, T. and Talgam-Cohen, I. (2019).
\newblock Approximately optimal mechanism design.
\newblock {\em Annual Review of Economics}, 11:355--381.

\bibitem[Spielman and Teng, 2004]{spielman2004smoothed}
Spielman, D.~A. and Teng, S.-H. (2004).
\newblock Smoothed analysis of algorithms: Why the simplex algorithm usually takes polynomial time.
\newblock {\em Journal of the ACM}, 51(3):385--463.

\bibitem[Walton and Carroll, 2022]{WaltonC22}
Walton, D. and Carroll, G. (2022).
\newblock A general framework for robust contracting models.
\newblock {\em Econometrica}, 90(5):2129--2159.

\bibitem[Yang, 2022]{yang2022selling}
Yang, K.~H. (2022).
\newblock Selling consumer data for profit: Optimal market-segmentation design and its consequences.
\newblock {\em American Economic Review}, 112(4):1364--93.

\bibitem[Yu and Kong, 2020]{YuK20}
Yu, Y. and Kong, X. (2020).
\newblock Robust contract designs: Linear contracts and moral hazard.
\newblock {\em Operations Research}, 68(5):1457--1473.

\end{thebibliography}

\newpage
\begin{center}
\large
\MakeUppercase{Online Appendix}
\end{center}
\section{Instantiations of Approximation Guarantees}
\label{appx:implications}

We provide proofs for \cref{cor:wel-implications} and \cref{cor:rev-implications}.

\begin{proof}[Proof of Corollary~\ref{cor:wel-implications}] For any $\kappa>1$, let $\alpha =\frac{\kappa+1}{2\kappa}$ and $\beta =\frac{1}{2}$. 
We show that $(\alpha,\beta,\kappa\underline{c})$-slowly increasing condition is satisfied. 
Note that for any $c\geq \kappa\underline{c}$, 
since $(2\alpha -1)c= (\frac{\kappa+1}{\kappa}-1)c\geq \underline{c}$, we have that $\alpha c-\underline{c} \geq c-\alpha c$, 
which further implies that the total probability in $[\underline{c},\alpha c]$ is larger than that in $[\alpha c,c]$
since $g(c)$ is non-increasing. 
Therefore, we have 
\begin{eqnarray*}
G(\alpha c)=\int^{\alpha c}_{\underline{c}}g(z)\dd z\geq \int^{c}_{\alpha c}g(z)\dd z=G(c)-G(\alpha c) & \forall c \geq \kappa.
\end{eqnarray*}
By adding $G(\alpha c)$ to both sides of the above inequality, we have $\beta=\frac{1}{2}$.
By \cref{thm:slow}, we have an approximation guarantee of $\frac{1}{(1-\frac{\kappa+1}{2\kappa})\frac{1}{2}\eta}=\frac{4\kappa}{(\kappa-1)\eta}$ as desired.
\end{proof}

\begin{proof}[Proof of Corollary~\ref{cor:rev-implications}.\ref{item:rev-non-increasing-dense}] We show that $(\frac{\kappa}{2\kappa-1},0,\kappa\underline{c})$-linear boundedness is satisfied. 
Note that  $\varphi(c)=\int^c_{\underline{c}}\varphi'(z)\dd z + \virtual(\underline{c})$,
$\virtual(\underline{c}) = \underline{c}$,
and $\varphi'(z)=2-\frac{G(z)g'(z)}{g^2(z)}\geq 2$ because of decreasing density. 
Therefore, the virtual cost $\varphi(c)\geq 2(c-\underline{c})+\underline{c}=2c-\underline{c}$. 
For any $c\geq \kappa\underline{c}$, it can be verified that the latter is at least $\frac{2\kappa -1}{\kappa}\cdot c$. 
By \cref{thm:lin-bounded-const-apx}, we have an approximation guarantee of $\frac{1}{\eta (1-\frac{\kappa}{2\kappa -1})} = \frac{2\kappa -1}{\eta (\kappa -1)}$ as desired.
\end{proof}

\begin{proof}[Proof of Corollary~\ref{cor:rev-implications}.\ref{item:rev-uniform}]
The virtual cost for uniform distribution $U[0,\bar{c}]$ is $\varphi(c)=2c.$ Corollary~\ref{cor:rev-implications}.\ref{item:rev-uniform} is therefore implied by Theorem~\ref{thm:lin-bounded-const-apx}. 
\end{proof}

\begin{proof}[Proof of Corollary~\ref{cor:rev-implications}.\ref{item:rev-normal}]
Consider the normal distribution $\mathcal{N}(\mu,\sigma^2)$ defined by the following CDF and PDF respectively.
\begin{eqnarray*}
G(c)=\frac{1}{2}[1+\erf(\frac{c-\mu}{\sigma \sqrt{2}})], & g(c)=\frac{1}{\sigma \sqrt{2\pi}} e^{-\frac{1}{2}(\frac{c-\mu}{\sigma})^2} & \forall c\geq 0,
\end{eqnarray*}
where $\erf(c)$ is the Gauss error function. By definition of the virtual cost, $\varphi(c)$ is as follows.\footnote{Note that this distribution has a point mass at type $c=0$. }
\begin{eqnarray}\label{eq:virtual-cost-normal}
\varphi(c)=c+\frac{\frac{1}{2}[1+\erf(\frac{c-\mu}{\sigma \sqrt{2}})]}{\frac{1}{\sigma \sqrt{2\pi}} e^{-\frac{1}{2}(\frac{c-\mu}{\sigma})^2}}=c+\frac{\sigma\sqrt{\pi}}{\sqrt{2}}[1+\erf(\frac{c-\mu}{\sigma \sqrt{2}})]e^{\frac{1}{2}(\frac{c-\mu}{\sigma})^2}.
\end{eqnarray}
Note that since $c\geq 0$, and by the assumption that $\frac{2\sqrt{2}}{5}\sigma\geq \mu$ it holds that $\frac{c-\mu}{\sigma \sqrt{2}}\geq -\frac{2}{5}$. Therefore, and since $\erf(\cdot)$ is increasing $\erf({\frac{c-\mu}{\sigma \sqrt{2}}})\geq\erf(-\frac{2}{5}) \geq -\frac{1}{2}$. Combining this with~\eqref{eq:virtual-cost-normal}, we have that 
\begin{eqnarray*}
\varphi(c)\geq c+\frac{1}{2}\frac{\sigma\sqrt{\pi}}{\sqrt{2}}e^{\frac{1}{2}(\frac{c-\mu}{\sigma})^2}\geq c+\frac{\sigma\sqrt{\pi}}{2\sqrt{2}}(1+\frac{1}{2}(\frac{c-\mu}{\sigma})^2),
\end{eqnarray*}
where the last inequality follows from the fact that $e^x\geq x+1$ $\forall x\geq 0$. Rearranging the above, we have that
\begin{eqnarray*}
\varphi(c)\geq c+\frac{\sigma\sqrt{\pi}}{2\sqrt{2}}(1+\frac{c^2-2\mu c+\mu^2}{2\sigma^2}),
\end{eqnarray*}
To show that $(\frac{2}{3},0,0)$-linear boundedness is satisfied we need to show that $\varphi(c)\geq \frac{3}{2}c.$ Using the above, it suffices to show that \begin{eqnarray*}
\frac{\sigma\sqrt{\pi}}{2\sqrt{2}}(1+\frac{c^2-2\mu c+\mu^2}{2\sigma^2})\geq \frac{1}{2}c.
\end{eqnarray*}
By rearranging, this holds if and only if $\sigma\sqrt{\pi}+ \sigma\sqrt{\pi}(\frac{c^2-2\mu c+\mu^2}{2\sigma^2})\geq \sqrt{2} c.$ Which is equal to $\sigma\sqrt{\pi}+ \sqrt{\pi}(\frac{c^2-2\mu c+\mu^2}{2\sigma})\geq \sqrt{2} c.$ The latter holds if and only if $2\sigma^2\sqrt{\pi}+ \sqrt{\pi}{(c^2-2\mu c+\mu^2)}\geq 2\sqrt{2} c\sigma$. That is, $\sqrt{\pi} c^2 -(2\mu \sqrt{\pi}+2\sqrt{2} \sigma )c+\sqrt{\pi}\mu^2
+2\sigma^2\sqrt{\pi} \geq 0.$ Note that the left hand side can be viewed as a function of $c$. This function has no roots if $(2\mu \sqrt{\pi}+2\sqrt{2} \sigma )^2-4\sqrt{\pi}(\sqrt{\pi}\mu^2
+2\sigma^2\sqrt{\pi})\leq 0$. Which holds if and only if $\mu^2 \pi+2 \mu \sqrt{\pi} \sqrt{2} \sigma+ 2 \sigma^2 \leq \pi \mu^2
+2\sigma^2 \pi.$ Rearranging the latter we have $2 \mu \sqrt{\pi} \sqrt{2} \sigma+ 2 \sigma^2 \leq
2\sigma^2 \pi$. Dividing by $2\sigma$, this is equal to $\mu \sqrt{2\pi} +  \sigma \leq 
\sigma \pi$. Which holds if and only if $\mu \frac{\sqrt{2\pi}}{(\pi -1)}  \leq 
\sigma $. Since $\frac{\sqrt{2\pi}}{(\pi -1)}\leq \frac{5}{2\sqrt{2}}$ and by the assumption that $\mu \frac{5}{2\sqrt{2}} \leq 
\sigma$ we have that $\varphi(c)\geq \frac{3}{2}c$ and the proof is complete.
\end{proof}

\section{Smoothed Analysis of Linear Contracts}
\label{appx:smoothed}
 
We give here an alternative perspective of our main result through the smoothed-analysis framework: The result that linear contracts work well for principal-agent instances in which the agent’s type distribution is not point-mass-like can also be viewed as stating that linear contracts work well for ``smoothed’’ distributions. 

In particular, adopting the same technique of slowly-increasing CDFs, 
we develop a smoothed analysis approach for contract design. 
We show that constant approximations with linear contracts can be obtained by slightly perturbing any known agent type (equivalently, any point mass distribution) with a uniform noise. 
The exact approximation ratio will be parameterized by the level of noise $\epsilon$ 
in the perturbed distribution. 
To simplify the exposition, we normalize the instance such that the known type is $1$.

\begin{proposition}[Smoothed analysis] 
\label{prop:smooth}
For any $\epsilon\in(0,1)$, consider any distribution $G$ such that with probability $1-\epsilon$, it is a point mass at $1$, 
and with probability $\epsilon$ it is drawn from the uniform distribution on $[0,2]$. 
A linear contracts achieves a $\frac{4(2-\epsilon)}{\epsilon}$-approximation to the optimal welfare for distribution~$G$.
\end{proposition}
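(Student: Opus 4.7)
I would apply Theorem~\ref{thm:slow} with $\alpha=1/2$, $\kappa=\underline{c}=0$, and $\eta=1$. The $(\kappa,\eta)$-small-tail condition becomes trivial with these parameters, since $\wel_{[0,2]}$ is by definition the total welfare $\wel_{[\underline{c},\bar c]}$. The substantive task is therefore to identify the largest $\beta$ for which $G$ is $(1/2,\beta,0)$-slowly-increasing, and then read off the approximation factor $\tfrac{1}{(1-\alpha)\beta\eta}$ from Theorem~\ref{thm:slow}.

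To compute $\beta$, I first write the CDF of $G$ explicitly:
\begin{align*}
G(c)=\begin{cases}\epsilon c/2, & c\in[0,1),\\ (1-\epsilon)+\epsilon c/2, & c\in[1,2],\end{cases}
\end{align*}
with a jump of size $1-\epsilon$ at $c=1$. I then examine the ratio $G(c/2)/G(c)$ in three regimes. For $c\in(0,1)$, both arguments sit below the jump, so the ratio equals $1/2$ exactly. At $c=1$, the ratio is $\tfrac{\epsilon/4}{1-\epsilon/2}=\tfrac{\epsilon}{2(2-\epsilon)}$. For $c\in(1,2)$, where $c/2\in(1/2,1)$, the ratio becomes $\tfrac{\epsilon c/4}{(1-\epsilon)+\epsilon c/2}$, which a routine derivative computation shows is strictly increasing in $c$; at $c=2$ it jumps to $(1-\epsilon/2)/1=(2-\epsilon)/2$. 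Since $\tfrac{\epsilon}{2(2-\epsilon)}<1/2<\tfrac{2-\epsilon}{2}$ for $\epsilon\in(0,1)$, the minimum across all three regimes is $\beta=\tfrac{\epsilon}{2(2-\epsilon)}$, attained at $c=1$. Plugging $(\alpha,\beta,\eta)=\bigl(1/2,\tfrac{\epsilon}{2(2-\epsilon)},1\bigr)$ into Theorem~\ref{thm:slow} yields the approximation ratio
\begin{align*}
\frac{1}{(1-1/2)\cdot\frac{\epsilon}{2(2-\epsilon)}\cdot 1}=\frac{4(2-\epsilon)}{\epsilon},
\end{align*}
as claimed.

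The one technical obstacle is that $G$ has an atom at $c=1$ whereas the model formally assumes a density $g$. The cleanest workaround is to replace the point mass by a narrow uniform density $\tfrac{1-\epsilon}{2\delta}$ on $[1-\delta,1+\delta]$, apply Theorem~\ref{thm:slow} to the smoothed instance, and observe that both the slowly-increasing constant $\beta$ computed above and the resulting approximation guarantee are uniform in $\delta$, so the bound passes to the $\delta\to 0$ limit. Equivalently, one inspects the proof in Appendix~\ref{apx:slow_increasing} and replaces $g(c)\,dc$ by $dG(c)$ throughout; the integration-by-parts step goes through without modification for general distributions with atoms, and the tightest ratio remains the one attained at the atom $c=1$.
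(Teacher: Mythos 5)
Your proposal is correct and takes essentially the same route as the paper's own proof: apply Theorem~\ref{thm:slow} with $\alpha=1/2$, $\kappa=0$, $\eta=1$, and $\beta=\tfrac{\epsilon}{2(2-\epsilon)}$. The paper simply asserts that the mixture is $(\tfrac{1}{2},\tfrac{\epsilon}{2(2-\epsilon)},0)$-slowly-increasing without detail; your explicit three-regime computation of the CDF ratio and the limiting argument for the atom at $c=1$ merely fill in rigor the paper leaves implicit.
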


\begin{proof}
Note that any principal-agent instance with lowest support $\underline{c}=0$ for type distribution $G$ has $(0,1)$-thin-tail.
Moreover, for any distribution $G$ that is generated by adding uniform noise as in the statement of \cref{prop:smooth} 
is $(\frac{1}{2}, \frac{\epsilon}{2(2-\epsilon)}, 0)$-slow-increasing. 
By applying \cref{thm:slow}, 
the approximation ratio of linear contracts to optimal welfare is $\frac{4(2-\epsilon)}{\epsilon}$.
\end{proof}

\section{Undesirable Properties of Optimal Contracts}
\label{sec:undesriable-features-of-opt-contracts}

We next explore the properties of optimal contracts in our single-dimensional private types setting. We focus on deterministic contracts as introduced in \cref{sec:contracts}.
We first show that the optimal principal utility may exhibit unnatural non-monotonicity. 
Specifically, the optimal revenue of the principal may be strictly smaller when contracting with a stronger agent (in terms of lower cost per unit-of-effort).
Secondly, we show that the amount of communication required to describe the optimal contract to the agent can be unbounded. 
These two properties make it less desirable to implement the optimal contract in practical situations.\footnote{Arguments similar to the ones developed in this section apply to randomized contracts, and lead to qualitatively similar results. An additional disadvantage of randomized contracts is their increased complexity relative to deterministic contracts.}

\subsection{Non-monotonicity of the Optimal Revenue}
\label{sub:non-monotone}
One way in which optimal contracts defy intuition is that a principal contracting with a stronger agent cannot necessarily expect higher revenue. 
Such non-monotonicity is known for auctions but only for \emph{multi}-dimensional types~\citep{nonmonotonicty}.\footnote{The multi-dimensionality in types is necessary for the non-monotonicity results in auction settings. 
Indeed, \citet{devanur2016sample} show that strong revenue monotonicity holds for single-dimensional auction settings.
}
To show this for contracts with \emph{single}-dimensional types, we keep the contract setting fixed and consider two type distributions $G$ and $H$, where $H$ first-order stochastically dominates $G$. Formally, $H \succ_{\rm FOSD} G$ if $G(c) \geq H(c)$ $\forall c\in C$, that is, types drawn from $G$ are more likely to have lower cost than those drawn from $H$. One would expect that the principal's revenue when agent costs are distributed according to $G$ would be higher, but this is not always the case --- the proof relies on \cref{ex:non-monotone}. 

\begin{proposition}\label{prop:fosd}
There exist a principal-agent setting and distributions $H \succ_{\rm FOSD} G$ such that the optimal expected revenue from an agent drawn from $G$ is smaller than an agent drawn from $H$.
\end{proposition}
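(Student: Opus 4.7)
The plan is to establish Proposition~\ref{prop:fosd} by explicit construction, relying on the principal-agent instance specified in Example~\ref{ex:non-monotone}. First I would fix the setting from that example (the actions, effort levels $\gamma_i$, outcome distributions $F_i$, and rewards $r_j$), and then exhibit two cost distributions $G$ and $H$ satisfying $G(c) \geq H(c)$ for all $c \in C$ (so that $H \succ_{\rm FOSD} G$) such that the optimal expected revenue under $G$ is strictly smaller than under $H$. To keep the case analysis tractable, I would take the setting to have just a handful of actions (say the null action together with two non-trivial actions with $\gamma_1 < \gamma_2$ and $R_1 < R_2$), and I would take $G$ to place mass on both a low-cost region and a high-cost region while $H$ places all of its mass (or nearly all) on the high-cost region.

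The key intuition to exploit is that under $G$, the presence of low-cost types pushes the principal toward allocating the high-effort action $i = 2$ on those types, and by the payment identity~\eqref{eq:pay-identity} together with the integral monotonicity from \cref{lem:primal characterization}, the required payments propagate upward through the type space and erode what can be extracted from high-cost types; under $H$, those low-cost types simply do not exist, so the principal is free to use a simple contract tailored to the high-cost region without bearing those information rents. The plan is therefore to (i) exhibit an explicit IC menu for $H$ and compute its revenue using \cref{cor:expected-rev}, and (ii) upper-bound the optimal revenue under $G$ by enumerating the finitely many monotone allocation rules (there are at most a few relevant patterns given the action count and the two cost regions), computing the expected revenue of each via \cref{cor:expected-rev} and the payment identity, and taking the maximum.

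The main obstacle is calibrating the parameters of Example~\ref{ex:non-monotone}, together with the masses on the two cost regions of $G$, to simultaneously ensure (a) the FOSD relation $H \succ_{\rm FOSD} G$ is genuine, (b) under $G$ every implementable allocation rule's revenue is strictly less than the explicit revenue achieved under $H$, and (c) the $u(\bar c)$ term in \cref{cor:expected-rev} is handled correctly --- in particular, the limited-liability/IR constraint at the top type may bind differently under $G$ and $H$, and one must verify that under $G$ this constraint actively tightens the attainable revenue while under $H$ it does not. A secondary but cleaner way to close the case analysis for $G$ is to use \cref{cor:upper-bound}: compute the maximum ironed virtual welfare under $G$, show it is strictly below the revenue of the explicit IC contract constructed for $H$, and conclude directly without enumerating all allocations. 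Once the parameters are pinned down, the proposition follows by comparing the two revenue values.
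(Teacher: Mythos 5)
Your high-level plan — reuse the setting of Example~\ref{ex:non-monotone}, put $G$ on a low-cost type and a high-cost type, put $H$ only on the high-cost type, exhibit a contract for $H$, and upper-bound the revenue achievable under $G$ — is the right template and matches the paper's proof in broad strokes. But your description of \emph{why} the construction works, and your two proposed ways of closing the bound for $G$, both have problems.

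The intuition you give — that low-cost types force high payments that ``propagate upward through the type space'' via the payment identity and erode extraction from high-cost types — is standard adverse-selection rent propagation, and it cannot be the whole story here. The paper explicitly notes (\cref{appx:non-monotone}, footnoting \citet{devanur2016sample}) that pure single-dimensional adverse selection \emph{does} satisfy revenue monotonicity in FOSD; non-monotonicity is possible only because of the moral-hazard layer. The actual mechanism in Example~\ref{ex:non-monotone} is a \emph{double-deviation} effect: incentivizing the high-cost type $c=1$ to take the intermediate action $2$ forces a large payment on outcome $2$, which — because the distribution of the high-effort action $3$ puts almost all of its mass on outcome $2$ — implies an extremely lucrative expected payment for action $3$; the low-cost type $c=0$ (zero cost on all actions) then reports and takes action $3$ to grab this payment. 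This is not utilities propagating to the high type via \cref{eq:pay-identity}; it is the low type free-riding on a payment profile designed for someone else. Your proposed simplification to three actions (null plus two) would likely destroy this mechanism: you need at least one ``parasitic'' action whose outcome distribution overlaps heavily with the outcome the principal must pay for, in addition to the action the principal actually wants incentivized. The paper uses four actions precisely to set up this linear-dependence among outcome distributions.

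Your cleaner fallback — bound $\opt(G)$ from above by the ironed virtual welfare via \cref{cor:upper-bound} and compare against the explicit contract for $H$ — does not work. In the paper's construction, $\opt(H)$ equals roughly the virtual welfare of the high type alone (about $1/2$), while under $G$ the additional $\epsilon$-mass of low-cost types only \emph{adds} positive virtual welfare (they get action $3$ at virtual cost $0$). So the virtual-welfare upper bound for $G$ is strictly \emph{larger} than $\opt(H)$, not smaller, and the comparison fails. This is not an accident: the entire content of the proposition is that moral-hazard implementability constraints force $\opt(G)$ strictly below its virtual-welfare relaxation, so a bound based solely on that relaxation cannot detect the gap. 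What the paper actually does is fix an arbitrary payment vector $t\ge 0$, case-split on which action it incentivizes type $c=1$ to take, and use the limited-liability constraints and the specific outcome probabilities to bound the revenue from \emph{each} type under each case; the losses from type $c=0$'s double deviation are what bring the total under $1/2$. Your proposal to enumerate monotone allocation rules is closer in spirit, but for each candidate allocation you would still need exactly the moral-hazard implementability reasoning that \cref{cor:upper-bound} ignores, so it does not save you from the detailed case analysis.
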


\begin{example}\label{ex:non-monotone}
Let $\delta, \epsilon \to 0$, and consider the following parameterized setting with $4$ actions and $4$ outcomes.  
The costs in units-of-effort per action are $\gamma_0=0,\gamma_1=0,\gamma_2=\frac{1}{2},\gamma_3=\frac{1}{\delta}$, and the rewards are $r_0=r_1=r_2=0$ and $r_3=\frac{1}{\delta}$. The distributions over outcomes per action are $F_{0}=(1,0,0,0)$, $F_{1}=(0,1-0.25\delta,0,0.25\delta)$, $F_{2}=(0,0.5-\delta,0.5,\delta)$, $F_{3}=(0,0,1-  \delta-\delta^2 , \delta +\delta^2)$.
The type distributions $H \succ_{\rm FOSD} G$ are of the form:%
\footnote{For simplicity $H$ is a point mass distribution and does not satisfy 
the assumption that $\underline{c}<\bar{c}$, but it can be slightly perturbed to satisfy it.} 
\begin{eqnarray*}
G(c) = \begin{cases}
    \epsilon & c\in [0,1),\\
    1-\epsilon & c=1,
    \end{cases} 
    &&
H(c) = \begin{cases}
    0 & c\in [0,1),\\
    1 & c=1.
    \end{cases}
\end{eqnarray*}
\end{example}

\noindent
We outline the proof intuition here and defer the details later. The three non-null actions have the following characteristics. Action $1$ requires no effort and generates small expected reward, action $2$ requires little effort and generates moderate reward, and action $3$ requires very high effort and generates slightly higher reward than action 2.
Consider type $c=1$; incentivizing action $2$ is optimal for the principal. But the construction is such that incentivizing action $2$ can only be achieved via payment for outcome~$2$, which in turn implies extremely large payment for action $3$. Unlike type $c=1$ who never takes action~$3$ due to its high cost, type $c=0$ has the same zero cost for all actions. If the type distribution is such that $c=0$ can occur (i.e., $G$ rather than $H$), the principal will more often have to transfer this large payment (for outcome $2$) to the agent. There is thus a trade-off from having a low-cost type in the support, which ultimately leads to loss of revenue.

\begin{claim}
Consider Example~\ref{ex:non-monotone} and take $\epsilon$ and $\delta$ such that $\epsilon<\delta$, $(1+\delta)\delta<0.25$, and $7\delta +6\epsilon< 0.5$. The optimal expected revenue given $H$ is $\ge \frac{1}{2}$, while the optimal expected revenue given $G$ is $<\frac{1}{2}$.
\end{claim}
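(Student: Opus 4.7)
The plan is to prove the two inequalities separately: construct a specific contract achieving $1/2$ under $H$, then upper bound the revenue under $G$ by a case analysis on the action $i_1$ induced for type $c=1$. Precomputing rewards gives $R_0 = 0$, $R_1 = 1/4$, $R_2 = 1$, $R_3 = 1 + \delta$, and the welfare of type $c=1$ for each action is $0, 1/4, 1/2, 1 + \delta - 1/\delta$, where the fourth is negative since $(1+\delta)\delta < 1/4$ implies $\delta < 1/4$ and hence $R_3 < 1/\delta$. For $\opt(H) \geq 1/2$, I would offer the single profile $t = (0, 0, 1, 0)$; the resulting expected payments for $c=1$ are $0, 0, 1/2, 1-\delta-\delta^2$, giving utilities $0, 0, 0, 1 - \delta - \delta^2 - 1/\delta < 0$, so tie-breaking in the principal's favor selects action $2$ (the highest-revenue tied option) and the revenue is exactly $R_2 - 1/2 = 1/2$.

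For $\opt(G) < 1/2$, fix any IC menu $(t^0, t^1)$ with induced actions $(i_0, i_1)$. If $i_1 \in \{0, 3\}$, the revenue from $c=1$ is non-positive (the welfare is $\leq 0$) and the total is $\leq \epsilon(1+\delta) < 1/2$ using $\epsilon < 1/12$ (implied by $7\delta + 6\epsilon < 1/2$). If $i_1 = 1$, revenue from $c=1$ is at most $R_1 = 1/4$ and from $c=0$ at most the welfare $R_3 = 1+\delta$, giving total $\leq 1/4 + \epsilon(3/4 + \delta) < 1/2$. The main case is $i_1 = 2$: I would solve the LP that minimizes $T_2^1 = (0.5-\delta)t_1^1 + 0.5 t_2^1 + \delta t_3^1$ subject to the IC inequalities (against actions $0, 1, 3$) and $t^1 \geq 0$, obtaining the unique minimizer $t^1 = (0, 0, 1, 0)$ with $T_2^1 = 1/2$. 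The crucial point is that this forces $T_3^1 = 1 - \delta - \delta^2$, and cross-type IC (type $c=0$ has zero cost and can imitate the $\hat{c}=1$ report) implies $u_0 \geq \max_j T_j^1 \geq T_3^1 = 1-\delta-\delta^2$. Hence revenue from $c=0$ is at most $R_3 - u_0 \leq 2\delta + \delta^2$, so the total is $\leq \epsilon(2\delta + \delta^2) + (1-\epsilon)/2 = 1/2 - \epsilon(1/2 - 2\delta - \delta^2) < 1/2$ by $(1+\delta)\delta < 1/4$.

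The main obstacle is subtler than the above sketch: in the $i_1 = 2$ case the principal need not choose the $T_2^1$-minimizer and may trade a larger $T_2^1$ (lower revenue from $c=1$) for a smaller $\max_j T_j^1$ (higher revenue from $c=0$). To close this, I would directly minimize the combined quantity $\epsilon \max_j T_j^1 + (1-\epsilon) T_2^1$, which is the part of the total revenue that depends on $t^1$ once $R_{i_0}$ is bounded by $R_3$. Lower-bounding $\max_j T_j^1$ by $T_3^1$ and substituting the binding IC constraint to express $t_2^1$ in terms of $(t_1^1, t_3^1)$, the objective becomes linear with both resulting coefficients strictly positive in the stated $(\epsilon, \delta)$-range (roughly $1 - 0.25\delta$ on $t_1^1$ and $0.25\delta$ on $t_3^1$), so the minimum is attained at $t_1^1 = t_3^1 = 0$ and $t_2^1 = 1$, recovering the bound above and completing the argument.
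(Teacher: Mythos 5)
Your proof is correct and shares the paper's overall structure: the same witness contract $t=(0,0,1,0)$ for $\opt(H)\geq\tfrac12$, and the same four-way case split on the action $i_1$ taken by type $c=1$. The interesting divergence is in the main case $i_1=2$. The paper argues by contradiction: assuming total revenue $\geq\tfrac12$, it first forces $T^1_2\leq\tfrac12+\epsilon$ (otherwise the probability-weighted revenue is already short), then combines this with the IC constraint against action $1$ to extract $t^1_3\leq 4\epsilon/\delta$ and $t^1_2\geq 1-6\epsilon$, concluding $T^1_3>\tfrac12+\delta$ and hence $u_0>\tfrac12+\delta$, which caps revenue from $c=0$ strictly below $\tfrac12$. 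You instead bound total revenue directly by $1+\epsilon\delta-\bigl[(1-\epsilon)T^1_2+\epsilon\max_j T^1_j\bigr]$, drop $\max_j T^1_j$ to $T^1_3$, and minimize the resulting linear objective over the IC polytope, finding the minimum at $t^1=(0,0,1,0)$ with value $\tfrac12+\tfrac{\epsilon}{2}-\epsilon\delta(1+\delta)$, which yields the quantitative bound $\opt(G)\leq\tfrac12-\epsilon(\tfrac12-2\delta-\delta^2)$. You correctly flag (and repair) the subtle point that minimizing $T^1_2$ alone is insufficient, since the principal can trade $T^1_2$ against $\max_j T^1_j$. Both approaches are valid; yours is conceptually cleaner (it is a single optimization rather than a chain of ad hoc inequalities) and produces an explicit gap, while the paper's is more elementary. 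One detail worth tightening in your LP step: you substitute ``the binding IC constraint'' as an equality without identifying which of the constraints (IR $T^1_2\geq\tfrac12$, IC vs.\ action $1$, IC vs.\ action $3$) is active. The objective has all-positive coefficients in $(t_1,t_2,t_3)$ and the IC-vs-1 constraint has a negative $t_1$ coefficient, which forces $t_1=0$ at the optimum; with $t_1=0$, IC-vs-1 dominates IR and IC-vs-3 is slack, so IC-vs-1 is indeed the binding one — but this argument should be spelled out rather than asserted.
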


\begin{proof}
We denote the optimal expected revenue of the distributions $H$ and $G$ by $\opt(H)$ and $\opt(G)$ respectively. Consider $H$, this is a single principal single agent case. By setting the contract $t_0=t_1=t_3=0,t_2=1$, we have that the agent chooses action $2$, yielding an expected reward of $1$ and expected payment of $\frac{1}{2}$. Therefore, $\opt(H)\geq \frac{1}{2}$. Consider $G$, the following case analysis shows that the revenue from any contract $t$ is strictly less than $\frac{1}{2}$. 
\begin{enumerate}
    \item First, consider the case where $t$ incentivizes type $c=1$ to take action $3$. Since to do so the principal's payment for $c=1$ is at least $\frac{1}{\delta}$ (the agent's cost) and the expected reward from this action is at most $1+\delta$, the principal's utility from $c=1$ is at most $1+\delta-\frac{1}{\delta}$. Note that the principal's utility from $c=0$ is at most $1+\delta$ (the highest expected reward) so the expected utility is bounded by $2(1+\delta)-\frac{1}{\delta}$ which is strictly negative since $(1+\delta)\delta<\frac{1}{2}$. 
    \item The second case is when $t$ incentivizes type $c=1$ to take action $2$. We first show that for the principal to extract a payoff of at least $\frac{1}{2}$ using $t$, it must hold that the expected payment for action $2$, i.e., $(\frac{1}{2}-\delta)t_1+ \frac{1}{2}t_2+\delta t_3$ is upper bounded by $\frac{1}{2}+\epsilon$. To see this, note that if type $c=1$ takes action $2$, the expected payment is at least $\frac{1}{2}$ (the cost of action $2$). Since $c=0$ takes the action with highest expected payment, the principal's payoff from $c=0$ is bounded by $\frac{1}{2}+\delta$. Assume that the payment for type $c=1$ is more than $\frac{1}{2}+\epsilon$, the expected payoff is at most $(1-\epsilon) (\frac{1}{2}-\epsilon)+\epsilon (\frac{1}{2}+\delta)= \frac{1}{2} -\epsilon(1-\epsilon)  +\epsilon \delta = \frac{1}{2} - \epsilon (1-\epsilon -\delta)$, which is strictly less than $\frac{1}{2}$ since $\epsilon+\delta < 1.$  Therefore, we have that $(\frac{1}{2}-\delta)t_1+ \frac{1}{2}t_2+\delta t_3   \leq \frac{1}{2}+\epsilon$, which implies (since $\delta<\frac{1}{2}$ and $0 \leq t_1$) 
    \begin{equation}\label{eq:non-mon-1}
    \frac{1}{2}t_2+\delta t_3   \leq \frac{1}{2}+\epsilon.
    \end{equation}
    Also note that in order to incentivize type $c=1$ to take action $2$ it must hold that the agent's payoff from action $2$ is at most as his payoff from action $1$, $(1-0.25\delta)t_1 + 0.25\delta t_3 \leq 0.5 t_2 + \delta t_3 - 0.5$, i.e., $0.5+(1-0.25\delta)t_1  \leq 0.5 t_2 + 0.75 \delta t_3$, and since $0 \leq t_1$ it holds that 
    \begin{equation}\label{eq:non-mon-2}
    \frac{1}{2} \leq \frac{1}{2}t_2+\frac{3}{4}\delta t_3.
    \end{equation}
    Combining \eqref{eq:non-mon-1} and \eqref{eq:non-mon-2}, we have that $\frac{1}{2}t_2+\delta t_3 -\epsilon \leq \frac{1}{2} \leq \frac{1}{2}t_2+\frac{3}{4}\delta t_3,$ which implies $\frac{1}{4}\delta t_3  \leq \epsilon,$ i.e., $t_3  \leq \frac{4 \epsilon}{\delta }.$ Using this back in \eqref{eq:non-mon-2}, we have that $1 -6 \epsilon  \leq t_2.$ This implies that the expected payment for action $3$ is at least $(1-\delta -\delta^2) (1-6\epsilon) = 1 -6\epsilon - \delta (1+\delta) + 6\epsilon(\delta +\delta^2)> 1-6(\delta+\epsilon)$. Which is strictly more that $\frac{1}{2}+\delta$ since $\frac{1}{2}>7\delta + 6\epsilon$. This implies that the principal's payoff from type $c=0$ is strictly less than $\frac{1}{2}$. Since the optimal payoff from type $c=1$ is $\frac{1}{2}$ we have that the expected payoff from both types is strictly less than $\frac{1}{2}$.
    \item The third case is when $t$ incentivizes type $c=1$ to take action $1$. The expected reward is thus bounded by $(1+\delta)\epsilon+0.25 (1-\epsilon) = 0.25+(0.75+\delta)\epsilon< 0.25+(1+\delta)\epsilon<\frac{1}{2}.$
    \item The final case is when $t$ incentivizes type $c=1$ to take action $0$. The expected reward is at most $(1+\delta)\epsilon$ which is strictly less than $\frac{1}{2}$ since  $\epsilon<\delta,$ and $\delta (1+\delta)<\frac{1}{2}$.\qedhere
\end{enumerate}
\end{proof}

\subsection{Menu-size Complexity}
\label{sub:menu_complexity}
By the taxation principle, every IC contract has an equivalent non-direct-revelation representation, 
which offers the agent a \emph{menu} of payment profiles (rather than asking the agent to reveal his type then presenting him with a payment profile). 
Given a menu of payment profiles,
the agent will choose a payment profile from the menu and an 
action to maximize his expected utility based on his private type. 

In this section we adapt the measure of \emph{menu-size complexity} suggested for auctions by \citet{menu-complexity}, to our context of contracts with private types. 
Formally, the menu-size complexity of a contract design instance is the number of different payment profiles offered in the menu representation. 
\begin{definition}
\label{def:menu}
The \emph{menu-size complexity} of a payment rule $t:C\to \mathbb{R}^{m}$ is the image size of function~$t$.\footnote{In this paper, we only capture the menu size incurred by offering different payment schemes to the agents, reflecting the complexity due to screening. The additional complexity from recommending different actions to different types, even under the same payment scheme, is not included in our definition of menu-size complexity.
}
\end{definition}

The menu-size complexity captures the amount of information transferred from the principal to the agent in a straightforward communication of the contract. In auctions it has been formally tied to communication complexity~\cite{BabaioffGN17}. 
Intuitively, a contract with larger menu-size complexity requires more complex communication to describe the contract to the agent, 
and hence harder to be implemented in practice. 
We show that even with constant number of possible outcomes, 
the menu-size complexity of the optimal contract can be arbitrarily large. 
In contract, linear contracts have menu-size complexity of~1. 
 
\begin{proposition}
\label{pro:menu}
For every $n>1$ there exists a principal-agent setting with $m=2$ outcomes such that the menu-size complexity of the optimal contract is at least $\frac{n-1}{2}$.
\end{proposition}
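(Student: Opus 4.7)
The plan is to exhibit, for every $n > 1$, a concrete principal-agent instance with $n+1$ actions and $m=2$ outcomes in which the optimal contract is (essentially) unique and assigns $k := \lceil (n-1)/2 \rceil$ pairwise distinct payment profiles to the agent's type space. First I would design an explicit family of instances. Take $k$ atomic cost values $c_1 < c_2 < \dots < c_k$ (slightly smoothed, if we wish to stay in the continuous-density regime of \cref{sec:prelim}, by replacing each atom with a narrow interval of mass $\mu_i$). For each $i \in [k]$, I would introduce two associated actions, a ``target'' action $a_i$ that is meant to be assigned to type $c_i$, and a ``tempter'' action $b_i$ with a very different outcome distribution; the parameters $(\gamma_{a_i}, F_{a_i}), (\gamma_{b_i}, F_{b_i})$ are chosen so that (a) on its own, $(c_i, a_i)$ is the uniquely revenue-maximizing (virtual-welfare-maximizing) pair and (b) the constraint that $b_i$ not be preferred to $a_i$ by type $c_i$ binds at optimum. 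This uses at most $2k + 1 \le n+1$ actions.

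Second, I would argue that the optimal allocation rule must satisfy $x(c_i) = a_i$ for all $i$. This combines \cref{cor:expected-rev} (expressing expected revenue as virtual welfare), \cref{lem:primal characterization} (integral monotonicity), and the calibration above, which is chosen so that deviating the allocation at any $c_i$ strictly decreases virtual welfare by more than it relaxes the IC constraints elsewhere. Because each $a_i$ is distinct, the image of $x$ has size exactly $k$.

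Third, and this is the crux, I would prove that each type $c_i$ must receive a distinct payment profile $t^{c_i} = (t^{c_i}_0, t^{c_i}_1, t^{c_i}_2) \in \mathbb{R}_+^3$. Suppose for contradiction that $t^{c_i} = t^{c_j} =: t$ for some $i < j$. By IC and limited liability, type $c_i$ with profile $t$ takes $a_i = i^*(t, c_i)$ and type $c_j$ takes $a_j = i^*(t, c_j)$, which is consistent with $a_i \ne a_j$ since the best-response map $i^*(t,\cdot)$ is monotone in $c$. However, the binding ``tempter'' constraints enforce that making $a_i$ strictly preferred over $b_i$ at type $c_i$, together with making $a_j$ strictly preferred over $b_j$ at type $c_j$, pins down an incompatible pair of linear inequalities on $(t_1, t_2)$. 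In particular, the expected payments $T_{a_i}(t)$ and $T_{a_j}(t)$ are forced to equal the values determined by the payment identity \cref{eq:pay-identity}, and these values, together with the binding deviation constraints, determine $(t_1, t_2)$ uniquely and differently for different $i$. Hence the $k$ profiles are pairwise distinct and the menu-size complexity is at least $k \ge (n-1)/2$.

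The main obstacle is step three: with only $m=2$ outcomes, a single payment profile already induces different actions for different types via $i^*(t,\cdot)$, so one cannot derive ``distinct profiles'' merely from ``distinct actions.'' The role of the tempter actions $b_i$ is precisely to make the IC constraints type-specific, forcing the minimal (revenue-maximizing) profile that implements $a_i$ at $c_i$ to be strictly different from the one implementing $a_j$ at $c_j$. Choosing the $F_{b_i}$'s so that these constraints live on genuinely different supporting hyperplanes in the $(t_1, t_2)$-plane is the delicate part of the construction.
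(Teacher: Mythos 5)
Your approach is genuinely different from the paper's, and it has a gap that I think is fatal in the form you state it.

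The paper's construction (its Example in the appendix) uses no ``tempter'' actions at all. It takes the natural ladder of actions $\gamma_i = i^2/n$, $F_i = (0, 1-i/n, i/n)$, $i=0,\dots,n$, and a uniform type distribution. The argument then proceeds in two steps: (i) the virtual-welfare-maximizing allocation rule uses all $n$ non-null actions and has breakpoints $z_i$ in type space that are determined by the virtual cost $\varphi(c)=2c-1$; (ii) any \emph{single} payment profile $(t_1,t_2)$ offered in the menu induces, for each pair of successive actions $i,i+1$, a best-response breakpoint at $c=\tfrac{t_2-t_1}{2i+1}$. These breakpoints have a \emph{fixed ratio structure} -- their ratio $\tfrac{2(i+1)+1}{2i+1}$ is determined by the $\gamma$'s and $F$'s alone, independent of the profile -- whereas the optimal rule's breakpoint ratios carry an extra factor involving $\Delta_r = r_2-r_1$. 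So one profile can match at most two consecutive breakpoints (there is only one scalar $t_2-t_1$ to tune), hence can cover at most a block of two consecutive actions, and pigeonhole on $n$ actions then gives the $\geq(n-1)/2$ bound.

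The gap in your proposal is the geometric picture you rely on in step three. With $m=2$ outcomes, every non-null action has $F_{i,0}=0$, so $F_{i,1}+F_{i,2}=1$, and the incentive constraint between any two non-null actions $a,b$ at type $c$ reduces to
\[
(F_{a,2}-F_{b,2})\,(t_2-t_1)\ \geq\ (\gamma_a-\gamma_b)\,c.
\]
That is, \emph{every} deviation constraint -- including every one your ``tempter'' actions would impose -- is a constraint on the single scalar $t_2-t_1$. They are all parallel hyperplanes in the $(t_1,t_2)$-plane, so they cannot live on ``genuinely different supporting hyperplanes'' as your plan requires. The only constraint with a different slope is the expected-payment (IR/payment-identity) constraint, and that alone does not force distinctness the way you describe. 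You would also need to justify why the tempter constraints \emph{bind} at the optimum (they need not), and why the tempters satisfy the model's no-dominated-actions requirement ($R_i$ strictly increasing in $\gamma_i$) without ever being allocated. Ironically, the parallel-hyperplane structure you need to overcome is precisely what makes the paper's argument work: because all deviation constraints collapse to a one-parameter family, the breakpoint ratios of a profile are rigid, and it is the mismatch between these rigid ratios and the virtual-cost-determined breakpoints of the optimal rule that forces the large menu.
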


The proof relies on the following example.

\begin{example}\label{ex:menu}
There is a set of actions $[n]$, with required effort levels $\gamma_i=\frac{i^2}{n}$ $\forall i\in [n]$. There are three outcomes $0=r_0<r_1<r_2$ such that $r_1+2(n-1)+1<r_2$. The distributions over outcomes are given by $F_0=(1,0,0)$, $F_i=(0,1-\frac{i}{n},\frac{i}{n})$ $\forall i>0$. The distribution over types is $G(c)=U[1,\Bar{c}]$, where $\frac{r_2+(n-1)r_1}{2}+\frac{1}{2}<\Bar{c}$. In this example, the menu-size complexity of the optimal contract is at least $\frac{n-1}{2}$.
\end{example}

Before analyzing Example~\ref{ex:menu} to formally prove Proposition~\ref{pro:menu}, we give a high-level intuition: The characterization in Lemma~\ref{lem:primal characterization} shows that every allocation~$x(c)$ requires a \emph{unique} expected payment $T^c_{x(c)}$. By incentive compatibility, this implies that all types with the same allocation have the same payment. Therefore, every action $i$ can be associated with payment $T_i$ required to incentivize $x$. In the pure adverse selection model, i.e., $F_{i,i}=1$ for all $i$, the single payment scheme $(T_i)_{i\in [n]}$ results in these payments. 
In our model with hidden action, however, the linear dependency of the actions' distributions introduces new constraints on the payment schemes and might increase the menu complexity.

To show that the menu size complexity of the optimal contract in Example~\ref{ex:menu} is at least $\frac{n-1}{2}$ we rely on the following claims.

\begin{claim}
\label{cla:menu-allocation-rule}
Consider Example~\ref{ex:menu}, and let $\Delta_r =r_2-r_1$. The virtual welfare maximizing allocation rule $x^*$ is defined by the following breakpoints $z_0=1, z_i=\frac{\Delta_r }{4(n-i)+2}+\frac{1}{2}, z_n=\frac{n r_1 +\Delta_r +1 }{2},z_{n+1} =\Bar{c}$ $\forall 1 \leq i < n$.
\end{claim}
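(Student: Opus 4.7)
\textbf{Proof proposal for Claim~\ref{cla:menu-allocation-rule}.}

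The plan is to explicitly compute the virtual welfare as a function of the type $c$ for each action, and then read off the breakpoints as the crossing points of these piecewise linear curves. Since the type distribution is uniform on $[1,\bar{c}]$, we have $g(c) = 1/(\bar{c}-1)$ and $G(c) = (c-1)/(\bar{c}-1)$, so $\virtual(c) = c + G(c)/g(c) = 2c - 1$. This is strictly increasing, so no ironing is needed and $\ironed(c) = 2c-1$. The expected rewards are $R_0 = 0$ and $R_i = (1 - i/n)\,r_1 + (i/n)\,r_2 = r_1 + (i/n)\Delta_r$ for $i \geq 1$, giving virtual welfare
\begin{align*}
\text{VW}_0(c) = 0, \qquad \text{VW}_i(c) = r_1 + \frac{i}{n}\Delta_r - \frac{i^2}{n}(2c-1) \quad (i \geq 1).
\end{align*}

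Next I would compute indifference points. For $1 \leq i \leq n-1$, setting $\text{VW}_i(c) = \text{VW}_{i+1}(c)$ and simplifying gives $(2c-1)(2i+1) = \Delta_r$, i.e.~$c = \Delta_r/(4i+2) + 1/2$. Reindexing so that the $j$-th breakpoint (counting from the low end $c=1$) corresponds to the transition between actions $n-j+1$ and $n-j$ yields $z_j = \Delta_r / (4(n-j)+2) + 1/2$ for $1 \leq j \leq n-1$, matching the formula in the claim. Setting $\text{VW}_1(c) = \text{VW}_0(c) = 0$ gives $c = (nr_1 + \Delta_r + 1)/2$, which is exactly $z_n$. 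Since $\text{VW}_i(c)$ is affine in $c$ with slope $-2i^2/n$ strictly decreasing in $i$, the upper envelope of $\{\text{VW}_i\}_{i=0}^n$ is traversed in the order $n, n-1, \ldots, 1, 0$ as $c$ increases, so these indifference points are exactly the breakpoints provided each lies strictly between its neighbours and inside $[1,\bar{c}]$.

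The remaining step is to verify that no action is ``skipped,'' i.e.~$1 = z_0 < z_1 < \cdots < z_n < z_{n+1} = \bar{c}$. Monotonicity among $z_1, \ldots, z_{n-1}$ is immediate since $4(n-j)+2$ strictly decreases in $j$. The inequality $z_{n-1} < z_n$ reduces to $\Delta_r/6 + 1/2 < (nr_1 + \Delta_r + 1)/2$, which holds since $r_1, \Delta_r > 0$. The inequality $z_n < \bar{c}$ is exactly the assumption $(r_2 + (n-1)r_1)/2 + 1/2 < \bar{c}$ after using $nr_1 + \Delta_r = (n-1)r_1 + r_2$. The only nontrivial check is $z_0 = 1 < z_1 = \Delta_r/(4n-2) + 1/2$, which is equivalent to $\Delta_r > 2n-1$; this is where the hypothesis $r_1 + 2(n-1) + 1 < r_2$ is used, since it rewrites as $\Delta_r = r_2 - r_1 > 2n - 1$.

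The main obstacle I expect is bookkeeping rather than any conceptual difficulty: the claim's indexing inverts the one used in Definition~\ref{def:piecewise-constant}, so care is needed to match the breakpoint indices with the corresponding action values and to choose the right orientation when verifying monotonicity and endpoint conditions. Once this is sorted out, everything reduces to the three explicit inequalities above, each of which follows directly from one of the example's standing assumptions.
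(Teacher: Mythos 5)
Your proof is correct and takes essentially the same approach as the paper: compute $\virtual(c)=2c-1$ for the uniform prior, write each action's virtual welfare $R_i-\gamma_i\virtual(c)$ as an affine function of $c$, and solve pairwise indifference conditions to obtain the breakpoints. You additionally spell out the verification that $1=z_0<z_1<\cdots<z_{n+1}=\bar{c}$ (including where each of the example's two standing inequalities on $r_2$ and $\bar{c}$ is used), which the paper only mentions in passing, so your write-up is a bit more complete but not a different argument.
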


\begin{proof}
To see this, note that the virtual cost is $\varphi(c)=2c-1$ and therefore the virtual welfare of action $i$ is given by $(1-\frac{i}{n})r_1+\frac{i}{n}r_2-\gamma_i \varphi(c)$ which is $r_1+ \frac{i}{n}\Delta_r -\frac{i^2}{n}(2c-1)$. It thus holds that action $i+1$ has higher virtual welfare than action $i$ if and only if 
\begin{eqnarray*}
r_1+ \frac{i+1}{n}\Delta_r -\frac{(i+1)^2}{n}(2c-1) \geq r_1+ \frac{i}{n}\Delta_r -\frac{i^2}{n}(2c-1) &\iff&\\
\frac{1}{n}\Delta_r  \geq   \frac{(i+1)^2}{n}(2c-1)-\frac{i^2}{n}(2c-1)=\frac{(2i+1)}{n}(2c-1)
\end{eqnarray*}
Equivalently, if and only if $\frac{\Delta_r}{4i+2} +\frac{1}{2}\geq c$. Therefore, action $n$ maximizes virtual welfare if and only if $1\leq c\leq \frac{\Delta_r}{4(n-1)+2}+\frac{1}{2}$ (not that $r_1+2(n-1)+2<r_2$), action $1\leq i <n$ maximizes virtual welfare if and only if $\frac{\Delta_r}{4i+2} +\frac{1}{2} \leq c\leq \frac{\Delta_r}{4(i-1)+2} +\frac{1}{2}$.
Furthermore, action $1$, has higher welfare than action $0$ when $0 \leq (1-\frac{1}{n})r_1+\frac{1}{n} r_2 -\frac{1}{n}(2c-1)$. That is, $c\leq \frac{n r_1+ \Delta_r +1}{2}$. So action $0$ maximizes virtual welfare given that $c\leq \frac{n r_1+ \Delta_r +1}{2}$.
\end{proof}

\begin{claim}
\label{cla:menu-UB}
The virtual welfare maximizing allocation rule $x^*$ is implementable by the following contract $t$ that has menu-size complexity of $\frac{n-1}{2}$, and for which $T^{\Bar{c}}_{x(\Bar{c})}=0$. 
\begin{align*}
t(c) &=
\rbr{0,\frac{r_1}{2}+\frac{2k-4k^2}{2n},\frac{r_1}{2}+\frac{2k-4k^2}{2n}+\frac{\Delta_r +4k-1}{2}}, \\
x^*(c)&=i\in \{ 2k-1,2k\}, \quad k\in \sbr{1,\frac{n}{2}},
\end{align*}
and for $c$ such that $x^*(c)=0$ we can use any of the above contracts.
\end{claim}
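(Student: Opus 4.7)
The plan is to verify the claim by direct construction. Denote the $k$-th payment profile by $t_k = (0,\,a_k,\,a_k+b_k)$ where $a_k=\frac{r_1}{2}+\frac{2k-4k^2}{2n}$ and $b_k=\frac{\Delta_r+4k-1}{2}$. First I would check \emph{obedience} (clause (i) of IC): under $t_k$, the expected payment for action $j\geq 1$ is $T_j(t_k)=a_k+\frac{j}{n}b_k$ and $T_0(t_k)=0$, so the agent with type $c$ has utility $U_j(c)=a_k+\frac{jb_k-j^2c}{n}$, which is concave in $j$. The breakpoint between actions $j$ and $j+1$ under $t_k$ is at $c=\frac{b_k}{2j+1}$; plugging in $j=2k-1$ gives $\frac{\Delta_r+4k-1}{2(4k-1)}=\frac{\Delta_r}{8k-2}+\frac{1}{2}=z_{n-2k+1}$, which exactly matches the $x^*$-breakpoint between actions $2k-1$ and $2k$ from Claim~\ref{cla:menu-allocation-rule}. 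One then computes the breakpoints to actions $2k-2$ and $2k+1$ under $t_k$ and verifies (using monotonicity of $\frac{b_k}{2j+1}$ in $j$) that they lie outside the interval $[z_{n-2k},z_{n-2k+2}]$ on which $x^*\in\{2k-1,2k\}$, so that for every $c$ in this interval the agent's best response under $t_k$ indeed coincides with $x^*(c)$.

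Next I would handle \emph{truthfulness} (clause (ii)) using Lemma~\ref{lem:primal characterization}. The cleanest route is the payment identity~\eqref{eq:pay-identity}: with $T^{\bar{c}}_{x(\bar{c})}=0$ (which follows from $x^*(\bar{c})=0$ and the fact that the first coordinate of every $t_k$ is zero), it suffices to check that the expected payment $T^c_{x^*(c)}=a_k+\frac{x^*(c)}{n}b_k$ equals $\gamma_{x^*(c)}c+\int_c^{\bar{c}}\gamma_{x^*(z)}\dd z$. This reduces to a telescoping identity across the breakpoints $z_{n-2k+1}$, and the constants $a_k,b_k$ were engineered so that both the ``within profile'' jump (between actions $2k-1$ and $2k$) and the ``across profile'' jump (between $t_k$ and $t_{k+1}$) match $\int\gamma_{x^*}\dd z$ on the respective intervals. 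Integral monotonicity~\eqref{eq:curvature} then follows because, under $t_k$, the induced best-response $i^*(t_k,z)$ is monotone non-increasing in $z$ and agrees with $x^*$ on a neighborhood of $c$, so the integrand has a definite sign consistent with the required inequality on each side of $c$.

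Finally, the menu-size count is immediate: the profiles $\{t_k\}_{k=1}^{\lfloor n/2\rfloor}$ are pairwise distinct (e.g., $b_k$ is strictly increasing in $k$), and every type is assigned one of them, giving the claimed complexity. The main obstacle I anticipate is the truthfulness check, specifically handling the ``non-local'' deviations where an agent in the $k$-th interval reports a type far away in the $k'$-th interval: since $U_j(c)$ under $t_{k'}$ is quadratic and peaks at $j^*=\frac{b_{k'}}{2c}$, the induced action can jump by more than one step, and one must verify that the resulting integrand in~\eqref{eq:curvature} remains correctly signed. The key observation that makes this work is that $b_k$ grows linearly in $k$ while the effort levels $\gamma_i=i^2/n$ grow quadratically in $i$, so the ``marginal effort per marginal payment'' tracks the optimal allocation's breakpoints across all $k$, preventing any profitable double deviation.
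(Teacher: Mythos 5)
Your obedience check (computing breakpoints under $t_k$ and matching them to the $z$'s from Claim~\ref{cla:menu-allocation-rule}) is correct and is indeed part of what one must verify. However, your route to truthfulness has a gap, and the paper's proof avoids that difficulty by a structurally different and substantially cleaner argument.

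The gap: you assert that integral monotonicity ``follows because $i^*(t_k,z)$ is monotone non-increasing in $z$ and agrees with $x^*$ on a neighborhood of $c$, so the integrand has a definite sign consistent with the required inequality on each side of $c$.'' Local agreement near $c$ gives you nothing about the sign of $\gamma_{x^*(z)}-\gamma_{i^*(t_k,z)}$ at $z$ far from $c$, which is exactly where \eqref{eq:curvature} could fail. What you actually need is the global crossing property that $i^*(t_k,z)\ge x^*(z)$ for $z>c$ and $i^*(t_k,z)\le x^*(z)$ for $z<c$. This is true here --- the $t_k$ breakpoint between actions $j$ and $j+1$ sits at $b_k/(2j+1)=(\Delta_r+4k-1)/(2(2j+1))$, while the $x^*$ breakpoint is $(\Delta_r+2j+1)/(2(2j+1))$, so they are on the correct sides for $j\lessgtr 2k-1$ --- but you must actually compare these and prove the crossing; the hand-waved ``linear vs.\ quadratic'' remark at the end is not a proof and, as stated, does not even identify the quantity being compared.

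The paper avoids integral monotonicity altogether by working directly with the menu (taxation-principle) representation: the agent chooses a (profile, action) pair to maximize $T^{k}_i-\gamma_i c$. The paper first shows, via a one-line first-order calculation in $k$, that for each action $i$ the designated profile $t_{\lceil i/2\rceil}$ maximizes $T^{k}_i$ over all $k$; this collapses the two-variable maximization into a one-variable one with reduced expected payment $T_i=\frac{r_1}{2}+\frac{i\Delta_r}{2n}+\frac{i^2}{2n}$. It then checks that $\arg\max_i\{T_i-\gamma_i c\}$ has breakpoints coinciding with those of $x^*$ (concavity of $T_i-\gamma_i c$ in $i$ makes the consecutive-action check sufficient). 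This establishes IC outright --- obedience and truthfulness simultaneously --- without invoking Lemma~\ref{lem:primal characterization} at all. If you want to keep your approach, you must make the crossing-of-breakpoints argument explicit; if not, the paper's reduced-form $T_i$ computation is the shortest path.
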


\begin{proof}
We first show that the contract that gives the highest expected payment for action $i$ is $t(c)$ for $x^*(c)=i$.
To do so, we show that given action $i$, the $k$ in the above contract that maximizes the expected payment for $i$ is $i=2k$ when $i$ is even, and $i=2k-1$ when $i$ is odd. First, the expected payment for action $i$ given some $k$ is as follows.
\begin{align*}
&(1-\frac{i}{n}) t(c)_1 + \frac{i}{n} t(c)_2 
= t(c)_1 +\frac{i}{n} (t(c)_2 - t(c)_1) \\
& = \frac{r_1}{2}+\frac{2k-4k^2}{2n}+\frac{i}{n}\frac{\Delta_r+4k-1}{2} 
= \frac{r_1}{2}+\frac{2k-4k^2+4ki}{2n}+\frac{i(\Delta_r -1)}{2n}
\end{align*}
To maximize the above as a function of $k$ it suffices to maximize $2k-4k^2+4ik$. Using the first order approach and equalizing the derivative $2-8k+4i$ to zero, we get that the maximum is at $k=\frac{i}{2}+\frac{1}{4}$. Since $k$ can only be an integer we take the closest integer to $\frac{i}{2}+\frac{1}{4}$. If $i$ is even, $\frac{i}{2}$ is the closest integer to the maximum. If $i$ is odd, the closest integer to the maximum is at $k=\frac{i+1}{2}$.

Then, we need to show that the agent's utility (as a function of $c$) coincides with the  breakpoints of the allocation rule as specified in Claim~\ref{cla:menu-allocation-rule}.
First, we show that the expected payment for action $i$ is $\frac{r_1}{2}+\frac{i\Delta_r}{2n}+\frac{i^2}{2n}$ as follows. If $i=2k$,
\begin{align*}
\frac{r_1}{2}+\frac{2k-4k^2}{2n}+\frac{i}{n}\frac{\Delta_r+4k-1}{2}=
\frac{r_1}{2}+\frac{i-i^2}{2n}+\frac{i}{n}\frac{\Delta_r+2i-1}{2}=
\frac{r_1}{2}+\frac{i\Delta_r}{2n}+\frac{i^2}{2n}.
\end{align*}
If $i=2k-1$,
\begin{align*}
& \frac{r_1}{2}+\frac{2k-4k^2}{2n}+\frac{i}{n}\frac{\Delta_r+4k-1}{2}\\
&= \frac{r_1}{2}+\frac{i+1-(i+1)^2}{2n}+\frac{i}{n}\frac{\Delta_r+2(i+1)-1}{2}\\
&= \frac{r_1}{2}+\frac{i-2i-i^2}{2n}+\frac{i}{n}\frac{\Delta_r+2i+1}{2}\\
&= \frac{r_1}{2}+\frac{-i-i^2}{2n}+\frac{i\Delta_r +2i^2+i}{2n}
= \frac{r_1}{2}+\frac{i\Delta_r}{2n}+\frac{i^2}{2n}.
\end{align*}
Then, the utility of the agent when taking action $i+1$ is higher than that of action~$i$ if and only if
\begin{eqnarray*}
\frac{r_1}{2}+\frac{(i+1)\Delta_r}{2n}+\frac{(i+1)^2}{2n}-\frac{(i+1)^2}{n} c &\geq& \frac{r_1}{2}+\frac{i\Delta_r}{2n}+\frac{i^2}{2n}-\frac{i^2}{n} c.
\end{eqnarray*}
That is, if and only if $\frac{\Delta_r}{2n}+\frac{2i+1}{2n} \geq \frac{2i+1}{n} c$, or equivalently, $\frac{\Delta_r}{4i+2}+\frac{1}{2} \geq c.$ This coincides with the allocation rule. Action $1$ is better than $0$ if and only if $\frac{r_1}{2}+\frac{1\Delta_r}{2n}+\frac{1^2}{2n}-\frac{1}{n}c\geq 0$. That is, $\frac{n r_1+\Delta_r+1}{2}\geq c$, again as required by the allocation rule.
\end{proof}

\begin{claim}
\label{cla:menu-LB}
The virtual welfare maximizing allocation rule is not implementable by a contract with menu-size complexity less than $\frac{n-1}{2}$.
\end{claim}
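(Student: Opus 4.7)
The plan is to invoke the payment identity of Lemma~\ref{lem:primal characterization} as a rigid constraint on every profile in any menu implementing~$x^*$, then exploit the specific distributional structure in Example~\ref{ex:menu}. For every type $c$ with $x^*(c)=i$, the expected payment the agent receives from his recommended action is uniquely determined by the allocation rule. From the computation already carried out inside the proof of Claim~\ref{cla:menu-UB}, this value equals
\[
T^{*}_{i} \;=\; \tfrac{r_{1}}{2} \;+\; \tfrac{i\,\Delta_{r}}{2n} \;+\; \tfrac{i^{2}}{2n},\qquad i\in\{1,\dots,n\},
\]
which is a \emph{strictly convex quadratic} in the action index~$i$.

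Next I would fix an arbitrary payment profile $t=(t_{0},t_{1},t_{2})$ offered by the menu and compute the expected payment it induces for positive actions. Since $F_{i,0}=0$ and $F_{i}=(0,\,1-i/n,\,i/n)$ for all $i\ge 1$ in Example~\ref{ex:menu}, we have
\[
T_{i}(t) \;=\; (1-i/n)\,t_{1} + (i/n)\,t_{2} \;=\; t_{1} + (i/n)(t_{2}-t_{1}),
\]
which is \emph{linear} in~$i$. Let $S\subseteq\{1,\dots,n\}$ be the set of positive actions for which at least one type with $x^{*}(c)=i\in S$ is offered the profile~$t$. By the payment identity, $T_{i}(t)=T^{*}_{i}$ for every $i\in S$. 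A linear function of $i$ agrees with a strictly convex quadratic at no more than two integer values, so $|S|\le 2$. Action~$0$ cannot relax this bound because it constrains only the coordinate $t_{0}$, which does not appear in $T_{i}(t)$ for $i\ge 1$.

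Finally I would count. Claim~\ref{cla:menu-allocation-rule} shows that, under the parameter assumption $r_{1}+2(n-1)+1<r_{2}$ built into Example~\ref{ex:menu}, every positive action $i\in\{1,\dots,n\}$ is allocated by $x^{*}$ on an interval of positive measure. Hence any menu implementing $x^{*}$ must, across its profiles, collectively cover all $n$ positive actions; since each profile covers at most two of them, the menu size is at least $\lceil n/2\rceil\ge(n-1)/2$, proving the claim. The only subtle step is the linear-versus-quadratic argument in the previous paragraph; once that is in place, the rest is pure counting. A minor technicality worth double-checking is that the payment identity applies to \emph{every} profile actually used in the menu representation (not just those arising from a direct-revelation implementation), which follows from the taxation principle together with Lemma~\ref{lem:primal characterization} applied to the induced direct-revelation contract.
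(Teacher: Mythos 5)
Your proof is correct and is genuinely different from (and cleaner than) the paper's. The paper argues by contradiction: if fewer than $\tfrac{n-1}{2}$ profiles are offered, pigeonhole forces one profile to be best for at least three positive actions, which the paper shows must be consecutive, and then it derives a contradiction by comparing the ratio of consecutive breakpoints induced by that profile ($\tfrac{2(i+1)+1}{2i+1}$) with the ratio of consecutive breakpoints of $x^*$ ($\tfrac{2(i+1)+1}{2i+1}\cdot\tfrac{\Delta_r+2i+1}{\Delta_r+2i+3}$), which disagree. You instead go directly through the payment identity: the required expected payment $T^*_i$ is a strictly convex quadratic in $i$ (coefficient $\tfrac{1}{2n}>0$ on $i^2$), while any fixed profile $t=(t_0,t_1,t_2)$ yields an expected payment $t_1+(i/n)(t_2-t_1)$ that is affine in $i$, and a line meets a strictly convex parabola in at most two points. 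Since all $n$ positive actions are allocated on intervals of positive measure (\cref{cla:menu-allocation-rule}), this immediately gives menu size $\ge\lceil n/2\rceil\ge\tfrac{n-1}{2}$. Your argument replaces the paper's consecutive-breakpoint bookkeeping with a one-line degree-counting observation, and in fact yields the slightly stronger bound $\lceil n/2\rceil$. The one point to tighten: the payment identity of \cref{lem:primal characterization} fixes $T^c_{x(c)}$ only up to the additive constant $u(\bar c)=T^{\bar c}_{x(\bar c)}-\bar c\,\gamma_{x(\bar c)}$, so the correct statement is $T_i(t)=T^*_i+K$ for some implementation-dependent constant $K\ge 0$ rather than $T_i(t)=T^*_i$ exactly; but adding a constant leaves the right-hand side strictly convex in $i$, so your linear-versus-quadratic step is unaffected.
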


\begin{proof}
Suppose towards a contradiction that there exists a contract $(x,t)$ with menu-size complexity $<\frac{n-1}{2}$. Denote the payment schemes in $t$ as $t^1,t^2,...,t^k$ for $k<\frac{n-1}{2}$. Note that every for action chosen by the agent, the chosen payment scheme is the one guaranteeing the highest expected payoff for that action. There exists a payment scheme $t^{\ell}$ that gives the highest expected payment (among all payment schemes) to at least $3$ non-zero actions. Note that theses $3$ non-zero actions are successive. To see this, take $2$ non-successive action that $t^{\ell}$ gives the highest expected payment for, $0<i,i+j\in [n]$ for $j>1$. We show that $t^{\ell}$ gives the highest expected payment (among all contracts) for $i+1$ as well. We know that, 
\begin{eqnarray}
\frac{i}{n} t^{\ell}_{2}+\frac{n-i}{n} t^{\ell}_1 \geq \frac{i}{n} t^{\ell'}_2+\frac{n-i}{n} t^{\ell'}_1 & \forall \ell'\in [k] \label{eq:i}\\
\frac{i+j}{n} t^{\ell}_{2}+\frac{n-i-j}{n} t^{\ell}_{1} \geq \frac{i+j}{n} t^{\ell'}_2+\frac{n-i-j}{n} t^{\ell'}_1 & \forall \ell'\in [k] \label{eq:i_j}
\end{eqnarray}
If we multiply \eqref{eq:i} by $1-\frac{1}{j}$, and \eqref{eq:i_j} by $\frac{1}{j}$ and sum both equations we get.
\begin{eqnarray*}
(1-\frac{1}{j})(\frac{i}{n} t^{\ell}_{2}+\frac{n-i}{n} t^{\ell}_1)+\frac{1}{j}(\frac{i+j}{n} t^{\ell}_{2}+\frac{n-i-j}{n} t^{\ell}_{1}) &\geq& \\
(1-\frac{1}{j})(\frac{i}{n} t^{\ell'}_{2}+\frac{n-i}{n} t^{\ell'}_1)+\frac{1}{j}(\frac{i+j}{n} t^{\ell'}_{2}+\frac{n-i-j}{n} t^{\ell'}_{1}) && \forall \ell'\in [k].
\end{eqnarray*}
Reorganizing the above we have,
\begin{eqnarray*}
(\frac{i}{n} t^{\ell}_{2}+\frac{n-i}{n} t^{\ell}_1)+(\frac{1}{n} t^{\ell}_{2}-\frac{1}{n} t^{\ell}_{1}) \geq (\frac{i}{n} t^{\ell'}_{2}+\frac{n-i}{n} t^{\ell'}_1)+(\frac{1}{n} t^{\ell'}_{2}-\frac{1}{n} t^{\ell'}_{1}) &  \forall \ell'\in [k].
\end{eqnarray*}
Which implies that action contract $t^{\ell}$ gives the highest expected payment for action $i+1$ as follows.
\begin{eqnarray*}
(\frac{i+1}{n} t^{\ell}_{2}+\frac{n-i-1}{n} t^{\ell}_1) \geq (\frac{i+1}{n} t^{\ell'}_{2}+\frac{n-i-1}{n} t^{\ell'}_1) &  \forall \ell'\in [k].
\end{eqnarray*}
After establishing that $t^{\ell}$ incentivizes three successive actions $i,i+1,i+2$, we show that $x\neq x^*$ by showing that the breakpoints ratio of $x$ is not compatible with $x^*$'s breakpoint ratio. The first breakpoint (between action $i$ and action $i+1$ is at $\frac{t^{\ell}_2-t^{\ell}_1}{2i+1}$. To see this, equalize the utility of the agent from $t^{\ell}$ given action $i$ and action $i+1$ as follows. 
\begin{eqnarray*} 
\frac{i}{n} t_2 + (1-\frac{i}{n}) t_1 -\gamma_i c &=& \frac{i+1}{n} t_2 + (1-\frac{i+1}{n}) t_1 -\gamma_{i+1} c
\end{eqnarray*}
Reorganizing, and replacing $\gamma_i=\frac{i^2}{n}$ we have $(2i+1 )c = (t_2-t_1)$ at the intersection point.
We get the the ratio between the higher and the lower breakpoint ($\frac{t^{\ell}_2-t^{\ell}_1}{2i+1}$,$\frac{t^{\ell}_2-t^{\ell}_1}{2(i+1)+1}$) is $\frac{2(i+1)+1}{2i+1}$, whereas the allocation rule's breakpoints ratio is $\frac{\frac{\Delta_r }{4i+2}+\frac{1}{2}}{\frac{\Delta_r }{4(i+1)+2}+\frac{1}{2}}=\frac{2(i+1)+1}{2i+1}\frac{\Delta_r +{2i+1}}{\Delta_r +2(i+1)+1}\neq \frac{2(i+1)+1}{2i+1}$.
\end{proof}

\subsection{Monotone Virtual Welfare Maximization is not Sufficient}
\label{sub:failure_of_implementation}

Given Corollary~\ref{cor:upper-bound}, one may be tempted to optimize revenue by maximizing virtual welfare. Indeed one of the most appealing results of \citet{myerson1981optimal} is that optimizing revenue can be reduced to maximizing virtual welfare. In this section we show that virtual welfare maximization can be unimplementable when moral hazard is involved.\footnote{This result answers an open question in \cite{AlonDT21}.}
The proof relies on \cref{ex:appx-non-implement}, showing there exists no payment scheme for this allocation rule that satisfies the curvature constraints of Lemma~\ref{lem:primal characterization}.

\begin{proposition}
\label{pro:non-implement}
The virtual welfare maximizing allocation rule $x^*$ is not always implementable, even when monotone. 
\end{proposition}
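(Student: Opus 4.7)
The plan is to prove \cref{pro:non-implement} by exhibiting a concrete principal--agent instance (Example \ref{ex:appx-non-implement}) whose virtual welfare maximizer $\alloc^*(c)\in\argmax_i\bigl(R_i-\gamma_i\ironed(c)\bigr)$ is monotone yet fails the integral monotonicity condition of \cref{lem:primal characterization}, so no payment rule $t$ can make $(\alloc^*,t)$ IC. The driving idea is to calibrate the rewards $R_i$, efforts $\gamma_i$, outcome distributions $F_i$, and prior $G$ so that $\alloc^*$ is a decreasing step function that \emph{skips} at least one intermediate action: there is a breakpoint $c_0$ at which $\alloc^*$ jumps directly from a high-effort action $i^{**}$ to a low-effort action $i^{*}$ with $i^{*}<i^{**}-1$, never selecting some action $i'\in(i^{*},i^{**})$ anywhere on the support of $G$.

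First I would verify monotonicity of $\alloc^*$: because $\ironed$ is non-decreasing and the virtual welfare $R_i-\gamma_i\ironed(c)$ is supermodular in $(i,-\ironed(c))$ given the orderings $\gamma_0<\cdots<\gamma_n$ and $R_0<\cdots<R_n$, its argmax is non-increasing in $c$, so $\alloc^*$ is a monotone step function, satisfying the necessary (but, per \cref{lem:primal characterization}, not sufficient) condition for implementability. Next, assume for contradiction that some payment rule $t$ implements $\alloc^*$. Pick a type $c$ flanking $c_0$ on the appropriate side, and consider the induced best response $z\mapsto i^*(t^c,z)=\argmax_i(T^c_i-\gamma_i z)$. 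Geometrically this is the argmax of a family of affine functions in $z$, so the actions selected as $z$ varies are precisely those whose points $(\gamma_i,T^c_i)$ lie on the upper concave hull of the point cloud $\{(\gamma_j,T^c_j)\}_j$ in the $(\gamma,T)$-plane. Since $T^c_i=\sum_j F_{i,j}\,t^c_j$ is linear in $t^c$, whether the skipped action $i'$ can be excluded from this hull reduces to a linear feasibility problem in $t^c\ge 0$.

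The main obstacle -- and the essence of the example -- is to choose the outcome distributions $F_i$ so that this LP is infeasible: for every non-negative payment vector $t^c$ consistent with $i^*(t^c,c)=\alloc^*(c)$, the point $(\gamma_{i'},T^c_{i'})$ necessarily lies on or above the chord joining $(\gamma_{i^{*}},T^c_{i^{*}})$ and $(\gamma_{i^{**}},T^c_{i^{**}})$, so $i'$ is unavoidably on the upper concave hull and hence $i^*(t^c,z)=i'$ on a sub-interval $[c_0',c_1']$ of positive length. This can be engineered by letting $F_{i'}$ stochastically dominate the appropriate convex combination of $F_{i^{*}}$ and $F_{i^{**}}$, which pins $T^c_{i'}\ge \lambda T^c_{i^{*}}+(1-\lambda) T^c_{i^{**}}$ for every limited-liability $t^c\ge 0$. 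Once $i'$ is forced onto the envelope while $\alloc^*$ never selects it, the integrand $\gamma_{\alloc^*(z)}-\gamma_{i^*(t^c,z)}$ takes the forbidden sign on an interval of positive measure, and a suitable choice of $c'$ yields a violation of $\int_c^{c'}\bigl(\gamma_{\alloc^*(z)}-\gamma_{i^*(t^c,z)}\bigr)\dd z\le 0$, contradicting \cref{lem:primal characterization}. The delicate step is the simultaneous tuning of $F_i$, $\gamma_i$, $R_i$ and $G$ so that $\alloc^*$ genuinely skips $i'$ in virtual-welfare terms while $F_{i'}$ still forces $i'$ onto every feasible upper hull; Example \ref{ex:appx-non-implement} is designed so that a small finite check certifies both properties.
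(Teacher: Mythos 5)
Your high-level strategy is the right one: exhibit a concrete instance whose virtual-welfare maximizer is monotone yet fails the integral-monotonicity test of \cref{lem:primal characterization}. But the mechanism you conjecture drives the example is not the one the paper uses, and a key step in your version is technically unsound.

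First, the paper's allocation rule in \cref{ex:appx-non-implement} does \emph{not} skip any action. \cref{clm:alloc_virtual_max_example} shows $x^*$ takes values $3,2,1,0$ on the successive intervals $[0,1),[1,4),[4,9),[9,10)$, so there is no ``gap'' action $i'$ that $x^*$ omits. The violation arises instead from a constraint-propagation effect: to have $i^*(t^4,4)=2$, the payment vector must satisfy $t^4_2-t^4_1\ge 16$ (so that type $4$ prefers action $2$ to action $1$), and any $t^4$ with such a large spread makes action $3$ strictly optimal for types $z$ below some $c'\ge 3.2$. Since $x^*(z)=2$ on $(1,4)$ but $i^*(t^4,z)=3$ on $(1,c')$, the integrand $\gamma_{x^*(z)}-\gamma_{i^*(t^4,z)}$ is strictly negative on a set of positive measure, and taking $c'=1$ in \cref{eq:curvature} gives the violation. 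The mismatch is an \emph{overshoot} of the best-response curve relative to the allocation on types below the reporting type, with every action appearing somewhere in $x^*$ — not a hull-forced phantom action.

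Second, the stochastic-dominance step in your plan is flawed. You claim that if $F_{i'}$ first-order stochastically dominates $\lambda F_{i^*}+(1-\lambda)F_{i^{**}}$, then $T^c_{i'}\ge\lambda T^c_{i^*}+(1-\lambda)T^c_{i^{**}}$ for \emph{every} limited-liability $t^c\ge 0$. That conclusion holds only for payment vectors that are monotone in the outcome index; the model allows arbitrary $t^c\ge 0$, and a payment vector that loads up on a low-reward outcome can reverse the inequality. The only distributional condition that would pin the inequality for all non-negative $t^c$ is pointwise dominance $F_{i',j}\ge \lambda F_{i^*,j}+(1-\lambda)F_{i^{**},j}$ for all $j$, but since all three are probability distributions this forces equality $F_{i'}=\lambda F_{i^*}+(1-\lambda)F_{i^{**}}$. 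That equality case is in fact usable (with $\gamma_{i'}$ strictly below the corresponding convex combination of $\gamma_{i^*},\gamma_{i^{**}}$, the point $(\gamma_{i'},T^c_{i'})$ lies strictly above the chord), but your proposal as written does not isolate it and instead appeals to FOSD, which is insufficient. In short: the framework is right but you have guessed the wrong mechanism for the counterexample, and your concrete sufficient condition would not certify what you need. To close the gap you should either (a) redo the hull argument with the exact mixture condition and verify it against the $i^*(t^c,c)=x^*(c)$ constraint, or (b) follow the paper's route of propagating the incentive constraint at the breakpoint type and showing it forces overshooting on a lower interval.
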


\begin{example}\label{ex:appx-non-implement}
Consider the following setting with $4$ actions and $3$ outcomes. The costs in units-of-effort per action are $\gamma_0=0,\gamma_1=1,\gamma_2=3,\gamma_3=5.5$, and the rewards are $r_0=0,r_1=100,r_2=300$. The distributions over outcomes per action are $F_{0}=(1,0,0)$, $F_{1} =(0,1,0)$, $F_{2} =(0,0.5,0.5)$, $F_{3}=(0,0,1)$.
The type set is $C=[0,10]$. Taking $\delta = \frac{20}{23}$, the type distribution over $C$ is as follows.
\begin{eqnarray*}
g(c) = \begin{cases}
    \delta & c\in [0,1],\\
    0.025 \delta & c\in (1,4],\\
    0.0125 \delta & c\in (4,10],
    \end{cases} && G(c) = \begin{cases}
    \delta c & c\in [0,1],\\
    0.025  \delta (c-1) + \delta  & c\in (1,4],\\
    0.0125  \delta (c-4) + 1.075 \delta & c\in (4,10].
    \end{cases}
\end{eqnarray*}
\end{example}


\noindent Before proving Proposition~\ref{pro:non-implement} we establish the following claim.

\begin{claim}\label{clm:alloc_virtual_max_example}
Consider Example~\ref{ex:appx-non-implement}. The virtual welfare maximizing allocation rule is given by
\begin{eqnarray*}
x^*(c) = \arg\max_{i\in [n]}\{R_i-\gamma_i \varphi(c)\}= \begin{cases}
    3 & c\in [0,1),\\
    2 & c\in [1,4),\\
    1  & c\in [4,9),\\
    0 & c\in [9,10).
    \end{cases}
\end{eqnarray*}
\end{claim}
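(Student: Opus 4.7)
}
The strategy is a direct case analysis: compute the virtual cost $\varphi(c)=c+G(c)/g(c)$ on each of the three density pieces, compute the expected rewards $R_i$ of each action, and then on each piece determine which action maximizes $R_i-\gamma_i\varphi(c)$ as a function of $c$. Because the density is piecewise constant and the rewards/costs are linear in $\varphi(c)$, everything reduces to comparing affine functions of $c$ on a few subintervals.

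First I would compute $\varphi$ on the three pieces of $G$. On $[0,1]$ we have $G(c)/g(c)=c$, so $\varphi(c)=2c$. On $(1,4]$ we have $G(c)/g(c)=(c-1)+1/0.025=(c-1)+40$, so $\varphi(c)=2c+39$. On $(4,10]$ we have $G(c)/g(c)=(c-4)+1.075/0.0125=(c-4)+86$, so $\varphi(c)=2c+82$. Note that $\varphi$ jumps upward at $c=1$ and at $c=4$; this is what makes the maximizer shift discontinuously.

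Next I would compute the expected rewards: $R_0=0$, $R_1=100$, $R_2=0.5\cdot 100+0.5\cdot 300=200$, $R_3=300$. For each piece of $G$, plug $\varphi(c)$ into $R_i-\gamma_i\varphi(c)$ and compare the resulting affine functions. On $[0,1)$ all four values are nonnegative and action~$3$ dominates because $300-5.5\varphi(c)$ exceeds the others whenever $\varphi(c)$ is small (check at $c=1^-$, where $\varphi=2$ gives values $0,98,188,289$). On $(1,4]$, where $\varphi(c)\in(41,47]$, action~$2$ beats action~$1$ iff $83-6c>61-2c$, i.e.\ $c<5.5$, and beats action~$3$ iff $83-6c>85.5-11c$, i.e.\ $c>0.5$; both inequalities hold throughout, so action~$2$ wins. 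On $(4,10]$, where $\varphi(c)\in(90,102]$, action~$2$ and action~$3$ have negative value everywhere, while action~$1$ has value $18-2c$, which is positive on $(4,9)$ and nonpositive on $[9,10]$, so action~$1$ wins on $(4,9)$ and action~$0$ wins on $[9,10]$.

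Finally I would collect the breakpoints and check the boundary conventions. At $c=1$ the jump in $\varphi$ from $2$ to $41$ switches the maximizer from $3$ to $2$; at $c=4$ the jump from $47$ to $90$ switches it from $2$ to $1$; and at $c=9$ action~$1$ ties action~$0$, which under the standing tie-breaking rule in favor of the principal resolves to action~$0$ since it yields the same revenue with no payment obligations. Putting these four regimes together gives exactly the allocation rule stated in the claim. There is no genuine obstacle here beyond careful bookkeeping of the piecewise $\varphi$ and of the jump discontinuities at $c=1$ and $c=4$; the core point the example is designed to expose will only appear in the subsequent step, when one tries to implement this $x^*$ and finds no payment scheme satisfies the integral monotonicity condition of \cref{lem:primal characterization}.
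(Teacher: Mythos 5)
Your proposal is correct and follows essentially the same route as the paper: compute the piecewise-affine virtual cost $\varphi$ on each density segment, plug into $R_i - \gamma_i\varphi(c)$, and compare affine functions of $c$ region by region. One small arithmetic slip does not affect the conclusion (at $c=1^-$ with $\varphi=2$, the virtual welfare of action~$2$ is $200-6=194$, not $188$); you also, correctly, compute $R_3-\gamma_3\varphi(c)=85.5-11c$ on $[1,4)$, whereas the paper's displayed $-14.5-11c$ appears to be a typo, though neither value changes the comparison since action~$2$ dominates on that interval either way.
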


\begin{proof}
The virtual cost $\varphi(c)=c+\frac{G(c)}{g(c)}$ is as follows:
\begin{eqnarray*}
\varphi(c) = \begin{cases}
    c+c & c\in [0,1),\\
    c+c - 1 + \frac{1}{0.025} & c\in [1,4),\\
    c+c - 4 + \frac{ 1.075}{0.0125}  & c\in [4,10),
    \end{cases} && = \begin{cases}
    2c & c\in [0,1),\\
    2c + 39 & c\in [1,4),\\
    2c + 82  & c\in [4,10).
    \end{cases}
\end{eqnarray*}
The virtual welfare of every action is thus given by:
\begin{eqnarray*}
R_3-\gamma_3 \varphi(c) &=& \begin{cases}
    300-11c & c\in [0,1),\\
    -14.5 -11c & c\in [1,4),\\
    -151 -11c & c\in [4,10),
    \end{cases}   \\
    R_2-\gamma_2 \varphi(c)&=& \begin{cases}
    200-6c & c\in [0,1),\\
    83-6c & c\in [1,4),\\
    -46-6c & c\in [4,10),
    \end{cases} \\
 R_1-\gamma_1 \varphi(c) &=& \begin{cases}
    100-2c & c\in [0,1),\\
     61-2c & c\in [1,4),\\
     18-2c & c\in [4,10),
    \end{cases} 
\end{eqnarray*}
and $R_0-\gamma_0 \varphi(c)=0$. Therefore, the virtual welfare maximizing allocation rule is given by \begin{eqnarray*}
x^*(c) = \arg\max_{i\in [n]}\{R_i-\gamma_i \varphi(c)\}= \begin{cases}
    3 & c\in [0,1),\\
    2 & c\in [1,4),\\
    1  & c\in [4,9),\\
    0 & c\in [9,10),
    \end{cases}
\end{eqnarray*}
as claimed.
\end{proof}

\begin{proof}[Proof of \cref{pro:non-implement}]
The unique virtual welfare maximizing allocation rule $x^*$ in \cref{ex:appx-non-implement} assigns action $3$ for $c \in [0,1]$, action $2$ for $c \in (1,4]$, action $1$ for $c \in (4,9]$, and action $0$ for $c \in (9,10]$ (see \cref{clm:alloc_virtual_max_example}).
This allocation rule is monotone and can be verified using the definitions of virtual cost $\varphi(c) = c + G(c)/g(c)$ and virtual welfare $R_i - \gamma_i \varphi(c)$.
By \Cref{eq:curvature} of \cref{lem:primal characterization}, to establish that this rule is not implementable it suffices to show the following: For type $c=4$ and every payment scheme~$t^4\in \mathbb{R}^{m+1}$ that satisfies $x^*(4)=i^*(t^4,4)$, it holds that $\int_4^{1}\gamma_{x(z)}-\gamma_{i^*(t^4,z)}\dd z > 0$. I.e., any candidate payment scheme~$t^4$ must violate integral monotonicity, completing the proof.

To establish this, note that since $x^*(4)=i^*(t^4,4)$, action $x^*(4)=2$ gives higher utility for type $c=4$ than the utility from action~$1$. This implies that $0.5({t^4_1+t_2^4})-12 \geq t^4_1-4$. That is, $t^4_2-t^4_1 \geq 16.$ In this case, it can be verified that for types $c \leq c'$ where $c'$ is at least $3.2$ (the exact value of $c'$ depends on the precise payment scheme $t^4$), the utility from $t^4$ is maximized when taking action $3$, i.e, $i^*(t^4,c)=3$. 
Further, $i^*(t^4,c)=2$ for $[c',4]$. We thus have that $\int^{1}_{4} \gamma_{x^*(z)}-\gamma_{i^*(t^4,z)}\dd z$ is equal to
$\int^{c'}_1 \gamma_{3}-\gamma_{2}\dd z+ \int^{4}_{c'} \gamma_{2}-\gamma_{2}\dd z$, which is strictly positive. 
\end{proof}

We end this section by showing that the trouble in Example~\ref{ex:appx-non-implement} stems from the agent's ability to double-deviate --- to both an untruthful type and an unrecommended action.
In particular we show that types $c=1$ and $c=4-\epsilon\to 4$ cannot be jointly incentivized by any contract to take action~$2$. 
Suppose towards a contradiction that there exist such payment schemes $t^1=(t^1_0,t^1_1,t^1_2)$ and $t^4=(t_0^4,t_1^4,t_2^4)$ for types $c=1$ and $c=4-\epsilon$, respectively. By IC, the expected payment for action $2$ in both contracts is equal (otherwise both agents prefer one of these payment profiles). That is, $0.5(t^1_2+t^1_1)=0.5(t^4_2+t^4_1)$. Let us assume w.l.o.g.~that $t^1_2-t^1_1\geq t^4_2-t^4_1$. By IC, an agent of type $c=4$ is better off by reporting truthfully and taking action $2$, over taking action $1$. Therefore, $0.5({t^4_1+t_2^4})-12 \geq t^4_1-4$. That is, $t^4_2-t^4_1 \geq 16.$ Similarly, 
the agent of type $c=1$, is better off by reporting truthfully, and taking action $2$, over taking action $3$. This implies that $0.5(t^1_1+t^1_2)-3\geq t^1_2-5.5$. That is, $t^1_2-t^1_1\leq 5$. 
This implies that $5 \geq t^1_2-t^1_1\geq t^4_2-t^4_1 \geq 16$, a contradiction.

\end{document}